\theoremstyle{plain}
\newtheorem{thm}{\protect\theoremname}
\theoremstyle{definition}
\newtheorem{example}[thm]{\protect\examplename}
\theoremstyle{plain}
\newtheorem{cor}[thm]{\protect\corollaryname}
\theoremstyle{plain}
\newtheorem{prop}[thm]{\protect\propositionname}
\providecommand{\corollaryname}{Corollary}
\providecommand{\examplename}{Example}
\providecommand{\propositionname}{Proposition}
\providecommand{\theoremname}{Theorem}
\providecommand{\corollaryname}{Corollary}
\providecommand{\examplename}{Example}
\providecommand{\propositionname}{Proposition}
\providecommand{\theoremname}{Theorem}
\begin{document}
\title{Network Coding with Random Packet-Index Assignment for Data Collection Networks}
\author{\IEEEauthorblockN{Cedric Adjih\IEEEauthorrefmark{1}, Michel Kieffer\IEEEauthorrefmark{2},
and Claudio Greco\IEEEauthorrefmark{1}\IEEEauthorrefmark{2}\\
 } \IEEEauthorblockA{\IEEEauthorrefmark{1}INRIA Saclay, INFINE Project Team, 1 rue Honoré
d'Estienne d'Orves, 91120 Palaiseau, France\\
} \IEEEauthorblockA{\IEEEauthorrefmark{2}L2S, CNRS-Supélec-Univ Paris-Sud, 3 rue Joliot-Curie,
91192 Gif-sur-Yvette, France\\
 } }
\maketitle
\begin{abstract}
This paper considers data collection using a network of uncoordinated,
heterogeneous, and possibly mobile devices. Using medium
and short-range radio technologies, multi-hop communication is required
to deliver data to some sink. While numerous techniques from managed
networks can be adapted, one of the most efficient (from the energy
and spectrum use perspective) is \emph{network coding} (NC). NC is
well suited to networks with mobility and unreliability, however,
practical NC requires a precise identification of individual packets
that have been mixed together. In a purely decentralized system, this
requires either conveying identifiers in headers along with coded
information as in COPE, or integrating a more complex protocol in
order to efficiently identify the sources (participants) and their
payloads.

A novel solution, Network Coding with Random Packet
Index Assignment (NeCoRPIA), is presented where packet indices in NC headers are
selected in a decentralized way, by choosing them randomly.
Traditional network decoding can be applied when all original
packets have different indices. When this is not the case, \emph{i.e.},
in case of collisions of indices, a specific decoding algorithm is
proposed. A theoretical analysis of its performance in terms of complexity
and decoding error probability is described. Simulation results match
well the theoretical results. 
Comparisons of NeCoRPIA header lengths
with those of a COPE-based NC protocol are also provided.
\end{abstract}

\begin{IEEEkeywords}
Network coding, random source index, mobile crowdsensing, broadcast,
data collection. 
\end{IEEEkeywords}

\section{Introduction\label{sec:Intro}}

In smart cities, smart factories, and more generally in modern Internet
of Things (IoT) systems, efficient data collection networks (DCN)
are growing in importance to gather various information related to
the environment and the human activity \cite{GomezSensors2015}. This
trend is accelerated by the development of a wide variety of objects
with advanced sensing and connectivity capabilities, which offer the
possibility to collect information of more diversified nature. This
also leads to the emergence of new DCN modalities such as participatory
sensing or crowdsensing applications \cite{GuoACM2015},
which contrast with classical DCN architectures, where nodes are owned
and fully controlled by one managing authority.

To transmit data in DCN, various communication protocols may be considered,
depending on the radio technology integrated in the sensing devices.
Long- (4G, NB-IoT, 5G, LoRa, SigFox), medium- (WiFi), and short-range
(ZigBee, Bluetooth) radio technologies and protocols have all their
advantages and shortcomings \cite{AlFuqahaCST2015}. 

One of the challenges with new DCN modalities is the distributed nature
of the network with unpredictable stability, high churn rate, and
node mobility. Here, one focuses on DCN scenarios where measurements
from some area are collected using medium- or short-range radio technologies,
which require multi-hop communication \cite{Doppler+2009,ShabbirACM2013,Liu2016,Jung2016,Ding2017,JiangTMC2018}.
For such scenarios, a suitable communication technique is Network
Coding (NC)~\cite{Weiwei2009,Keller+2013,PriorTSG2014,Paramanathan2015,NistorComLet2015,KwonTMC2018}. NC \cite{Ahlswede+2000+IEEE_J_IT}
is a transmission paradigm for multi-hop networks, in which, rather
than merely relaying packets, the intermediate nodes may mix the packets
that they receive. %
In wireless networks, the inherent broadcast capacity of the channel
improves the communication efficiency
\cite{Fragouli+2008,Katti+2008,Chachulski+2007}.

In practical NC (PNC) protocols, the mixing of the packets is achieved
through linear combinations. The corresponding coding coefficients
are generally included in each mixed packet as an \emph{encoding vector}~\cite{Chou+2003+A3C}
(see Figure~\ref{fig:encodingChou}). In this way, a coefficient
can be uniquely associated with the corresponding packet: the downside
is the requirement for a global indexing of all packets in the network.
When a single source generates and broadcasts network-coded packets
(intra-flow NC) as in \cite{Chachulski+2007,SundararajanPIEEE2011}, such indexing is easily performed, since the source
controls the initial NC headers of all packets. When the packets of
several sources are network coded together (inter-flow NC \cite{XieCN2015}), a global
packet indexing is difficult to perform in a purely distributed system,
where sources may appear, move, and disappear. This is also true when intra and inter-flow NC is performed \cite{SeferogluInfocom2011}.

Moreover, even in a static network of $N$ sensor nodes, each potentially
generating a single packet, assuming that all packets may be NC together
in some Galois field $\mathbb{F}_{q}$, headers of $N\log_{2}q$ bits would be required.
In practice, only a subset of packets are NC together, either due
to the topology of the network, or to constraints imposed on the way
packets are network-coded \cite{Jafari+2009+ISIT,Keller+2013,Feizi2014}. This
property allows NC headers to be compressed~\cite{Jafari+2009+ISIT,Thomos+2012+IEEE_J_COML,GligoroskiICC2015},
but does not avoid a global packet indexing. This indexing issue has
been considered in COPE \cite{Katti+2008}, where each packet to be
network coded is identified by a 32-bit hash of the IP source address
and IP sequence number. Such solution is efficient when few packets
are coded, but leads to large NC headers when the number of coded
packets increases.

The major contribution of this paper is Network Coding with Random Packet-Index
Assignment (NeCoRPIA), an alternative approach to COPE, addressing
global packet indexing, while keeping relatively
compact NC headers. With NeCoRPIA, packet indices in NC headers
are selected in a decentralized way, by simply choosing them randomly.
In dynamic networks, NeCoRPIA does not require an agreement among
the nodes  on a global packet indexing. In a DCN, when packets generated
by a small proportion of nodes have to be network coded, with NeCoRPIA,
NC headers of a length proportional to the number of nodes generating
data are obtained.

This paper reviews some related work in Section~\ref{sec:related-work}.
Section~\ref{sec:architecture} presents the architecture and protocol
of NeCoRPIA, our version of PNC dedicated to data collection. Section~\ref{sec:decoding}
describes network decoding techniques within the NeCoRPIA framework.
Section~\ref{sec:Complexity} analyzes the complexity of the proposed
approach. Section~\ref{sec:PerfEval} evaluates the performance of
NeCoRPIA in terms of decoding error. Section~\ref{sec:Comparison-of-packet}
provides simulations of a DCN to compare the average
header length of NeCoRPIA with plain NC and with a COPE-inspired approach.
Finally, Section~\ref{sec:conclusion} provides some conclusions
and future research directions.

\section{Related Work\label{sec:related-work}}

The generic problem of efficiently collecting information from multiple
sources to one or several sinks in multi-hop networks has been extensively
studied in the literature: for instance, for static deployments of
wireless sensor networks, a routing protocol such as the Collection
Tree Protocol (CTP) \cite{Gnawali+2009} is typical. Focusing on the
considered data collection applications \cite{GuoACM2015}, performance
can be improved by the use of NC, as exemplified by \cite{Fragouli+2008}
where NC is shown to outperform routing in a scenario with multiple
sources (all-to-all broadcast) or by \cite{Keller+2013,PriorTSG2014,NistorComLet2015}, where NC
is employed for data collection in a sensor network, see also \cite{ShabbirACM2013,Liu2016,Ding2017}. Combinations of NC and opportunistic routing have also been considered, see \cite{KafaeCST2018} and the references therein.

NeCoRPIA addresses two issues related to the use of NC in DCN, namely
the need for a global packet indexing and the overhead related to
NC headers.

An overview of NC header compression techniques has been provided
in \cite{GligoroskiICC2015}. For example, \cite{Heide+2011} explores
the trade-off between field size and generation (hence encoding vector)
size. In \cite{Jafari+2009+ISIT,LiCL2010}, encoding vectors are compressed
using parity-check matrices of channel codes yielding gains when only
limited subsets of linear combinations are possible. In \cite{Thomos+2012+IEEE_J_COML}
a special coding scheme permits to represent NC headers with one single
symbol at the expense of limited generation sizes. In Tunable Sparse NC (TNSC) 
\cite{Feizi2014,GarridoTCOM2017}, COPE-inspired headers are used and the sparsity level of NC headers 
is optimized by controlling the number of network-coded packets. Similar results may also 
be obtained by a dynamic control of the network topology as proposed in \cite{KwonTMC2018}.
In Fulcrum NC \cite{LucaniACCESS2018}, $g$ packets generated by a source are first encoded using 
a channel code over some extension field, \emph{e.g.}, $\mathbb{F}_{2^8}$, using a (systematic) Reed-Solomon code to generate $r$ redundancy packets. The $g+r$ packets are then network coded over $\mathbb{F}_2$. Powerful receivers may retrieve the $g$ original packet from $g$ independent linear combinations seen as mixtures over the extension field, whereas limited receivers require $g+r$ independent combinations to be decoded over $\mathbb{F}_2$. NC headers of $g+r$ bits are thus necessary. Nevertheless, this approach does not address the global packet indexing issue and is better suited to intra-flow NC.

Going further, the encoding vectors may be entirely removed from the
header of packets. Network decoding may then be seen as a \emph{source
separation} problem using only information about the content of packets.
Classical source separation aims at recovering source vectors defined
over a field (\emph{e.g.}, $\mathbb{R}$ or $\mathbb{C}$) from observations
linearly combined through an \emph{unknown} encoding matrix~\cite{Comon+2010}.
In $\mathbb{R}$, one classical approach to source separation is \emph{Independent
Component Analysis} (ICA)~\cite{Comon+2010}, 
which estimates the sources as the set of linear combinations that
minimizes the joint entropy and the mutual similarity among vectors.
In previous work, we have proposed different techniques that exploit
ICA over finite fields \cite{Yeredor+2011+IEEE_J_IT} 
in the context of NC. In~\cite{Nemoianu+2014+IEEE_J_MM}, network-coded
packets without encoding vectors are decoded using entropy minimization
jointly with channel encoding, while in~\cite{Greco+2014+NETCOD},
we exploit the redundancy introduced by communication protocols to
assist the receiver in decoding.
The price to pay is a significantly larger
decoding complexity compared to simple Gaussian elimination when considering
classical NC, which prohibits considering large generation sizes.

When packets
from several sensor nodes have to be network coded, employing the
format of NC vectors proposed by~\cite{Chou+2003+A3C}  requires
a coordination among nodes. This is necessary to avoid two nodes using
the same encoding vector when transmitting packets. Figure~\ref{fig:encodingChou}
(left) shows four sensor nodes, each generating a packet supplemented
by a NC header, which may be seen as a vector with entries in the
field $\mathbb{F}$ in which NC operations are performed. A different
base vector is chosen for each packet to ensure that packets can be
unmixed at receiver side. The coordination among nodes is required
to properly choose the number of entries in $\mathbb{F}$ that will
build up the NC header and the allocation of base vectors among nodes.
\begin{figure}
\centering \includegraphics[width=0.33\textwidth]{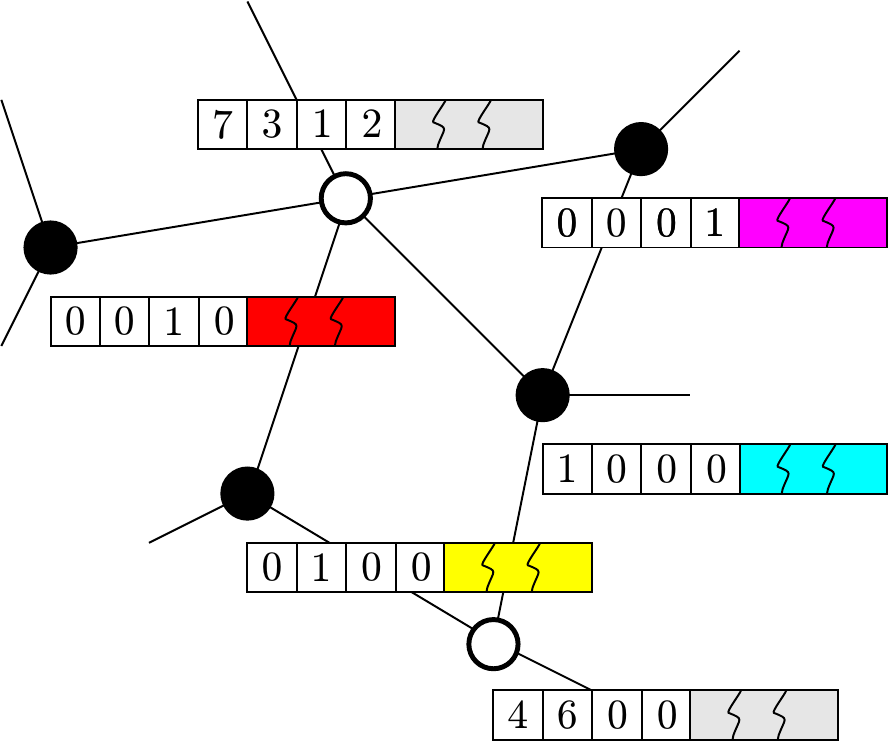}
\includegraphics[width=0.33\textwidth]{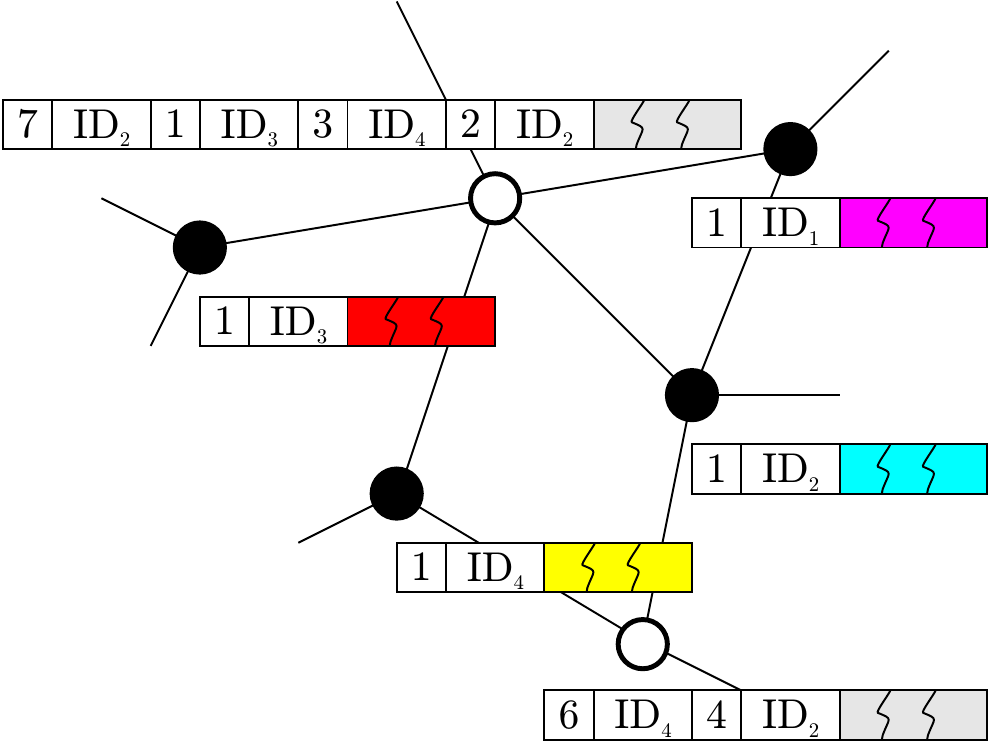}\includegraphics[width=0.33\textwidth]{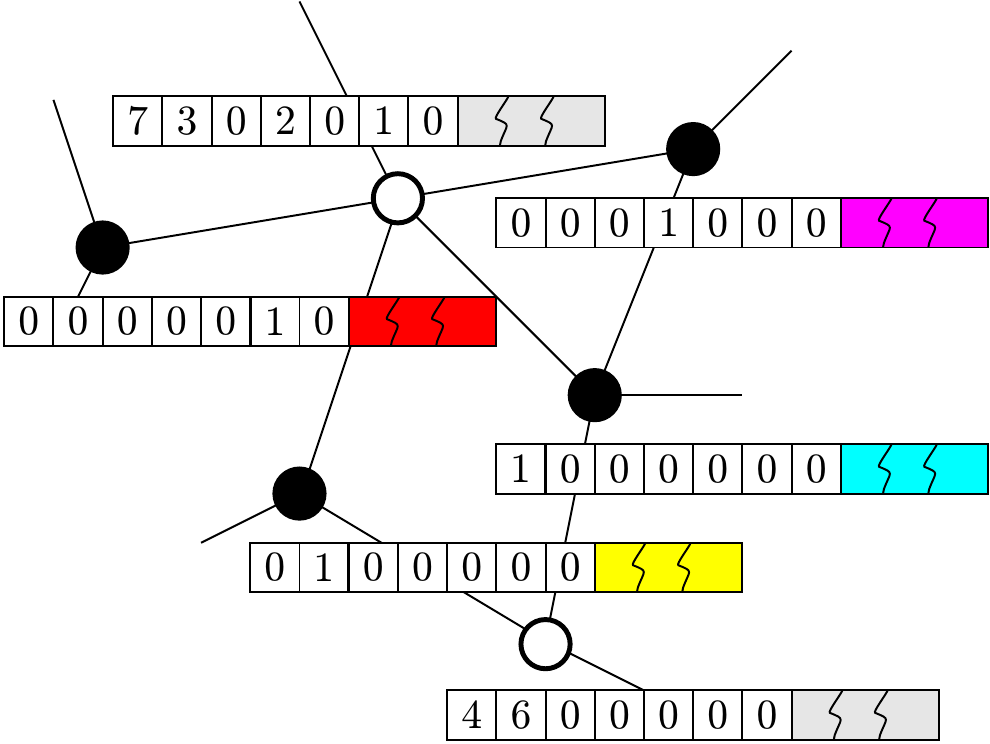}
\caption{NC headers as proposed by \cite{Chou+2003+A3C} (left), by \cite{Katti+2008}
(middle) and NeCoRPIA NC headers (with $n_{\text{v}}=1$) (right)
in the case of inter-session coding of packets generated by different
sensor nodes; nodes generating data packets are in black and nodes
relaying packets are in white\label{fig:encodingChou}}
\end{figure}
COPE \cite{Katti+2008} or TSNC \cite{Feizi2014} employ a header format including one identifier (32-bit
hash for COPE) for each of the coded packets, see Figure~\ref{fig:encodingChou}
(middle). When several packets are network coded together, their identifier
and the related NC coefficient are concatenated to form the NC header.
There is no more need for coordination among nodes and the size of
the NC header is commensurate with the number of mixed packets. Nevertheless,
the price to be paid is a large increase of the NC header length compared
to a coordinated approach such as that proposed in~\cite{Chou+2003+A3C}.

To the best of our knowledge, NeCoRPIA represents the only alternative
to COPE to perform a global packet indexing in a distributed way,
allowing packets generated by several uncoordinated sources to be
efficiently network coded. A preliminary version of NeCoRPIA was first
presented in \cite{GrecoICC2015}, where a simple random NC
vector was considered, see Figure~\ref{fig:encodingChou} (right).
Such random assignment is simple, fast, and fully distributed but
presents the possibility of \emph{collisions}, that is, two packets
being assigned to the same index by different nodes. Here, we extend
the idea in \cite{GrecoICC2015}, by considering a random assignment
of several indexes to each packet. This significantly reduces the
probability of collision compared to a single index, even when represented
on the same number of bits as several indexes. Since collisions cannot
be totally avoided, 
an algorithm to decode the received
packets in spite of possible collisions is also proposed.
We additionally detail how this approach is encompassed in a practical
data collection protocol.

\section{NeCoRPIA Architecture and Protocol\label{sec:architecture}}

\subsection{Objective of the data collection network\label{sub:formulation}}

A data collection architecture is considered with a set $\mathcal{S}=\{S_{1},\ldots,S_{N}\}$
of $N$ static nodes gathering measurements performed by possibly
mobile sensing nodes. Each node $S_{i}$, located in $\boldsymbol{\theta}_{i}$,
in some reference frame $\mathcal{F}$, acts as a data collection
point for all mobile nodes located in its assigned area $\mathcal{R}_{i}$
(\emph{e.g.}, its Voronoi cell) as in Figure~\ref{fig:scenario}.
The data consist, for instance, in a set of measurements of some physical
quantity $D(\boldsymbol{\xi})$, \emph{e.g.}, temperature, associated
with the vector $\boldsymbol{\xi}$, representing the experimental
conditions under which the measurements were taken (location, time
instant, regressor vector in case of model linear in some unknown
parameters). Note that for crowdsensing applications, the identity
of the node that took the measurements is often secondary, provided
that there are no node polluting the set of measurements with outliers.

\begin{figure}
\centering \includegraphics[width=0.5\columnwidth]{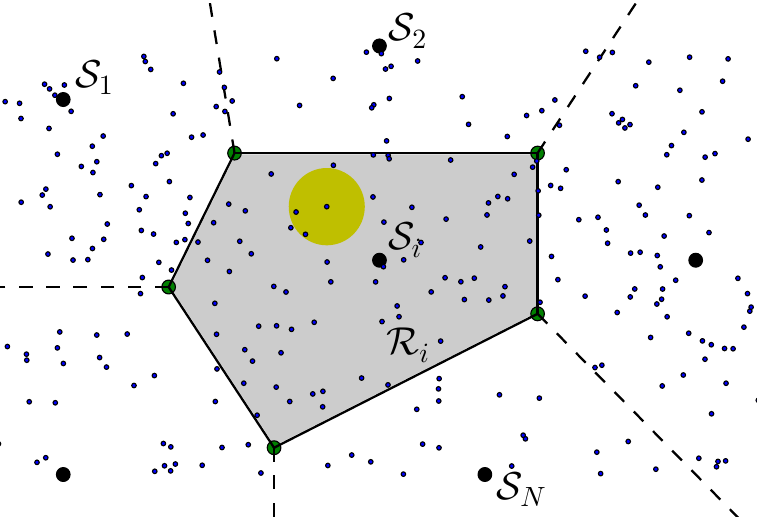}

\protect\protect\caption{Data collection scenario: the static node $S_{i}$ gathers measurements
from all mobile nodes (with limited communication range) in its assigned
data collection area $\mathcal{R}_{i}$ in grey\label{fig:scenario}}
\end{figure}
Typically, a mobile node measures periodically $D$ under the experimental
conditions $\boldsymbol{\xi}$. The data collection network objective
is to collect the tuples $(\boldsymbol{\xi},\mathbf{d})$, where $\mathbf{d}=D(\boldsymbol{\xi})$,
to the appropriate collection point $S_{i}$ (responsible for the
area $\mathcal{R}_{i}$ where the mobile node finds itself).

\subsection{General Architecture}

For fulfilling the objectives of the previous section, a communication
architecture based on NC is designed where information is propagated
to the closest sink through dissemination of coded packets.

\begin{figure}
\centering \includegraphics[width=1\columnwidth]{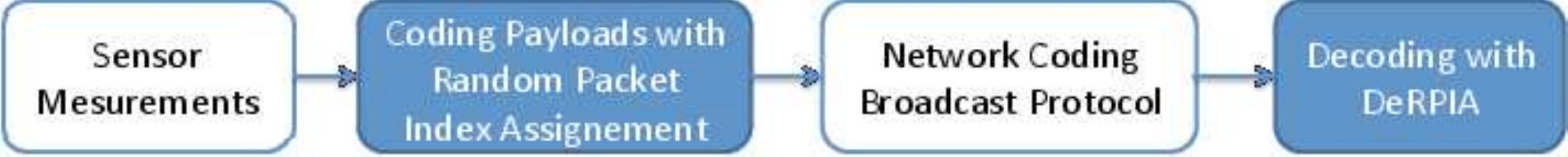}

\caption{Modules of NeCoRPIA architecture\label{fig:architecture-overview}}
\end{figure}
Figure~\ref{fig:architecture-overview} represents the modules involved
in NeCoRPIA. The sensing module is in charge of collecting local sensor
information with associated experimental conditions, \emph{i.e.},
the tuple $\left(\boldsymbol{\xi},\mathbf{d}\right)$. The encoding
module takes as input $\left(\boldsymbol{\xi},\mathbf{d}\right)$
and creates packets containing the data payload, and our specific
NeCoRPIA header. The NC protocol aims at ensuring that all (coded)
packets reach the collection points, by transmitting at intermediate
nodes (re)combinations of received packets. The decoding module at
the collection points applies the algorithm from Section~\ref{sec:decoding}
to recover the experimental data collected in the considered area
$\mathcal{R}_{i}$.

\subsection{Network Encoding Format}

\label{sub:NetworkEncodingFormat}

Figure~\ref{fig:general-format} represents the general packet format
used in NeCoRPIA: it includes a control header used by the NC dissemination
protocol, followed by an encoded content, considered as symbols from
$\mathbb{F}_{q}$, the Galois field with $q$ elements. 
\begin{figure}
\begin{centering}
\includegraphics[width=0.5\textwidth]{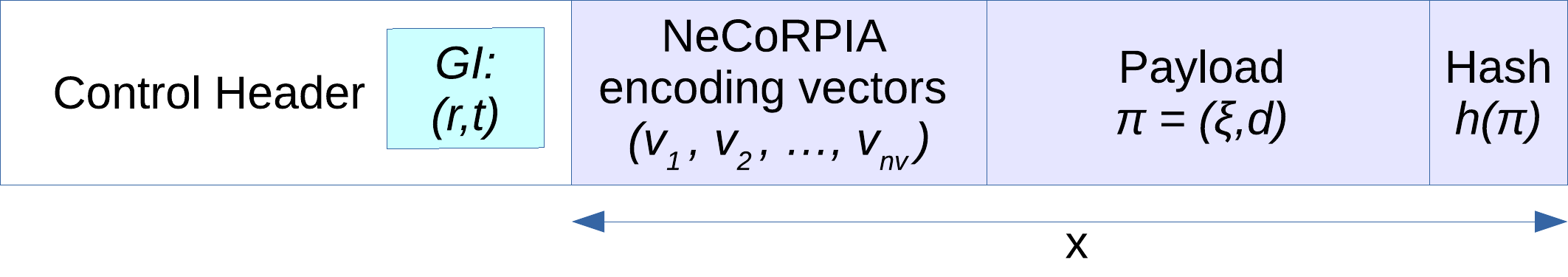} 
\par\end{centering}
\protect\protect\caption{NeCoRPIA packet format\label{fig:general-format}}
\end{figure}
The control header itself includes, as generation identifier (GI),
a \emph{spatio-temporal slot} (STS) $\left(r,t\right)$, where $r$
is the index of the sink $\mathcal{S}_{r}$ and $t$ is the index
of the considered time slot, within which the data has been collected.
Only packets with the same GI are combined together by the protocol.
The rest of the packet can be formally written as a vector $\mathbf{x}$
of $L_{\text{x}}$ entries in $\mathbb{F}_{q}$ as 
\begin{equation}
\mathbf{x}=\left(\mathbf{v}_{1},\dots,\mathbf{v}_{n_{\text{v}}},\mathbf{\boldsymbol{\pi}},\mathbf{h}\right),\label{eq:PacketFormat}
\end{equation}
where $\mathbf{v}_{\ell}\in\mathbb{F}_{q}^{L_{\ell}}$, $\ell=1,\dots,n_{\text{v}}$
represent the encoding subvectors, $\mathbf{\boldsymbol{\pi}}=(\boldsymbol{\xi},\mathbf{d})$
is the payload, where $\boldsymbol{\xi}$ and $\mathbf{d}$ are represented
with finite precision on a fixed number $L_{\mathbf{\boldsymbol{\pi}}}$
of symbols in $\mathbb{F}_{q}$, and $\mathbf{h}=h(\mathbf{\boldsymbol{\pi}})\in\mathbb{F}_{q}^{L_{\text{h}}}$
is the hash of $\mathbf{\boldsymbol{\pi}}$. In classical NC, $n_{\text{v}}=1$,
and $\mathbf{v}_{1}$ corresponds to one of the canonical vectors
of $\mathbb{F}_{q}^{L_{1}}.$ The choice of the canonical vector requires
an \emph{agreement} among mobile sensing nodes, to avoid the same
vector being selected by two or more mobile nodes in the same STS.

To avoid this resource-consuming agreement step, when a source generates
a packet, NeCoRPIA assigns \emph{random} canonical vectors $\mathbf{e}_{i}\in\mathbb{F}_{q}^{L_{\ell}}$
to each $\mathbf{v}_{\ell}$, $\ell=1,\dots,n_{\text{v}}$. For each
new payload $\mathbf{\boldsymbol{\pi}}$, the random NC vector is
then represented by $\left(\mathbf{v}_{1},\dots,\mathbf{v}_{n_{\text{v}}}\right)$.
One may choose $n_{\text{v}}=1$, but in this case, $L_{1}$ should
be quite long to avoid collisions, even for a moderate number of packets
in each STS (this is reminiscent to the birthday paradox \cite{Holst1986}),
see Section~\ref{sub:Nvv1}. This results in an encoding vector with
the format represented in Figure~\ref{fig:general-format} for
the case $n_{\text{v}}=1$.

The hash $\mathbf{h}$ is included to assist the decoding process
in case of collisions.

\subsection{Network Coding Protocol}

\label{subsec:Protocol}

The NC protocol is in charge of ensuring that the coded packets are
properly reaching the data collection points for later decoding. Since
the main contribution of our method lies in other parts, we only provide
the sketch of a basic protocol. It operates by broadcasting measurements
to all nodes within each area $\mathcal{R}_{i}$, with NC (in the
spirit of \cite{Fragouli+2008}): with the effect that the collection
point $S_{i}$ will gather the information as well. It is a multi-hop
protocol relying on the \emph{control header}, shown in Figure~\ref{fig:general-format},
to propagate control information to the entire network (as DRAGONCAST
\cite{dragoncast-draft} does for instance). The control headers are
generated by the data collection points and copied in each encoded
packet by the nodes.

The baseline functioning is as follows: at the beginning of each STS,
$S_{i}$ initiates data collection by generating packets with an empty
payload, and with a source control header holding various parameters
such as: number of encoding subvectors $n_{\text{v}}$ and size of
each encoding vector $L_{\ell}$, $\ell=1,\dots,n_{\text{v}}$, the
buffer size $G_{B}$, the STS $(r,t)$, along with a compact description
of its area $\mathcal{R}_{i}$, sensing parameters, \emph{etc.} Upon
receiving packets from a data collection point or from other nodes,
and as long its current position matches $\mathcal{R}_{i}$, a node
periodically\footnote{or immediately as in \cite{Fragouli+2008} }
retransmits (coded) packets with the most up-to-date control header.
When several measurements are taken within the same STS, packets with
different random encoding vectors should be generated. Furthermore,
each node maintains a buffer of (at most) $G_{B}$ coded vectors:
when a packet associated to a given STS is received, it is (linearly)
combined with all coded packets associated with the same STS in the
buffer. Likewise, when a packet is generated in some STS, the node
computes a linear combination of all coded packets belonging to the
same STS and stored in the buffer. $S_{i}$ (indirectly) instructs
nodes to stop recoding of packets of a STS (and to switch to the next
one) through proper indication in the control header. Note that many
improvements of this scheme exist or can be designed.

\section{Estimation of the transmitted packets\label{sec:decoding}}

Assume that within a STS $\left(r,s\right)$, mobile sensing nodes
have generated a set of packets $\mathbf{x}_{1},\ldots,\mathbf{x}_{g}$,
which may be stacked in a matrix $\mathbf{X}$. Assume that $g'\geqslant g$
linear combinations of the packets $\mathbf{x}_{1},\ldots,\mathbf{x}_{g}$
are received by the data collection point $\mathcal{S}_{r}$ and collected
in a matrix $\mathbf{Y}'$ such that 
\begin{equation}
\mathbf{Y}'=\mathbf{A}'\mathbf{X}=\mathbf{A}'\left(\mathbf{V}_{1},\dots,\mathbf{V}_{n_{\text{v}}},\mathbf{P}\right),\label{eq:RecPackets}
\end{equation}
where $\mathbf{A}'$ represents the NC operations that have been performed
on the packets $\mathbf{x}_{1},\ldots,\mathbf{x}_{g}$. $\mathbf{V}_{1},\dots,\mathbf{V}_{n_{\text{v}}}$,
and $\mathbf{P}$ are matrices which rows are the corresponding vectors
$\mathbf{v}_{1,i},\dots,\mathbf{v}_{n_{\text{v}},i}$, and $\mathbf{p}_{i}=\left(\boldsymbol{\pi}_{i},\mathbf{h}_{i}\right)\in\mathbb{F}_{q}^{L_{\mathbf{p}}}$
of the packets $\mathbf{x}_{i}$, $i=1,\dots,g$, with $L_{\text{p}}=L_{\mathbf{\boldsymbol{\pi}}}+L_{\text{h}}$.
If enough linearly independent packets have been received, a full-rank
$g$ matrix \textbf{$\mathbf{Y}$} may be extracted by appropriately
selecting\footnote{We assume that even if packet index collisions have occurred (which
means $\text{rank}\left(\mathbf{V}\right)<g$) the measurement process
is sufficiently random to ensure that $\mathbf{X}$ and thus $\mathbf{Y}$
have full rank $g$.} $g$ rows of $\mathbf{Y}'$. The corresponding $g$ rows of $\mathbf{A}'$
form a $g\times g$ full-rank matrix $\mathbf{A}$. Then, \eqref{eq:RecPackets}
becomes 
\begin{equation}
\mathbf{Y}=\mathbf{A}\left(\mathbf{V}_{1},\dots,\mathbf{V}_{n_{\text{v}}},\mathbf{P}\right).\label{eq:RecPacketsFullRank}
\end{equation}
The problem is then to estimate the packets $\mathbf{x}_{1},\ldots,\mathbf{x}_{g}$
from the received packets in $\mathbf{Y}$, without knowing $\mathbf{A}$.

Three situations have to be considered. The first is when the rows
of $\left(\mathbf{V}_{1},\dots,\mathbf{V}_{n_{\text{v}}}\right)$
are linearly independent due to the presence of some $\mathbf{V}_{\ell}$
of full rank $g$. The second is when the rows of $\left(\mathbf{V}_{1},\dots,\mathbf{V}_{n_{\text{v}}}\right)$
are linearly independent but there is no full rank $\mathbf{V}_{\ell}$.
The third is when the rank of $\left(\mathbf{V}_{1},\dots,\mathbf{V}_{n_{\text{v}}}\right)$
is strictly less than $g$, but the rank of $\mathbf{Y}$ is equal
to $g$. These three cases are illustrated in Examples~\ref{exa:NoCollision}-\ref{exa:Cycle}.
In the last two situations, a specific decoding procedure is required,
which is detailed in Section~\ref{sub:Collisions}.
\begin{example}
\label{exa:NoCollision}Consider a scenario where three nodes generate
packets with $n_{\text{v}}=2$ random coding subvectors in $\mathbb{F}_{2}^{L_{\ell}}$
with $L_{1}=L_{2}=3$. When the generated coding vectors are $\left(\left(1,0,0\right),\left(0,1,0\right)\right)$,
$\left(\left(1,0,0\right),\left(0,0,1\right)\right)$, and $\left(\left(0,0,1\right),\left(1,0,0\right)\right)$,
two nodes have selected the same first coding subvector, but all second
coding subvectors are linearly independent, which allows one to recover
the original packets via Gaussian elimination. This situation is illustrated
in Figure~\ref{fig:morpion} ($a$), where each coding vector may
be associated to a point in a $3\times3$ grid, the first coding vector
representing the row index and the second coding subvector the column
index. Three different columns have been selected, decoding can be
performed via Gaussian elimination on the second coding subvectors. 
\end{example}
\begin{figure}
\centering \includegraphics{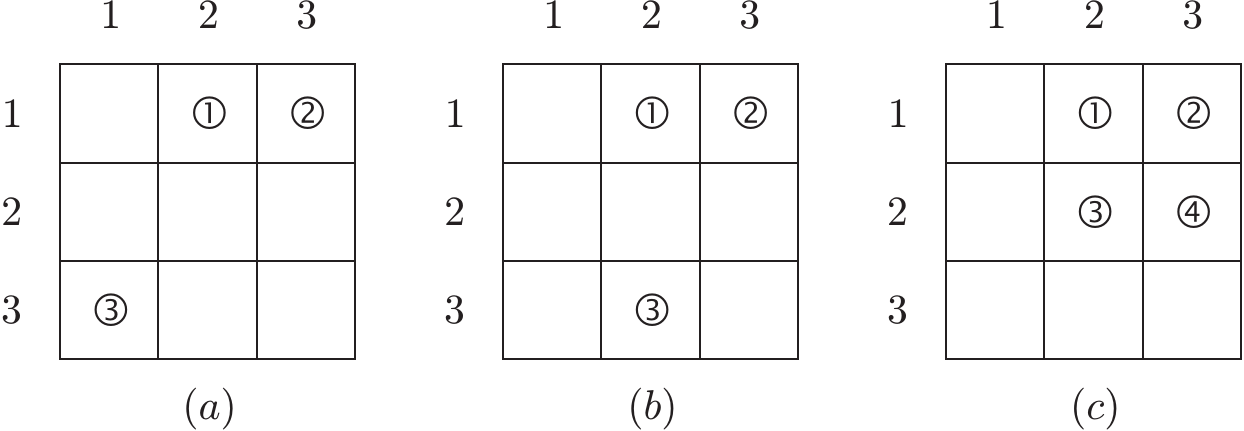}

\caption{($a$) Illustration of Example~\ref{exa:NoCollision}: No collision
in the second subvectors; ($b$) Illustration of Example~\ref{exa:CollisionsFullRank}:
single collisions in both subvectors; ($c$) Illustration of Example~\ref{exa:Cycle}:
single collisions in both subvectors leading to a cycle and a rank
deficiency.\label{fig:morpion}}
\end{figure}
\begin{example}
\label{exa:CollisionsFullRank}Consider the same scenario as in Example~\ref{exa:NoCollision}.
When the generated coding vectors are $\left(\left(1,0,0\right),\left(0,1,0\right)\right)$,
$\left(\left(1,0,0\right),\left(0,0,1\right)\right)$, and $\left(\left(0,0,1\right),\left(0,1,0\right)\right)$,
collisions are observed in the first and second coding subvectors,
but $\left(\mathbf{V}_{1},\mathbf{V}_{2}\right)$ is of full rank
$g=3$. This situation is illustrated in Figure~\ref{fig:morpion}
($b$): three different entries have been chosen randomly, decoding
will be easy. 
\end{example}
\begin{example}
\label{exa:Cycle}Consider now a scenario where four nodes generate
packets with $n_{\text{v}}=2$ random coding subvectors in $\mathbb{F}_{2}^{L_{\ell}}$
with $L_{1}=L_{2}=3$. When the randomly generated coding vectors
are $\left(\left(1,0,0\right),\left(0,1,0\right)\right)$, $\left(\left(1,0,0\right),\left(0,0,1\right)\right)$,
$\left(\left(0,1,0\right),\left(0,1,0\right)\right)$, and $\left(\left(0,1,0\right),\left(0,0,1\right)\right)$,
the rank of $\left(\mathbf{V}_{1},\mathbf{V}_{2}\right)$ is only
three. $\mathbf{Y}$ will be of full rank $g=4$ only if $\left(\mathbf{V}_{1},\mathbf{V}_{2},\mathbf{P}\right)$
is of full rank. This situation is illustrated in Figure~\ref{fig:morpion}
($c$): even if different entries have been chosen, the rank deficiency
comes from the fact that the chosen entries may be indexed in such
a way that they form a cycle. 
\end{example}

\subsection{Decoding via Gaussian elimination}

\label{sub:GaussElim}

When $\mathbf{A}$ is a $g\times g$ full-rank matrix, a necessary
and sufficient condition to have the rank of one of the matrices $\mathbf{AV}_{\ell}$
equal to $g$ is that all mobile sensing nodes have chosen a different
canonical subvector for the component $\mathbf{v}_{\ell}$ of the
NC vector. Decoding may then be performed via usual Gaussian elimination
on $\mathbf{AV}_{\ell}$, as in classical NC. As will be seen in Section~\ref{sec:PerfEval},
this event is unlikely, except for large values of $L_{\ell}$ compared
to the number of packets $g$.

\subsection{Decoding with packet index collisions}

\label{sub:Collisions}

When $\mathbf{A}$ is a $g\times g$ full-rank matrix, the rank of
$\mathbf{AV}_{\ell}$ is strictly less than $g$ when at least two
rows of $\mathbf{V}_{\ell}$ are identical, \emph{i.e.,} two nodes
have chosen the same canonical subvector. This event is called a \emph{collision}
in the $\ell$-th component of the NC vector.

When $\mathbf{A}$ is a $g\times g$ full-rank matrix, the rank of
$\mathbf{A}\left[\mathbf{V}_{1},\dots,\mathbf{V}_{n_{\text{v}}}\right]$
is strictly less than $g$ when the rows of $\left[\mathbf{V}_{1},\dots,\mathbf{V}_{n_{\text{v}}}\right]$
are linearly dependent. This may obviously occur when the NC vector
chosen by two nodes are identical, \emph{i.e.}, there is a collision
in \emph{all} $n_{\text{v}}$ components of their NC vector. This
also occurs when there is no such collision, when nodes have randomly
generated linearly dependent NC subvectors, as illustrated in Example~\ref{exa:Cycle},
see also Section~\ref{sec:Complexity}.

\subsubsection{Main idea}

In both cases, one searches a full rank matrix $\mathbf{W}$ such
that $\mathbf{X=W}\mathbf{Y}$ up to a permutation of the rows of
$\mathbf{X}$. We propose to build this \emph{unmixing} matrix $\mathbf{W}$
row-by-row exploiting the part of the packets containing the $n_{\text{v}}$
NC subvectors $\left(\mathbf{A}\mathbf{V}_{1},\dots,\mathbf{\mathbf{A}V}_{n_{\text{v}}}\right)$,
which helps defining a subspace in which admissible rows of $\mathbf{W}$
have to belong. Additionally, one exploits the content of the packets
and especially the hash $h\left(\boldsymbol{\pi}\right)$ introduced
in Section~\ref{sub:NetworkEncodingFormat} to eliminate candidate
rows of $\mathbf{W}$ leading to inconsistent payloads.

\subsubsection{Exploiting the collided NC vectors}

For all full rank $g$ matrix $\mathbf{Y}$, there exists a matrix
$\mathbf{T}$ such that $\mathbf{TY}$ is in reduced row echelon form
(RREF) 
\begin{equation}
\mathbf{T}\mathbf{Y}=\left(\begin{array}{ccccc}
\mathbf{B}_{11} & \mathbf{B}_{12} &  & \mathbf{B}_{1n_{\text{v}}} & \mathbf{C}_{1}\\
\mathbf{0} & \mathbf{B}_{22}\\
\mathbf{0} & \mathbf{0} & \ddots &  & \vdots\\
\vdots &  & \ddots & \mathbf{B}_{n_{\text{v}}n_{\text{v}}}\\
\mathbf{0} & \cdots & \cdots & \mathbf{0} & \mathbf{C}_{n_{\text{v}}+1}
\end{array}\right),\label{eq:Echelon}
\end{equation}
where $\mathbf{B}_{\ell\ell}$ is a $\rho_{\ell}\times L_{\ell}$
matrix with $\text{rank}\left(\mathbf{B}_{\ell\ell}\right)=\rho_{\ell}$.
Since $\textrm{rank}\left(\mathbf{Y}\right)=g$, $\mathbf{C}_{n_{\text{v}}+1}$
is a $\rho_{n_{\text{v}}+1}\times L_{\text{p}}$ matrix with $\text{rank}\left(\mathbf{C}_{n_{\text{v}}+1}\right)=\rho_{n_{\text{v}}+1}=g-\sum_{\ell=1}^{n_{\text{v}}}\rho_{\ell}$.
The matrix $\mathbf{B}_{11}$ is of rank $\rho_{1}$ and its rows
are $\rho_{1}$ different vectors of $\mathbf{V}_{1}$.

One searches now for generic unmixing row vectors $\mathbf{w}$ of
the form $\mathbf{w}=\left(\mathbf{w}_{1},\mathbf{w}_{2},\dots,\mathbf{w}_{n_{\text{v}}},\mathbf{w}_{n_{\text{v}}+1}\right)$,
with $\mathbf{w}_{1}\in\mathbb{F}_{q}^{\rho_{1}}$, $\mathbf{w}_{2}\in\mathbb{F}_{q}^{\rho_{2}}$,$\dots,\mathbf{w}_{n_{\text{v}}+1}\in\mathbb{F}_{q}^{\rho_{n_{\text{v}}+1}}$,
such that $\mathbf{wT}\mathbf{Y}=\mathbf{x}_{k}$ for some $k\in\left\{ 1,\dots,g\right\} $.
This implies that the structure of the decoded vector $\mathbf{wTY}$
has to match the format introduced in \eqref{eq:PacketFormat} and
imposes some constraints on $\mathbf{w},$ which components have to
satisfy 
\begin{align}
\mathbf{w}_{1}\mathbf{B}_{11} & =\mathbf{e}_{j_{1}}\label{eq:Constr1}\\
\mathbf{w}_{1}\mathbf{B}_{12}+\mathbf{w}_{2}\mathbf{B}_{22} & =\mathbf{e}_{j_{2}}\label{eq:Constr2}\\
\vdots\nonumber \\
\mathbf{w}_{1}\mathbf{B}_{1,\ell-1}+\mathbf{w}_{2}\mathbf{B}_{2,\ell-1}+\dots+\mathbf{w}_{\ell-1}\mathbf{B}_{\ell-1,\ell-1} & =\mathbf{e}_{j_{\ell-1}}\label{eq:Constrl-1}\\
\mathbf{w}_{1}\mathbf{B}_{1\ell}+\mathbf{w}_{2}\mathbf{B}_{2\ell}+\dots+\mathbf{w}_{\ell}\mathbf{B}_{\ell\ell} & =\mathbf{e}_{j_{\ell}}\label{eq:Constrl1}\\
\mathbf{w}_{1}\mathbf{B}_{1,\ell+1}+\mathbf{w}_{2}\mathbf{B}_{2,\ell+1}+\dots+\mathbf{w}_{\ell+1}\mathbf{B}_{\ell+1,\ell+1} & =\mathbf{e}_{j_{\ell+1}}\label{eq:Constrl11}\\
\vdots\nonumber \\
\mathbf{w}_{1}\mathbf{B}_{1n_{\text{v}}}+\mathbf{w}_{2}\mathbf{B}_{2n_{\text{v}}}+\dots+\mathbf{w}_{n_{\text{v}}}\mathbf{B}_{n_{\text{v}}n_{\text{v}}} & =\mathbf{e}_{j_{n_{\text{v}}}}\label{eq:ConstrL}\\
c\left(\mathbf{w}_{1}\mathbf{C}_{1}+\mathbf{w}_{2}\mathbf{C}_{2}+\dots+\mathbf{w}_{n_{\text{v}}+1}\mathbf{C}_{n_{\text{v}}+1}\right) & =0,\label{eq:ConstrLast}
\end{align}
where $\mathbf{e}_{j_{\ell}}$ is the $j_{\ell}$-th canonical vector
of $\mathbb{F}_{q}^{L_{\ell}}$ and $c$ is a hash-consistency verification
function such that 
\begin{equation}
c\left(\boldsymbol{\pi},\mathbf{h}\right)=\begin{cases}
0 & \text{if }\mathbf{h}=h\left(\boldsymbol{\pi}\right),\\
1 & \text{else,}
\end{cases}\label{eq:HashVerification}
\end{equation}
where both components $\boldsymbol{\pi}$ and $\mathbf{h}$ are extracted
from $\mathbf{w}_{1}\mathbf{C}_{1}+\mathbf{w}_{2}\mathbf{C}_{2}+\dots+\mathbf{w}_{n_{\text{v}}+1}\mathbf{C}_{n_{\text{v}}+1}$.

The constraint \eqref{eq:Constrl1} can be rewritten as 
\begin{equation}
\mathbf{w}_{\ell}\mathbf{B}_{\ell\ell}=\mathbf{e}_{j_{\ell}}-\left(\mathbf{w}_{1}\mathbf{B}_{1,\ell}+\dots+\mathbf{w}_{\ell-1}\mathbf{B}_{\ell-1,\ell}\right).\label{eq:ConstrlPivot-1}
\end{equation}
This is a system of linear equations in $\mathbf{w}_{\ell}$. Since
the $\rho_{\ell}\times L_{\ell}$ matrix $\mathbf{B}_{\ell,\ell}$
has full row rank $\rho_{\ell}$, for every $\mathbf{e}_{j_{\ell}}\in\mathbb{F}_{q}^{L_{\ell}}$
there is at most one solution for $\mathbf{w}_{\ell}$. This property
allows building all candidate decoding vectors $\mathbf{w}$ using
a branch-and-prune approach described in the following algorithm,
which takes $\mathbf{T}\mathbf{Y}$ as input.

\textbf{Algorithm 1}. DeRPIA (Decoding from Random Packet Index Assignment) 
\begin{itemize}
\item Initialization: Initialize the root of the decoding tree with an empty
unmixing vector $\mathbf{w}$. 
\item Level 1: From the root node, find branches corresponding to all possible
values of $\mathbf{e}_{j_{1}}$, $j_{1}=1,\dots,L_{1}$, for which
there exists a value of $\mathbf{w}_{1}$ satisfying \eqref{eq:Constr1}. 
\item Level 2: 
\begin{itemize}
\item Expand each branch at Level~$1$ with branches corresponding to all
possible values of $\mathbf{e}_{j_{2}}$, $j_{2}=1,\dots,L_{2}$,
for which there exists a value of $\mathbf{w}_{2}$ satisfying \eqref{eq:Constr2}. 
\item Prune all branches corresponding to a pair $\left(\mathbf{w}_{1},\mathbf{w}_{2}\right)$
for which there is no $j_{2}=1,\dots,L_{2}$ such that \eqref{eq:Constr2}
is satisfied. 
\end{itemize}
\item Level $\ell$: Expand all remaining branches at Level~$\ell-1$ in
the same way. 
\begin{itemize}
\item Expand each branch at Level~$\ell-1$ for a given tuple $\left(\mathbf{w}_{1},\dots,\mathbf{w}_{\ell-1}\right)$
with branches corresponding to all possible values of $\mathbf{e}_{j_{\ell}}$,
$j_{\ell}=1,\dots,L_{\ell}$, for which there exists a value of $\mathbf{w}_{\ell}$
satisfying \eqref{eq:Constrl1}. 
\item Prune all branches corresponding to tuples $\left(\mathbf{w}_{1},\dots,\mathbf{w}_{\ell-1}\right)$
for which there is no $j_{\ell}=1,\dots,L_{\ell}$ such that \eqref{eq:Constrl1}
is satisfied. 
\end{itemize}
\item Level $n_{\text{v}}+1$:
\begin{itemize}
\item If $\rho_{n_{\text{v}}+1}=0$, all tuples $\left(\mathbf{w}_{1},\dots,\mathbf{w}_{n_{\text{V}}}\right)$
found at Level~$n_{\text{v}}$ are unmixing vectors. 
\item If $\rho_{n_{\text{v}}+1}>0$, each branch of the tree corresponding
to a vector $\left(\mathbf{w}_{1},\dots,\mathbf{w}_{n_{\text{v}}}\right)$
satisfying all constraints \eqref{eq:Constr1}-\eqref{eq:ConstrL},
is expanded with all values of $\mathbf{w}_{n_{\text{v}}+1}\in\mathbb{F}_{q}^{\rho_{n_{\text{v}}+1}}$
such that \eqref{eq:ConstrLast} is satisfied. Note that \eqref{eq:ConstrLast}
is not a linear equation.
\end{itemize}
\end{itemize}
\begin{figure}
\centering \includegraphics[width=1\textwidth]{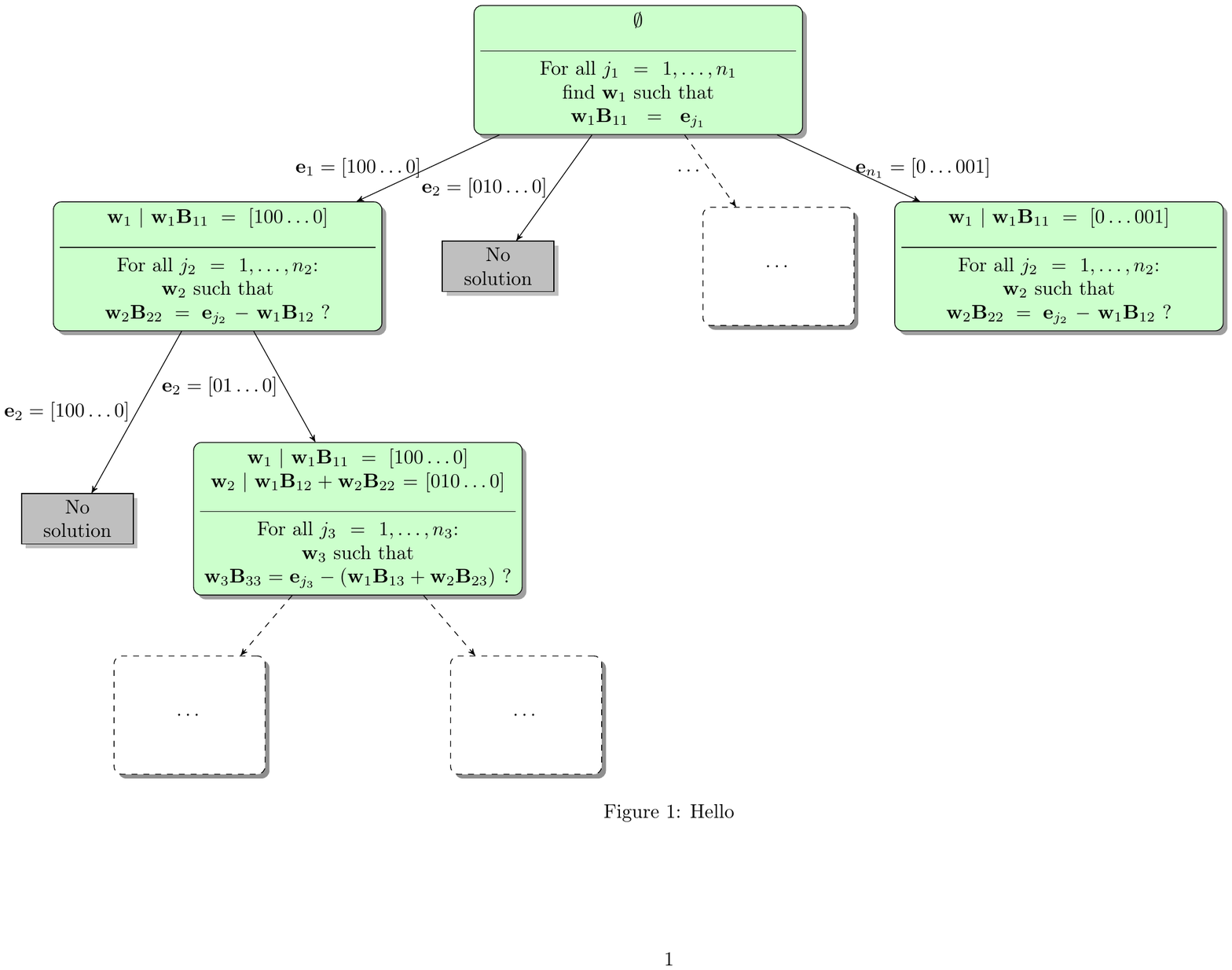} \caption{\label{fig:TreeRoot}First steps of DERPIA starting from the root
of the decoding tree.}
\end{figure}

The first steps of DeRPIA are illustrated in Figure~\ref{fig:TreeRoot}.
From the root node, several hypotheses are considered for $\mathbf{w}_{1}$.
Only those satisfying \eqref{eq:Constr1} are kept at Level~1. The
nodes at Level~1 are then expanded with candidates for $\mathbf{w}_{2}$.
Figure~\ref{fig:TreeBottom} illustrates the behavior of DeRPIA at
Level~$n_{\text{v}}+1$. Several hypotheses for $\mathbf{w}_{n_{\text{v}}+1}$
are considered. Only those such that \eqref{eq:ConstrLast} is satisfied
are kept to form the final unmixing vectors $\mathbf{w}.$

\begin{figure}
\centering \includegraphics[width=1\textwidth]{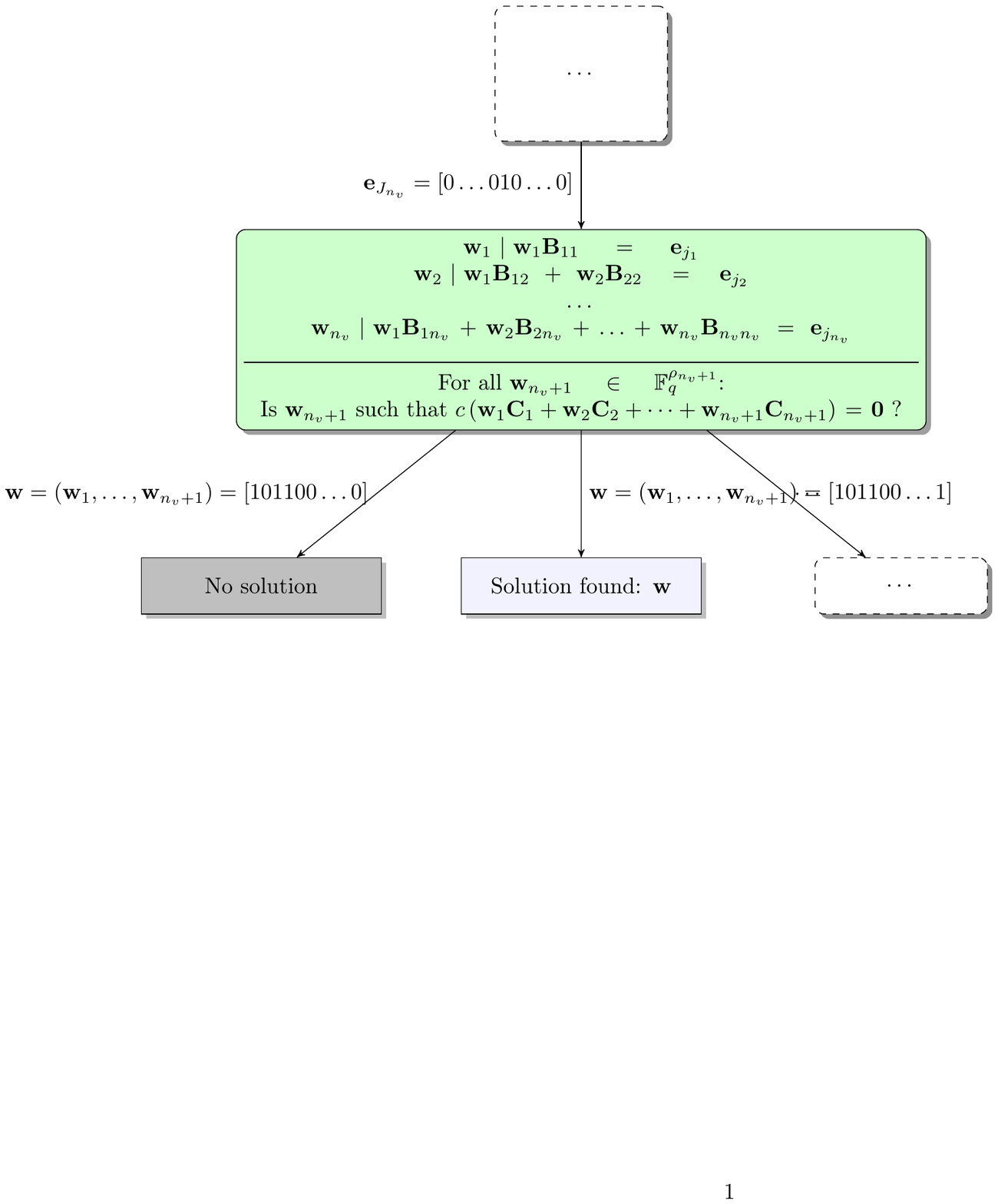}
\caption{\label{fig:TreeBottom}Last steps of DERPIA at the leaves at Level~$n_{\text{v}}+1$
of the decoding tree.}
\end{figure}

\subsubsection{Complexity reduction}

\label{sub:ComplexityRed}

The aim of this section is to show that at each level of the tree
built by DeRPIA, the solution of \eqref{eq:ConstrlPivot-1} does not
need solving a system of linear equations but may be performed by
the search in a look-up table. From Theorem~\ref{thm:OneElementOnly},
one sees that $\mathbf{w}_{\ell}$ can take at most $\rho_{\ell}+1$
different values, which are the null vector and the canonical base
vectors $\mathbf{e}_{i}$ of $\mathbb{F}^{\rho_{\ell}}$ corresponding
to the pivots of $\mathbf{B}_{\ell\ell}$. 
\begin{thm}
\emph{\label{thm:OneElementOnly}Each $\mathbf{w}_{\ell}$ satisfying
\eqref{eq:Constr1}-\eqref{eq:Constrl1} contains at most one non-zero
component. In addition, $\mathbf{w}_{1}$ contains exactly one non-zero
component.} 
\end{thm}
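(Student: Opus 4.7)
The plan is to exploit the reduced row echelon form (RREF) structure of $\mathbf{T}\mathbf{Y}$ exhibited in \eqref{eq:Echelon}. Since $\mathbf{T}\mathbf{Y}$ has full row rank $g$ and is in RREF, every pivot column of $\mathbf{T}\mathbf{Y}$ is, as a column in $\mathbb{F}_q^{g}$, a canonical basis vector supported on the corresponding pivot row. Because the block structure is induced by a partitioning of the columns, the rows of $\mathbf{T}\mathbf{Y}$ with pivots inside block $\ell$ are exactly the $\rho_\ell$ rows forming $\mathbf{B}_{\ell\ell}$, and the pivot columns of $\mathbf{B}_{\ell\ell}$ (denote them $p_{\ell,1}<\dots<p_{\ell,\rho_\ell}$) are $\rho_\ell$ distinct column positions within block $\ell$.

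The key step is to read the constraint \eqref{eq:Constrl1} one pivot column at a time. Viewed as a full column of $\mathbf{T}\mathbf{Y}$, the column at position $p_{\ell,s}$ of block $\ell$ equals $\mathbf{e}_{r}$ with $r=\rho_1+\dots+\rho_{\ell-1}+s$, since the entries in the blocks $\mathbf{B}_{k\ell}$ for $k<\ell$ and in $\mathbf{B}_{\ell\ell}$ above row $s$ vanish at a pivot column, and the pivot entry itself equals $1$. Taking the inner product of the row vector $(\mathbf{w}_1,\dots,\mathbf{w}_\ell)$ with this column therefore collapses to $(\mathbf{w}_\ell)_s$. The right-hand side of \eqref{eq:Constrl1} at column $p_{\ell,s}$ is $(\mathbf{e}_{j_\ell})_{p_{\ell,s}}=\delta_{j_\ell,p_{\ell,s}}$, so the identity $(\mathbf{w}_\ell)_s=\delta_{j_\ell,p_{\ell,s}}$ emerges directly from the RREF. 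Because the pivot columns are distinct, at most one $s$ satisfies $j_\ell=p_{\ell,s}$, which gives the announced bound of at most one non-zero component in $\mathbf{w}_\ell$.

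For the second statement, specializing to $\ell=1$, constraint \eqref{eq:Constr1} is $\mathbf{w}_1\mathbf{B}_{11}=\mathbf{e}_{j_1}$ with no prior $\mathbf{w}_k$ to contribute. Since $\mathbf{e}_{j_1}$ is non-zero, the trivial choice $\mathbf{w}_1=\mathbf{0}$ is ruled out, so $\mathbf{w}_1$ must contain at least one non-zero entry; combined with the bound above, exactly one.

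I do not expect a real obstacle here, only bookkeeping: the delicate point is to identify cleanly that each pivot column of $\mathbf{B}_{\ell\ell}$, when lifted to a column of the ambient matrix $\mathbf{T}\mathbf{Y}$, is a canonical basis vector of $\mathbb{F}_q^{g}$ supported on the corresponding pivot row, so that the inner-product evaluation picks out exactly the component $(\mathbf{w}_\ell)_s$ and nothing else. Once this block/column-indexing is stated precisely, the conclusion is immediate, and as a bonus one reads off that the admissible values of $\mathbf{w}_\ell$ are precisely $\mathbf{0}$ or one of the canonical vectors $\mathbf{e}_s\in\mathbb{F}_q^{\rho_\ell}$ associated with the pivots of $\mathbf{B}_{\ell\ell}$, as needed for the look-up table complexity reduction claimed in Section~\ref{sub:ComplexityRed}.
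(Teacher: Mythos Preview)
Your proof is correct and relies on the same key RREF property as the paper: the pivot columns of $\mathbf{B}_{\ell\ell}$ have all-zero entries in the blocks $\mathbf{B}_{k\ell}$ for $k<\ell$, so that evaluating constraint \eqref{eq:Constrl1} at those columns isolates the components of $\mathbf{w}_\ell$. The only difference is presentational: the paper argues by contradiction (more than one non-zero component of $\mathbf{w}_\ell$ would produce more than one non-zero entry at pivot positions of $\mathbf{w}_\ell\mathbf{B}_{\ell\ell}$), whereas you read off the identity $(\mathbf{w}_\ell)_s=\delta_{j_\ell,p_{\ell,s}}$ directly, which is slightly cleaner and additionally yields that the non-zero component, when present, equals~$1$.
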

\begin{proof} The proof is by contradiction. Consider first $\ell=1$.
$\mathbf{B}_{11}$ is a pivot matrix of $\rho{}_{1}$ lines. If $\mathbf{w}_{1}$
contains more than one non-zero component, then $\mathbf{w}_{1}\mathbf{B}_{11}$
will contain more than one non-zero entries corresponding to the pivots
of $\mathbf{B}_{11}$ associated to the non-zero components of $\mathbf{w}_{1}$
and \eqref{eq:Constr1} cannot be satisfied. Moreover, with $\mathbf{w}_{1}=\mathbf{0}$,
\eqref{eq:Constr1} cannot be satisfied too.

Consider now some $\ell>1$. In \eqref{eq:ConstrlPivot-1}, since
$\mathbf{TY}$ is in RREF, $\mathbf{B}_{\ell\ell}$ is a pivot matrix
of $\rho{}_{\ell}$ lines. Moreover all columns of the matrices $\mathbf{B}_{1,\ell},\dots,\mathbf{B}_{\ell-1,\ell}$
which correspond to the columns of the pivots of $\mathbf{B}_{\ell\ell}$
are zero. This property is shared by the linear combination $\mathbf{w}_{1}\mathbf{B}_{1,\ell}+\dots+\mathbf{w}_{\ell-1}\mathbf{B}_{\ell-1,\ell}$.
Thus $\mathbf{e}_{j_{\ell}}-\left(\mathbf{w}_{1}\mathbf{B}_{1,\ell}+\dots+\mathbf{w}_{\ell-1}\mathbf{B}_{\ell-1,\ell}\right)$
is either the null vector, in which case $\mathbf{w}_{\ell}=\mathbf{0}$,
or contains at most one non-zero entry due to $\mathbf{e}_{j_{\ell}}$
at the columns corresponding to the pivots of $\mathbf{B}_{\ell\ell}$.
In the latter case, if $\mathbf{w}_{\ell}$ contains more than one
non-zero component, then $\mathbf{w}_{\ell}\mathbf{B}_{\ell\ell}$
will contain more than one non-zero entry corresponding to the columns
of the pivots of $\mathbf{B}_{\ell\ell}$ associated to the non-zero
components of $\mathbf{w}_{\ell}$ and \eqref{eq:Constrl1} cannot
be satisfied. \end{proof}

Note that for a given branch, when a set of vectors $\mathbf{w}_{1},\dots,\mathbf{w}_{\ell-1}$
has been found, a vector \emph{$\mathbf{w}_{\ell}$ }satisfying\emph{
}\eqref{eq:Constr1}-\eqref{eq:ConstrLast}\emph{ }does not necessarily
exist. In such case the corresponding branch is pruned, see the last
case of Example~\ref{exa:Decod} in what follows.

Using Theorem~\ref{thm:OneElementOnly}, there is no linear system
of equations to be solved any more. In practice, the search for $\mathbf{w}_{\ell}$
can be even further simplified using the following corollary. 
\begin{cor}
The search for $\mathbf{w}_{\ell}$ satisfying \eqref{eq:Constrl1}
reduces to a simple look-up in a table. 
\end{cor}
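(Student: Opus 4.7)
The plan is to derive the corollary as a direct consequence of Theorem~\ref{thm:OneElementOnly}. That theorem guarantees that any admissible $\mathbf{w}_{\ell}$ lies in the finite set $\mathcal{W}_{\ell}=\{\mathbf{0},\mathbf{e}_{1},\dots,\mathbf{e}_{\rho_{\ell}}\}\subset\mathbb{F}_{q}^{\rho_{\ell}}$. The corresponding left-hand sides $\mathbf{w}_{\ell}\mathbf{B}_{\ell\ell}$ in \eqref{eq:ConstrlPivot-1} thus range over only $\rho_{\ell}+1$ explicit vectors of $\mathbb{F}_{q}^{L_{\ell}}$, namely the zero vector and the $\rho_{\ell}$ rows of $\mathbf{B}_{\ell\ell}$. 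These values depend only on $\mathbf{B}_{\ell\ell}$, which is fixed once the RREF \eqref{eq:Echelon} of $\mathbf{T}\mathbf{Y}$ has been computed, and therefore do not change as DeRPIA traverses the decoding tree.

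Second, I would introduce the look-up table $\mathcal{T}_{\ell}$ built once per decoding task: its keys are the $\rho_{\ell}+1$ vectors $\{\mathbf{0}\}\cup\{\mathbf{B}_{\ell\ell}[i,:]:i=1,\dots,\rho_{\ell}\}$ and each key is associated with the $\mathbf{w}_{\ell}\in\mathcal{W}_{\ell}$ that produces it, i.e.\ $\mathbf{0}\mapsto\mathbf{0}$ and $\mathbf{B}_{\ell\ell}[i,:]\mapsto\mathbf{e}_{i}$. The fact that $\mathbf{B}_{\ell\ell}$ has full row rank $\rho_{\ell}$ (since $\mathbf{T}\mathbf{Y}$ is in RREF) ensures that its rows are pairwise distinct and all non-zero, so the keys are pairwise distinct and the assignment is unambiguous.

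Finally, I would argue that at any node of the tree reached at Level~$\ell$, the right-hand side of \eqref{eq:ConstrlPivot-1},
\begin{equation*}
\mathbf{r}:=\mathbf{e}_{j_{\ell}}-\bigl(\mathbf{w}_{1}\mathbf{B}_{1,\ell}+\dots+\mathbf{w}_{\ell-1}\mathbf{B}_{\ell-1,\ell}\bigr),
\end{equation*}
is determined by the choices $\mathbf{w}_{1},\dots,\mathbf{w}_{\ell-1}$ attached to the current branch and by the candidate $j_{\ell}$. Solving \eqref{eq:Constrl1} then amounts to querying $\mathcal{T}_{\ell}[\mathbf{r}]$: if $\mathbf{r}$ is among the keys, the unique admissible $\mathbf{w}_{\ell}$ is returned; otherwise no $\mathbf{w}_{\ell}\in\mathcal{W}_{\ell}$ satisfies \eqref{eq:Constrl1} and the branch is pruned. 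No linear system is solved.

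The statement is essentially an algorithmic restatement of Theorem~\ref{thm:OneElementOnly}, so I expect no real mathematical obstacle; the only care required is to spell out that the table is built once for each $\ell$ from the RREF data, and that uniqueness of the entry returned by $\mathcal{T}_{\ell}$ follows from the full row rank of $\mathbf{B}_{\ell\ell}$.
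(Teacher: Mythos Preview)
Your argument is correct: once Theorem~\ref{thm:OneElementOnly} confines $\mathbf{w}_{\ell}$ to $\{\mathbf{0},\mathbf{e}_{1},\dots,\mathbf{e}_{\rho_{\ell}}\}$, precomputing the $\rho_{\ell}+1$ possible left-hand sides $\mathbf{w}_{\ell}\mathbf{B}_{\ell\ell}$ and matching them against $\mathbf{r}=\mathbf{e}_{j_{\ell}}-(\mathbf{w}_{1}\mathbf{B}_{1,\ell}+\dots+\mathbf{w}_{\ell-1}\mathbf{B}_{\ell-1,\ell})$ is indeed a table look-up. However, your key $\mathbf{r}$ still depends on $j_{\ell}$, so you must query the table once for each of the $L_{\ell}$ candidate canonical vectors $\mathbf{e}_{j_{\ell}}$; the resulting branch-expansion cost stays quadratic in $L_{\ell}$ and is essentially the cost $K(\ell)$ of DeRPIA-SLE.

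The paper organises the table differently, and this is the substantive content of the corollary in context. It observes that when $\mathbf{w}_{\ell}=\mathbf{e}_{i}\neq\mathbf{0}$, the vector $\mathbf{e}_{j_{\ell}}$ is forced to cancel exactly the pivot entry of the $i$-th row $\mathbf{u}$ of $\mathbf{B}_{\ell\ell}$; consequently $\mathbf{w}_{\ell}\mathbf{B}_{\ell\ell}-\mathbf{e}_{j_{\ell}}=\overline{\mathbf{u}}:=\mathbf{u}-\mathbf{e}_{\gamma(\mathbf{u})}$, the row with its pivot zeroed out. The table $\mathcal{C}_{\ell}$ is therefore keyed on $\mathbf{v}=-(\mathbf{w}_{1}\mathbf{B}_{1,\ell}+\dots+\mathbf{w}_{\ell-1}\mathbf{B}_{\ell-1,\ell})$, which does \emph{not} involve $j_{\ell}$, and returns the whole set of admissible $\mathbf{w}_{\ell}$ (with the matching $j_{\ell}$ determined by the pivot position). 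A single evaluation of $\mathbf{v}$ and one look-up thus yield all children of the current branch at once; the loop over $j_{\ell}$ disappears and the per-branch cost drops from quadratic to linear in $L_{\ell}$ (compare $K(\ell)$ with $K_{\text{LU},3}(\ell)$). Your construction is simpler to state, but it does not deliver the complexity reduction that motivates the corollary and the subsequent Algorithms~2a--2b.
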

\begin{proof} Assume that there exists \emph{$\mathbf{w}_{\ell}$
}satisfying \eqref{eq:Constr1}-\eqref{eq:ConstrLast}. Then \eqref{eq:ConstrlPivot-1}
can be rewritten as: $\mathbf{w}_{\ell}\mathbf{B}_{\ell\ell}-\mathbf{e}_{j_{\ell}}=-\left(\mathbf{w}_{1}\mathbf{B}_{1,\ell}+\dots+\mathbf{w}_{\ell-1}\mathbf{B}_{\ell-1,\ell}\right)$.
Here we assume that $\mathbf{w}_{\ell}\neq\mathbf{0}$ and then further
analyze properties established in Theorem~\ref{thm:OneElementOnly}: 
\begin{itemize}
\item In the linear combination $-\left(\mathbf{w}_{1}\mathbf{B}_{1,\ell}+\dots+\mathbf{w}_{\ell-1}\mathbf{B}_{\ell-1,\ell}\right)$,
all components that correspond to the columns of the pivots of $\mathbf{B}_{\ell\ell}$
are zero. 
\item Both sides of \eqref{eq:ConstrlPivot-1} can have at most one non-zero
entry in the columns of the pivots of $\mathbf{B}_{\ell\ell}$. If
there is one, that non-zero entry must correspond to $\mathbf{e}_{j_{\ell}}$.
\end{itemize}
It follows that $\mathbf{e}_{j_{\ell}}$ must exactly correspond to
the component at the column of the unique pivot of $\mathbf{B}_{\ell\ell}$
found in the vector $\mathbf{w}_{\ell}\mathbf{B}_{\ell\ell}$ and
must cancel it in the expression $\mathbf{w}_{\ell}\mathbf{B}_{\ell\ell}-\mathbf{e}_{j_{\ell}}$.
Since $\mathbf{w}_{\ell}$ (assumed non-zero) has only one non-zero
entry and since coefficients corresponding to pivots are equal to
$1$, the non-zero component of $\mathbf{w}_{\ell}$ must be $1$
(as it is the case for $\mathbf{e}_{j_{\ell}}$). As a result $\mathbf{w}_{\ell}\mathbf{B}_{\ell\ell}$
is actually one of the row vectors of $\mathbf{B}_{\ell\ell}$, and
the expression $\mathbf{w}_{\ell}\mathbf{B}_{\ell\ell}-\mathbf{e}_{j_{\ell}}$
is that row vector with a zero at the place of the component of the
associated pivot. Finding one non-zero $\mathbf{w}_{\ell}$ satisfying
\eqref{eq:Constr1}-\eqref{eq:Constrl1} is then equivalent to identifying
all row vectors $\mathbf{u}$ of $\mathbf{B}_{\ell\ell}$ such that
$\overline{\mathbf{u}}=$$-\left(\mathbf{w}_{1}\mathbf{B}_{1,\ell}+\dots+\mathbf{w}_{\ell-1}\mathbf{B}_{\ell-1,\ell}\right)$,
where $\overline{\mathbf{u}}=\mathbf{u}-\mathbf{e}_{\gamma\left(\mathbf{u}\right)}$
and $\gamma\left(\mathbf{u}\right)$ is the index of the pivot column
of $\mathbf{u}$. \end{proof}

One deduces the following look-up table-based algorithm, which consists
in two parts. Algorithm~2a is run once and builds a set of look-up
tables from $\mathbf{B}_{1,1},\dots,\mathbf{B}_{n_{\text{v}},n_{\text{v}}}$.
Algorithm~2b uses then these look-up tables, takes as input $\left(\mathbf{w}_{1},\dots,\mathbf{w}_{\ell-1}\right)$
satisfying \eqref{eq:Constrl-1} and provides the set of vectors $\left(\mathbf{w}_{1},\dots,\mathbf{w}_{\ell-1},\mathbf{w}_{\ell}\right)$
satisfying \eqref{eq:Constrl1}.

\textbf{Algorithm 2a}. Construction of the look-up tables from $\mathbf{B}_{1,1},\dots,\mathbf{B}_{n_{\text{v}},n_{\text{v}}}$ 
\begin{enumerate}
\item For $\ell=1,\dots,n_{\text{v}}$
\begin{enumerate}
\item Initialization: $\Pi_{\ell}=\emptyset$. 
\item For each row vectors $\mathbf{u}$ of $\mathbf{B}_{\ell\ell}$,
\begin{enumerate}
\item identify the index $\gamma\left(\mathbf{u}\right)$ of its pivot column,
and denote by $\overline{\mathbf{u}}$ the row vector with a zero
at the place of its pivot: $\overline{\mathbf{u}}=\mathbf{u}-\mathbf{e}_{\gamma\left(\mathbf{u}\right)}$ 
\item if $\overline{\mathbf{u}}\notin\Pi_{\ell}$, then $\Pi_{\ell}=\Pi_{\ell}\cup\left\{ \overline{\mathbf{u}}\right\} $. 
\end{enumerate}
\item For each $\mathbf{v}\in\Pi_{\ell}$, evaluate 
\[
\mathcal{C_{\ell}}\left(\mathbf{v}\right)=\{\mathbf{e}_{i}\in\mathbb{F}_{q}^{\rho_{\ell}},i=1,\dots,\rho_{\ell}|\mathbf{u}\mathrm{\mathbf{\mathbf{=}}\mbox{\ensuremath{\mathbf{e}_{i}}}\mathbf{B}_{\ell\ell}}\mathrm{~and~}\mathbf{u}-\mathbf{e}_{\gamma\left(\mathbf{u}\right)}=\mathbf{v}\}.
\]
\end{enumerate}
\end{enumerate}
The sets $\mathcal{C_{\ell}}\left(\mathbf{v}\right)$ contain the
candidate $\mathbf{w}_{\ell}$ that may satisfy \eqref{eq:Constrl1}. 
\begin{example}
\label{exa:LUT}Assume for example that 
\[
\mathbf{B}_{\ell,\ell}=\left(\begin{array}{cccccc}
1 & 0 & 0 & 0 & 1 & 0\\
0 & 0 & 1 & 0 & 1 & 0\\
0 & 0 & 0 & 1 & 1 & 0\\
0 & 0 & 0 & 0 & 0 & 1
\end{array}\right).
\]
Using Algorithm~2a, one obtains 
\[
\Pi_{\ell}=\left\{ \left(\begin{array}{cccccc}
0 & 0 & 0 & 0 & 1 & 0\end{array}\right),\left(\begin{array}{cccccc}
0 & 0 & 0 & 0 & 0 & 0\end{array}\right)\right\} 
\]
and 
\begin{align*}
\mathcal{C_{\ell}}\left(\left(\begin{array}{cccccc}
0 & 0 & 0 & 0 & 1 & 0\end{array}\right)\right) & =\left\{ \left(\begin{array}{cccc}
1 & 0 & 0 & 0\end{array}\right),\left(\begin{array}{cccc}
0 & 1 & 0 & 0\end{array}\right),\left(\begin{array}{cccc}
0 & 0 & 1 & 0\end{array}\right)\right\} ,\\
\mathcal{C_{\ell}}\left(\left(\begin{array}{cccccc}
0 & 0 & 0 & 0 & 0 & 0\end{array}\right)\right) & =\left\{ \left(\begin{array}{cccc}
0 & 0 & 0 & 1\end{array}\right)\right\} .
\end{align*}
\end{example}
\textbf{Algorithm 2b}. Obtain the set $\mathcal{W}_{\ell}$ of all
$\mathbf{w}_{\ell}$ satisfying \eqref{eq:Constrl1}, from $\left(\mathbf{w}_{1},\dots,\mathbf{w}_{\ell-1}\right)$
satisfying \eqref{eq:Constrl-1}. 
\begin{enumerate}
\item Initialization: $\mathcal{W}_{\ell}=\emptyset$.
\item Compute $\mathbf{v}=-\left(\mathbf{w}_{1}\mathbf{B}_{1,\ell}+\dots+\mathbf{w}_{\ell-1}\mathbf{B}_{\ell-1,\ell}\right)$.
\item If $\mathbf{w}_{\ell}=\mathbf{0}$ satisfies \eqref{eq:Constrl1},
\emph{i.e.}, if $\mathbf{v}+\mathbf{e}_{j_{\ell}}=\mathbf{0}$ for
some canonical vector $\mathbf{e}_{j_{\ell}}\in\mathbb{F}_{q}^{L_{\ell}}$,
then $\mathcal{W}_{\ell}=\left\{ \mathbf{0}\right\} $. 
\item If\textbf{ }$\mathbf{v}\in\Pi_{\ell}$, then $\mathcal{W}_{\ell}=\mathcal{W}_{\ell}\cup\mathcal{C_{\ell}}(\mathbf{v})$.
\end{enumerate}
\begin{example}
\label{exa:Decod}Consider the results of Example~\ref{exa:LUT}
and a branch at level $\ell-1$ of the decoding tree associated to
$\left(\mathbf{w}_{1},\dots,\mathbf{w}_{\ell-1}\right)$. One searches
the set of $\mathbf{w}_{\ell}$s satisfying 
\begin{align}
\mathbf{w}_{\ell}\mathbf{B}_{\ell\ell} & =\mathbf{e}_{j_{\ell}}-\left(\mathbf{w}_{1}\mathbf{B}_{1,\ell}+\dots+\mathbf{w}_{\ell-1}\mathbf{B}_{\ell-1,\ell}\right).\label{eq:Ex5Eq}
\end{align}
Assume first that $\left(\mathbf{w}_{1},\dots,\mathbf{w}_{\ell-1}\right)$
is such that 
\begin{align*}
\mbox{\ensuremath{\mathbf{v}_{1}\mathbf{=}}} & -\left(\mathbf{w}_{1}\mathbf{B}_{1,\ell}+\dots+\mathbf{w}_{\ell-1}\mathbf{B}_{\ell-1,\ell}\right)\\
= & \left(\begin{array}{cccccc}
0 & 0 & 0 & 0 & 1 & 0\end{array}\right)
\end{align*}
then $\mathbf{w=0}$ is a solution of \eqref{eq:Ex5Eq} associated
to $\mathbf{e}_{j_{\ell}}=\mathbf{e}_{5}=\left(\begin{array}{cccccc}
0 & 0 & 0 & 0 & 1 & 0\end{array}\right)$. The other solutions are given by $\mathcal{C_{\ell}}\left(\mathbf{v}_{1}\right)=\left\{ \left(\begin{array}{cccc}
1 & 0 & 0 & 0\end{array}\right),\left(\begin{array}{cccc}
0 & 1 & 0 & 0\end{array}\right),\left(\begin{array}{cccc}
0 & 0 & 1 & 0\end{array}\right)\right\} $ and $\mathcal{W_{\ell}}=\left\{ \left(\begin{array}{cccc}
0 & 0 & 0 & 0\end{array}\right),\left(\begin{array}{cccc}
1 & 0 & 0 & 0\end{array}\right),\left(\begin{array}{cccc}
0 & 1 & 0 & 0\end{array}\right),\left(\begin{array}{cccc}
0 & 0 & 1 & 0\end{array}\right)\right\} $.

Assume now that $\left(\mathbf{w}_{1},\dots,\mathbf{w}_{\ell-1}\right)$
is such that 
\begin{align*}
\mbox{\ensuremath{\mathbf{v}_{2}\mathbf{=}}} & -\left(\mathbf{w}_{1}\mathbf{B}_{1,\ell}+\dots+\mathbf{w}_{\ell-1}\mathbf{B}_{\ell-1,\ell}\right)\\
= & \left(\begin{array}{cccccc}
0 & 1 & 0 & 0 & 0 & 0\end{array}\right),
\end{align*}
then $\mathbf{w=0}$ is a solution of \eqref{eq:Ex5Eq} associated
to $\mathbf{e}_{j_{\ell}}=\mathbf{e}_{2}=\left(\begin{array}{cccccc}
0 & 1 & 0 & 0 & 0 & 0\end{array}\right)$. There is no other solution, since $\mathbf{v}_{2}\notin\Pi_{\ell}$.

Assume finally that $\left(\mathbf{w}_{1},\dots,\mathbf{w}_{\ell-1}\right)$
is such that 
\begin{align*}
\mbox{\ensuremath{\mathbf{v}_{3}\mathbf{=}}} & -\left(\mathbf{w}_{1}\mathbf{B}_{1,\ell}+\dots+\mathbf{w}_{\ell-1}\mathbf{B}_{\ell-1,\ell}\right)\\
= & \left(\begin{array}{cccccc}
1 & 0 & 0 & 1 & 0 & 0\end{array}\right),
\end{align*}
then $\mathcal{W}_{\ell}=\emptyset$, since $\mathbf{w=0}$ is not
a solution of \eqref{eq:Ex5Eq} and $\mathbf{v}_{3}\notin\Pi_{\ell}$. 
\end{example}

\section{Complexity evaluation}

\label{sec:Complexity}

To evaluate the arithmetic complexity of the NeCoRPIA decoding algorithms,
one assumes that the complexity of the product of an $n\times m$
matrix and an $m\times p$ matrix is at most $K_{\text{m}}nmp$ operations
where $K_{\text{m}}$ is some constant. The complexity of the evaluation
of the checksum of a vector in $\mathbb{F}_{q}^{n}$ is $K_{c}n$,
where $K_{\text{c}}$ is also a constant. Finally determining whether
two vectors of $n$ entries in $\mathbb{F}_{q}$ are equal requires
at most $n$ operations.

The decoding complexity of NeCoRPIA depends on several parameters.
First, the $g'$ received packets of length $L_{\text{x}}$ collected
in $\mathbf{Y}'$ have to be put in RREF to get $\mathbf{TY}$, introduced
in \eqref{eq:Echelon}. The arithmetic complexity of this operation
is $K_{\text{R}}\left(g'\right)^{2}L_{\text{x}}$, with $K_{\text{R}}$
a constant.

Algorithm~1 is a tree traversal algorithm. Each branch at Level~$\ell$
of the tree corresponds to a partial decoding vector $\left(\mathbf{w}_{1},\dots,\mathbf{w}_{\ell}\right)$
satisfying\emph{ }\eqref{eq:Constr1}-\eqref{eq:Constrl1}. The complexity
depends on the number of subbranches stemming from this branch and
the cost related to the partial verification of \eqref{eq:Constr1}-\eqref{eq:Constrl11}
for each of these subbranches. At Level $n_{\text{v}}+1$, for each
branch corresponding to a partial decoding vector $\left(\mathbf{w}_{1},\dots,\mathbf{w}_{n_{\text{v}}}\right)$
satisfying\emph{ }\eqref{eq:Constr1}-\eqref{eq:ConstrL}, an exhaustive
search has to be performed for $\mathbf{w}_{n_{\text{v}}+1}$ such
that $\left(\mathbf{w}_{1},\dots,\mathbf{w}_{n_{\text{v}}+1}\right)$
satisfies \eqref{eq:Constr1}-\eqref{eq:ConstrLast}\emph{.}

Section~\ref{sub:NbBranchTree} evaluates an upper bound for the
number of branches at each level of the decoding tree. Section~\ref{sub:CplAlg1}
determines the complexity of Algorithm~1 when the satisfying $\mathbf{w}_{\ell}$s
are obtained by the solution of a system of linear equations. This
version of Algorithm~1 is called DeRPIA-SLE in what follows. Section~\ref{sub:Alg2}
describes the complexity of Algorithm~1 when the satisfying $\mathbf{w}_{\ell}$s
are obtained from look-up tables as described in Algorithms~2a and
2b. This version of Algorithm~1 is called DeRPIA-LUT in what follows.
As will be seen, the complexities depend on the ranks $\rho_{1},\dots,\rho_{n_{\text{v}}+1}$,
which distribution is evaluated in Section~\ref{sub:Rankdist} in
the case $n_{\text{v}}=1$ and $n_{\text{v}}=2$.

\subsection{Number of branches in the tree}

\label{sub:NbBranchTree}

The following corollary of Theorem~\ref{thm:OneElementOnly} provides
an evaluation of the number of branches that needs to be explored
in Algorithm~1. 
\begin{cor}
\label{cor:Complexity}At Level~$\ell$, with $1\leqslant\ell\leqslant n_{\text{v}}$,
the maximum number of vectors $\left(\mathbf{w}_{1},\dots,\mathbf{w}_{\ell}\right)$
satisfying \eqref{eq:Constrl1} is 
\begin{equation}
N_{\text{b}}(\ell)=\rho_{1}\left(\rho_{2}+1\right) ... \left(\rho_\ell+1\right).\label{eq:Nnl}
\end{equation}
The maximum number of branches to be considered by Algorithm~1
at Level~$\ell=n_{\text{v}}+1$ is 
\begin{equation}
N_{\text{b}}\left(n_{\text{v}}+1\right)=\rho_{1}\left(\rho_{2}+1\right) ... \left(\rho_{n_{\text{v}}}+1\right)q^{\rho_{n_{\text{v}}+1}}\label{eq:Nb}
\end{equation}
and the total number of branches in the decoding tree is upper bounded
by 
\begin{equation}
N_{\text{b}}=\rho_{1}+\rho_{1}\left(\rho_{2}+1\right)+ ... +\rho_{1}\left(\rho_{2}+1\right) ... \left(\rho_{n_{\text{v}}}+1\right)+\rho_{1}\left(\rho_{2}+1\right) ... \left(\rho_{n_{\text{v}}}+1\right)q^{\rho_{n_{\text{v}}+1}}.\label{eq:Nn}
\end{equation}
\end{cor}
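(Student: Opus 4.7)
The plan is to obtain the bounds in \eqref{eq:Nnl}--\eqref{eq:Nn} by a direct level-by-level count of the admissible partial decoding vectors, using Theorem~\ref{thm:OneElementOnly} (and the refinement given in its subsequent corollary) to bound the number of choices for each $\mathbf{w}_\ell$.

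First I would treat Levels~$1$ to $n_{\text{v}}$. By Theorem~\ref{thm:OneElementOnly}, $\mathbf{w}_1$ has exactly one non-zero entry, while $\mathbf{w}_\ell$ for $\ell\geqslant 2$ has at most one. Moreover, the argument used in the proof of the look-up table corollary shows that whenever $\mathbf{w}_\ell\neq\mathbf{0}$, its unique non-zero entry must equal $1$ (since pivots of $\mathbf{B}_{\ell\ell}$ are $1$ and must match the corresponding entry of $\mathbf{e}_{j_\ell}$). Consequently $\mathbf{w}_1$ belongs to the set of $\rho_1$ canonical vectors $\{\mathbf{e}_1,\dots,\mathbf{e}_{\rho_1}\}\subset\mathbb{F}_q^{\rho_1}$, and for $\ell\geqslant 2$, $\mathbf{w}_\ell$ belongs to the set $\{\mathbf{0},\mathbf{e}_1,\dots,\mathbf{e}_{\rho_\ell}\}\subset\mathbb{F}_q^{\rho_\ell}$ of cardinality $\rho_\ell+1$. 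Since each level expands a branch independently, the number of partial decoding vectors surviving through Level~$\ell$ is upper bounded by the product $\rho_1(\rho_2+1)\cdots(\rho_\ell+1)$, yielding \eqref{eq:Nnl}.

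Next I would handle Level~$n_{\text{v}}+1$. Here $\mathbf{w}_{n_{\text{v}}+1}\in\mathbb{F}_q^{\rho_{n_{\text{v}}+1}}$ and the governing constraint \eqref{eq:ConstrLast} involves the non-linear hash-consistency function $c$, so Theorem~\ref{thm:OneElementOnly} no longer restricts the support of $\mathbf{w}_{n_{\text{v}}+1}$. An exhaustive enumeration over all $q^{\rho_{n_{\text{v}}+1}}$ candidate vectors is thus required for each surviving branch from Level~$n_{\text{v}}$, giving the bound \eqref{eq:Nb}. Finally, the total branch count \eqref{eq:Nn} is obtained by summing the per-level bounds from Level~$1$ to Level~$n_{\text{v}}+1$.

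No serious obstacle is anticipated: the result is essentially a counting corollary of Theorem~\ref{thm:OneElementOnly}. The only point requiring a little care is to justify that the non-zero coefficient of each $\mathbf{w}_\ell$ must be exactly $1$, which I would import directly from the argument already given in the proof of the look-up table corollary rather than reproving it. The bound is stated as a maximum because some combinations may in fact be pruned by the constraints \eqref{eq:Constr1}--\eqref{eq:ConstrL}, but these only reduce the count, so the product formula remains a valid upper bound.
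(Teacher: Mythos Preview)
Your proposal is correct and follows essentially the same route as the paper: both argue from Theorem~\ref{thm:OneElementOnly} that $\mathbf{w}_{1}$ has $\rho_{1}$ admissible values and each $\mathbf{w}_{\ell}$ with $2\leqslant\ell\leqslant n_{\text{v}}$ has at most $\rho_{\ell}+1$, then handle Level~$n_{\text{v}}+1$ by exhaustive search over $\mathbb{F}_{q}^{\rho_{n_{\text{v}}+1}}$ and sum. Your explicit justification that the single non-zero entry of $\mathbf{w}_{\ell}$ must equal $1$ (imported from the look-up table corollary) is a point the paper tacitly assumes when asserting the $\rho_{\ell}+1$ count directly from Theorem~\ref{thm:OneElementOnly}, so your write-up is in fact slightly more careful on this detail.
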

\begin{proof} From Theorem~\ref{thm:OneElementOnly}, one deduces
that $\mathbf{w}_{1}$, of size $\rho_{1}$ can take at most $\rho_{1}$
different values. For $1<\ell\leqslant n_{\text{v}}$, $\mathbf{w}_{\ell}$
of size $\rho_{\ell}$ can either be the null vector or take $\rho_{\ell}$
different non-zero values. For the vector $\mathbf{w}_{n_{\text{v}}+1}$,
all possible values $\mathbf{w}_{n_{\text{v}}+1}\in\mathbb{F}_{q}^{\rho_{n_{\text{v}}+1}}$
have to be considered to check whether \eqref{eq:ConstrLast} is verified.
The number of vectors\emph{ $\left(\mathbf{w}_{1},\dots,\mathbf{w}_{\ell}\right)$
}satisfying\emph{ }\eqref{eq:Constrl1} at level $\ell$ is thus upper
bounded by the product of the number of possible values of $\mathbf{w}_{k}$,
$k=1,\dots,\ell$ which is \eqref{eq:Nnl}. Similarly, the number
of branches that have to be considered at Level~$n_{\text{v}}+1$
of the search tree of Algorithm~1 is the product of the number of
all possible values of the $\mathbf{w}_{\ell}$, $\ell=1,\dots,n_{\text{v}}+1$
and thus upper-bounded by \eqref{eq:Nb}. An upper bound of the total
number of branches to consider is then 
\[
N_{\text{b}}=\sum_{\ell=1}^{n_{\text{v}}+1}N_{\text{b}}(\ell),
\]
which is given by \eqref{eq:Nn}. \end{proof}

\subsection{Arithmetic complexity of DeRPIA-SLE}

\label{sub:CplAlg1}

An upper bound of the arithmetic complexity of the tree traversal
algorithm, when the number of encoding vectors is $n_{\text{v}}$,
is provided by Proposition~\ref{prop:ArithComplexitySLE} 
\begin{prop}
\emph{\label{prop:ArithComplexitySLE}Assume that NeCoRPIA has been
performed with $n_{\text{v}}$ random coding subvectors. Then an upper
bound of the total arithmetic complexity of DeRPIA-SLE is 
\begin{align}
K_{\text{SLE}}\left(n_{\text{v}}\right) & =\sum_{\ell=1}^{n_{\text{v}}+1}N_{\text{b}}\left(\ell-1\right)K\left(\ell\right)\label{eq:KSLE}
\end{align}
with 
\begin{align}
K\left(1\right) & =L_{1}\rho_{1}L_{1},\nonumber \\
K\left(\ell\right) & =K_{\text{m}}L_{\ell}\left(\rho_{1}+...+\rho_{\ell-1}\right)+L_{\ell}\left(\ell+\left(\rho_{\ell}+1\right)L_{\ell}\right),\nonumber \\
K\left(n_{\text{v}}+1\right) & =K_{\text{m}}gL_{\text{p}}+\left(n_{\text{v}}-1\right)L_{\text{p}}+q^{\rho_{n_{\text{v}}+1}}\left(K_{\text{m}}\rho_{n_{\text{v}}+1}L_{\text{p}}+L_{\text{p}}+K_{\text{c}}L_{\text{p}}\right).\label{eq:Knv1}
\end{align}
and $N_{\text{b}}\left(0\right)=1$. In the case $n_{\text{v}}=1$,
\eqref{eq:KSLE} boils down to 
\begin{equation}
K_{\text{SLE}}\left(1\right)=\rho_{1}L_{1}^{2}+\rho_{1}L_{\text{p}}\left(K_{\text{m}}g+q^{\rho_{2}}\left(K_{\text{m}}\rho_{2}+1+K_{\text{c}}\right)\right).\label{eq:Kt1}
\end{equation}
}
\end{prop}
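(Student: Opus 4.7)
The plan is to decompose the cost of DeRPIA-SLE level by level in the branch-and-prune tree traversed by Algorithm~1. By Corollary~1, at Level~$\ell-1$ there are at most $N_{\text{b}}(\ell-1)$ surviving partial vectors (with the convention $N_{\text{b}}(0)=1$ for the root), and from each one DeRPIA-SLE performs a per-branch workload $K(\ell)$ to generate and validate the children at Level~$\ell$. Summing the product $N_{\text{b}}(\ell-1)K(\ell)$ over $\ell=1,\ldots,n_{\text{v}}+1$ immediately yields \eqref{eq:KSLE}, so the real work is to justify each $K(\ell)$ in \eqref{eq:Knv1}.

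For $\ell=1$, I would argue that the single root branch must test each of the $L_1$ canonical vectors $\mathbf{e}_{j_1}\in\mathbb{F}_q^{L_1}$, and for each one solve the linear system $\mathbf{w}_1\mathbf{B}_{11}=\mathbf{e}_{j_1}$. Since $\mathbf{B}_{11}$ is the $\rho_1\times L_1$ RREF block of $\mathbf{T}\mathbf{Y}$, one solve plus its consistency check costs at most $\rho_1 L_1$ operations, giving $K(1)=L_1\cdot \rho_1 L_1$. For an intermediate level $1<\ell\leq n_{\text{v}}$, I would split $K(\ell)$ into two contributions. First, the vector $\sum_{k=1}^{\ell-1}\mathbf{w}_k\mathbf{B}_{k,\ell}$ is formed once per branch; by the matrix-product cost assumption this costs $\sum_{k=1}^{\ell-1}K_{\text{m}}\rho_k L_\ell=K_{\text{m}}L_\ell(\rho_1+\cdots+\rho_{\ell-1})$. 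Second, for each of the $L_\ell$ candidates $\mathbf{e}_{j_\ell}$ we form the right-hand side of \eqref{eq:ConstrlPivot-1} (updating over the $\ell$ already-fixed subvectors, cost $\ell$) and solve for $\mathbf{w}_\ell$ on a $\rho_\ell\times L_\ell$ RREF block (cost $(\rho_\ell+1)L_\ell$ for back-substitution and consistency verification), yielding $L_\ell\bigl(\ell+(\rho_\ell+1)L_\ell\bigr)$ in total; adding the two pieces gives the claimed $K(\ell)$.

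For Level~$n_{\text{v}}+1$, I would treat the check on \eqref{eq:ConstrLast} separately. The fixed part $\sum_{k=1}^{n_{\text{v}}}\mathbf{w}_k\mathbf{C}_k$ is computed once, paying at most $K_{\text{m}} L_{\text{p}}\sum_{k}\rho_k\leq K_{\text{m}} g L_{\text{p}}$ for the products and $(n_{\text{v}}-1)L_{\text{p}}$ for combining them. Then, for each of the $q^{\rho_{n_{\text{v}}+1}}$ candidate $\mathbf{w}_{n_{\text{v}}+1}\in\mathbb{F}_q^{\rho_{n_{\text{v}}+1}}$, we pay $K_{\text{m}}\rho_{n_{\text{v}}+1}L_{\text{p}}$ for $\mathbf{w}_{n_{\text{v}}+1}\mathbf{C}_{n_{\text{v}}+1}$, $L_{\text{p}}$ to add it to the running sum, and $K_{\text{c}}L_{\text{p}}$ to evaluate the hash-consistency function $c$ of \eqref{eq:HashVerification}; accumulating these three contributions reproduces $K(n_{\text{v}}+1)$ in \eqref{eq:Knv1}. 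The specialization \eqref{eq:Kt1} to $n_{\text{v}}=1$ follows just by substitution: $N_{\text{b}}(0)=1$ and $N_{\text{b}}(1)=\rho_1$ reduce \eqref{eq:KSLE} to $K(1)+\rho_1 K(2)=\rho_1 L_1^2+\rho_1\bigl(K_{\text{m}} g L_{\text{p}}+q^{\rho_2}(K_{\text{m}}\rho_2 L_{\text{p}}+L_{\text{p}}+K_{\text{c}} L_{\text{p}})\bigr)$ (noting that the intermediate $(n_{\text{v}}-1)L_{\text{p}}$ term vanishes).

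The main obstacle is not any deep argument but the bookkeeping: correctly splitting per-branch versus total costs at each level, exploiting the RREF structure of the diagonal blocks $\mathbf{B}_{\ell\ell}$ to keep each linear solve at $O(\rho_\ell L_\ell)$ rather than cubic, factoring the branch-independent computations outside the per-candidate loops so they are not over-counted, and using the identity $\rho_1+\cdots+\rho_{n_{\text{v}}+1}=g$ to consolidate the payload-reconstruction cost at the last level into the $K_{\text{m}} g L_{\text{p}}$ term.
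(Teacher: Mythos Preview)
Your proposal is correct and follows essentially the same level-by-level accounting as the paper's proof in the appendix: both multiply the branch count $N_{\text{b}}(\ell-1)$ from Corollary~\ref{cor:Complexity} by a per-branch cost $K(\ell)$ and sum. The only cosmetic differences are in how the sub-terms of $K(\ell)$ are rationalised: the paper attributes the $\rho_\ell L_\ell$ piece to searching the rows of $\mathbf{B}_{\ell\ell}$ and the extra $L_\ell$ to testing $\mathbf{w}_\ell=\mathbf{0}$, whereas you fold both into ``back-substitution plus consistency check''; similarly, the paper splits the $\ell L_\ell$ contribution as $(\ell-1)L_\ell$ for summing and sign-changing once plus $L_\ell\cdot 1$ for subtracting each $\mathbf{e}_{j_\ell}$, while you book it as $L_\ell\cdot\ell$ per-candidate updates. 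Both routes reach the same expressions.
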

The proof of Proposition~\ref{prop:ArithComplexitySLE} is given
in Appendix~\ref{sub:ComplexityDERPIASLE}.

One sees that \eqref{eq:KSLE} is expressed in terms of $\rho_{1},\dots,\rho_{n_{\text{v}}+1}$,
which are the ranks of the matrices $\mathbf{B}_{\ell\ell}$. As expected,
the complexity is exponential in $\rho_{n_{\text{v}}+1}$, which has
to be made as small as possible. The values of $\rho_{1},\dots,\rho_{n_{\text{v}}+1}$
depend on those of $L_{\ell}$ and $g$, as will be seen in Section~\ref{sub:Rankdist}.

\subsection{Arithmetic complexity of DeRPIA-LUT}

\label{sub:Alg2}

The tree obtained with DeRPIA-LUT is the same as that obtained with
DeRPIA-SLE. The main difference comes from the arithmetic complexity
when expanding one branch of the tree. Algorithm~2a is run only once.
Algorithm~2b is run for each branch expansion. An upper bound of
the total arithmetic complexity of DeRPIA-LUT is given by the following
proposition. 
\begin{prop}
\emph{\label{prop:ArithComplexityDERPIALUT}Assume that NeCoRPIA has
been performed with $n_{\text{v}}$ random coding subvectors. Then
an upper bound of the total arithmetic complexity of DeRPIA-LUT is
\begin{align}
K_{\text{LUT}}\left(n_{\text{v}}\right) & =\sum_{\ell=1}^{n_{\text{v}}}\left(K_{\text{LU},\text{1}}\left(\ell\right)+K_{\text{LU},\text{2}}\left(\ell\right)\right)+\sum_{\ell=1}^{n_{\text{v}}}N_{\text{b}}\left(\ell-1\right)K_{\text{LU},3}\left(\ell\right)\nonumber \\
 & +N_{\text{b}}\left(n_{\text{v}}\right)K\left(n_{\text{v}}+1\right)\label{eq:KLUT}
\end{align}
with 
\begin{align*}
K_{\text{LU},\text{1}}\left(\ell\right) & =\rho_{\ell}+L_{\ell}\frac{\rho_{\ell}\left(\rho_{\ell}+1\right)}{2},\\
K_{\text{LU},2}\left(\ell\right) & =\rho_{\ell}\left(L_{\ell}+1+\rho_{\ell}L_{\ell}\right),\\
K_{\text{LU},3}\left(\ell\right) & =K_{m}L_{\ell}(\rho_{1}+...+\rho_{\ell-1})+L_{\ell}\left(\ell+\rho_{\ell}\right),
\end{align*}
and $K\left(n_{\text{v}}+1\right)$ given by \eqref{eq:Knv1}. }
\end{prop}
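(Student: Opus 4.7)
The plan is to mirror the argument behind Proposition~\ref{prop:ArithComplexitySLE}, observing that DeRPIA-LUT traverses exactly the same decoding tree as DeRPIA-SLE. Since the branching structure is unchanged, the number of branches explored at each level is still upper-bounded by $N_{\text{b}}(\ell-1)$ from Corollary~\ref{cor:Complexity}, and the leaves at Level~$n_{\text{v}}+1$ are processed identically to the SLE variant, contributing the term $N_{\text{b}}\left(n_{\text{v}}\right)K\left(n_{\text{v}}+1\right)$. The only accounting that differs is (i)~the one-time preprocessing cost of Algorithm~2a, producing the $K_{\text{LU},1}(\ell)+K_{\text{LU},2}(\ell)$ summands outside the branching, and (ii)~the per-branch expansion cost provided by Algorithm~2b at Levels~$1,\dots,n_{\text{v}}$, which replaces the linear-system solve inside the proof of Proposition~\ref{prop:ArithComplexitySLE} by a look-up.

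For (i), I would examine Algorithm~2a level by level. Building $\Pi_{\ell}$ requires scanning the $\rho_{\ell}$ rows of $\mathbf{B}_{\ell\ell}$: identifying the pivot column $\gamma(\mathbf{u})$ of each row is essentially free in RREF (contributing the additive $\rho_{\ell}$ term), forming $\overline{\mathbf{u}}=\mathbf{u}-\mathbf{e}_{\gamma(\mathbf{u})}$ costs $L_{\ell}$ operations, and testing whether $\overline{\mathbf{u}}\in\Pi_{\ell}$ during the $k$-th insertion requires at most $(k-1)L_{\ell}$ comparisons. Summing $k=1,\dots,\rho_{\ell}$ yields exactly $K_{\text{LU},1}(\ell)=\rho_{\ell}+L_{\ell}\rho_{\ell}(\rho_{\ell}+1)/2$. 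Then, filling the table $\mathcal{C}_{\ell}(\mathbf{v})$ requires, for each of the $\rho_{\ell}$ canonical candidates $\mathbf{e}_i\in\mathbb{F}_{q}^{\rho_{\ell}}$, extracting the associated row, forming $\overline{\mathbf{u}}$ (cost $L_{\ell}+1$), and matching it against the $\rho_{\ell}$ entries of $\Pi_{\ell}$ (cost $\rho_{\ell}L_{\ell}$). This gives the $K_{\text{LU},2}(\ell)$ expression.

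For (ii), I would analyze a single branch expansion at Level~$\ell$ starting from a partial $(\mathbf{w}_1,\dots,\mathbf{w}_{\ell-1})$. Computing the vector $\mathbf{v}=-(\mathbf{w}_1\mathbf{B}_{1,\ell}+\dots+\mathbf{w}_{\ell-1}\mathbf{B}_{\ell-1,\ell})$ amounts to a product between a size-$(\rho_1+\dots+\rho_{\ell-1})$ row vector and a matrix of $L_{\ell}$ columns, costing $K_{\text{m}}L_{\ell}(\rho_1+\dots+\rho_{\ell-1})$. Testing whether $\mathbf{w}_{\ell}=\mathbf{0}$ works (Step~3 of Algorithm~2b) requires checking whether $\mathbf{v}+\mathbf{e}_{j_{\ell}}=\mathbf{0}$ for some canonical vector, and a careful count bounds this by $\ell L_{\ell}$ operations given the sparsity structure inherited from Theorem~\ref{thm:OneElementOnly}. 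Finally, testing $\mathbf{v}\in\Pi_{\ell}$ and fetching $\mathcal{C}_{\ell}(\mathbf{v})$ costs at most $\rho_{\ell}L_{\ell}$ comparisons. Adding these contributions yields $K_{\text{LU},3}(\ell)$; multiplying by the number of branches $N_{\text{b}}(\ell-1)$ and summing over $\ell$ yields the second sum in \eqref{eq:KLUT}.

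The remaining step is to combine the preprocessing, branching, and leaf costs, and to observe that the argument for Level~$n_{\text{v}}+1$ is unchanged from Proposition~\ref{prop:ArithComplexitySLE} because the last subvector is still obtained by exhaustive enumeration together with the checksum verification \eqref{eq:ConstrLast}, producing $K(n_{\text{v}}+1)$ per branch. The main difficulty will be the bookkeeping in step (ii): matching the stated coefficient in $K_{\text{LU},3}(\ell)$ requires correctly exploiting the sparsity guaranteed by Theorem~\ref{thm:OneElementOnly} (so that testing $\mathbf{v}+\mathbf{e}_{j_{\ell}}=\mathbf{0}$ indeed scales with $\ell L_{\ell}$ rather than $L_{\ell}^2$), and ensuring that no hidden cost is absorbed into the look-up beyond what the table already stores. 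The proof can then be formalized along the lines of Appendix~\ref{sub:ComplexityDERPIASLE}, substituting the cheaper LUT-based expansion for the linear-system solve.
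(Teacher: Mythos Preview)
Your proposal is correct and follows essentially the same decomposition as the paper's proof in Appendix~\ref{sec:ArithCompDERPIALUT}: one-time cost of Algorithm~2a yielding $K_{\text{LU},1}(\ell)+K_{\text{LU},2}(\ell)$, per-branch cost of Algorithm~2b yielding $K_{\text{LU},3}(\ell)$ weighted by $N_{\text{b}}(\ell-1)$, and unchanged leaf cost $K(n_{\text{v}}+1)$. The only discrepancy is in how you attribute the $\ell L_{\ell}$ contribution inside $K_{\text{LU},3}(\ell)$: the paper obtains it as $(\ell-1)L_{\ell}$ for the $\ell-1$ vector additions and sign change when forming $\mathbf{v}$, plus $L_{\ell}$ for checking whether $\mathbf{v}$ has a single non-zero entry (Step~3 of Algorithm~2b), whereas you absorb the additions into a single matrix product and then assign $\ell L_{\ell}$ to Step~3 via an appeal to Theorem~\ref{thm:OneElementOnly}, which is not the mechanism the paper uses and is not quite the right justification for that step.
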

The proof of Proposition~\ref{prop:ArithComplexityDERPIALUT} is
provided in Appendix~\ref{sec:ArithCompDERPIALUT}.

Again, as in \eqref{eq:KSLE}, the complexity \eqref{eq:KLUT} depends
on the ranks $\rho_{1},\dots,\rho_{n_{\text{v}}+1}$ of the matrices
$\mathbf{B}_{\ell\ell}$.

\subsection{Complexity comparison}

\label{sub:ComplexityComp}

When comparing $K_{\text{SLE}}\left(n_{\text{v}}\right)$ and $K_{\text{LUT}}\left(n_{\text{v}}\right)$,
one observes that there is a (small) price to be paid for building
the look-up tables corresponding to the first sum in \eqref{eq:KLUT}.
Then, the complexity gain provided by the look-up tables appears in
the expression of $K_{\text{LU},3}\left(\ell\right)$, which is linear
in $L_{\ell}$, whereas $K\left(\ell\right)$ is quadratic in $L_{\ell}$.
Nevertheless, the look-up procedure is less useful when there are
many terminal branches to consider, \emph{i.e.}, when $\rho_{n_{\text{v}}+1}$
is large, since in this case, the term $N_{\text{b}}\left(n_{\text{v}}\right)K\left(n_{\text{v}}+1\right)$
dominates in both expressions of $K_{\text{SLE}}\left(n_{\text{v}}\right)$
and $K_{\text{LUT}}\left(n_{\text{v}}\right)$.

\subsection{Distribution of the ranks $\rho_{1},\dots,\rho_{n_{\text{v}}+1}$}

\label{sub:Rankdist}

Determining the distributions of $\rho_{1},\dots,\rho_{n_{\text{v}}+1}$
in the general case is relatively complex. In what follows, one focuses
on the cases $n_{\text{v}}=1$ and $n_{\text{v}}=2$. Experimental
results for other values of $n_{\text{v}}$ are provided in Section~\ref{sec:PerfEval}.

\subsubsection{NC vector with one component, $n_{\text{v}}=1$}

\label{sub:Nvv1}

In this case, the random NC vectors are gathered in the matrix $\mathbf{V}_{1}$.
Once $\mathbf{Y}$ has been put in RREF, the rows of the matrix $\mathbf{B}_{11}$
of rank $\rho_{1}$ are the $\rho_{1}$ linearly independent vectors
of $\mathbf{V}_{1}$. The distribution of $\rho_{1}$ may be analyzed
considering the classical\emph{ }urn problem described in \cite{Holst1986}.
This problem may be formulated as: assume that $g$ indistinguishable
balls are randomly dropped in $L_{1}$ distinguishable boxes. The
distribution of the number $X_{g}^{L_{1}}$ of boxes that contain
at least one ball is described in \cite{Holst1986} citing De Moivre
as

\begin{align}
P(X_{g}^{L_{1}} & =k)=f\left(g,L_{1},k\right),\label{eq:DeMoivre}
\end{align}
with 
\[
f\left(g,L,k\right)=\frac{L(L\text{\textminus}1)\text{\ensuremath{\dots}}(L\text{\textminus}k+1)}{L^{g}}S\left(g,k\right),
\]
where $S(g,k)$ denotes the Stirling numbers of the second kind \cite{Abramowitz1972}.
In the context of NeCorPIA, a \emph{collision }corresponds to at least
two balls dropped in the same box. The probability mass function (pmf)
of the rank $\rho_{1}$ of $\mathbf{B}_{11}$ is then that of the
number of boxes containing at least one ball and is given by \eqref{eq:DeMoivre}.
The pmf of $\rho_{2}$ is deduced from \eqref{eq:DeMoivre} as 
\begin{align}
P\left(\rho_{2}=g-k\right) & =1-P(\rho_{1}=k)\nonumber \\
 & =1-f\left(g,L_{1},k\right).\label{eq:Prho2Nv1}
\end{align}

\begin{figure}
\begin{centering}
\includegraphics[width=0.5\columnwidth]{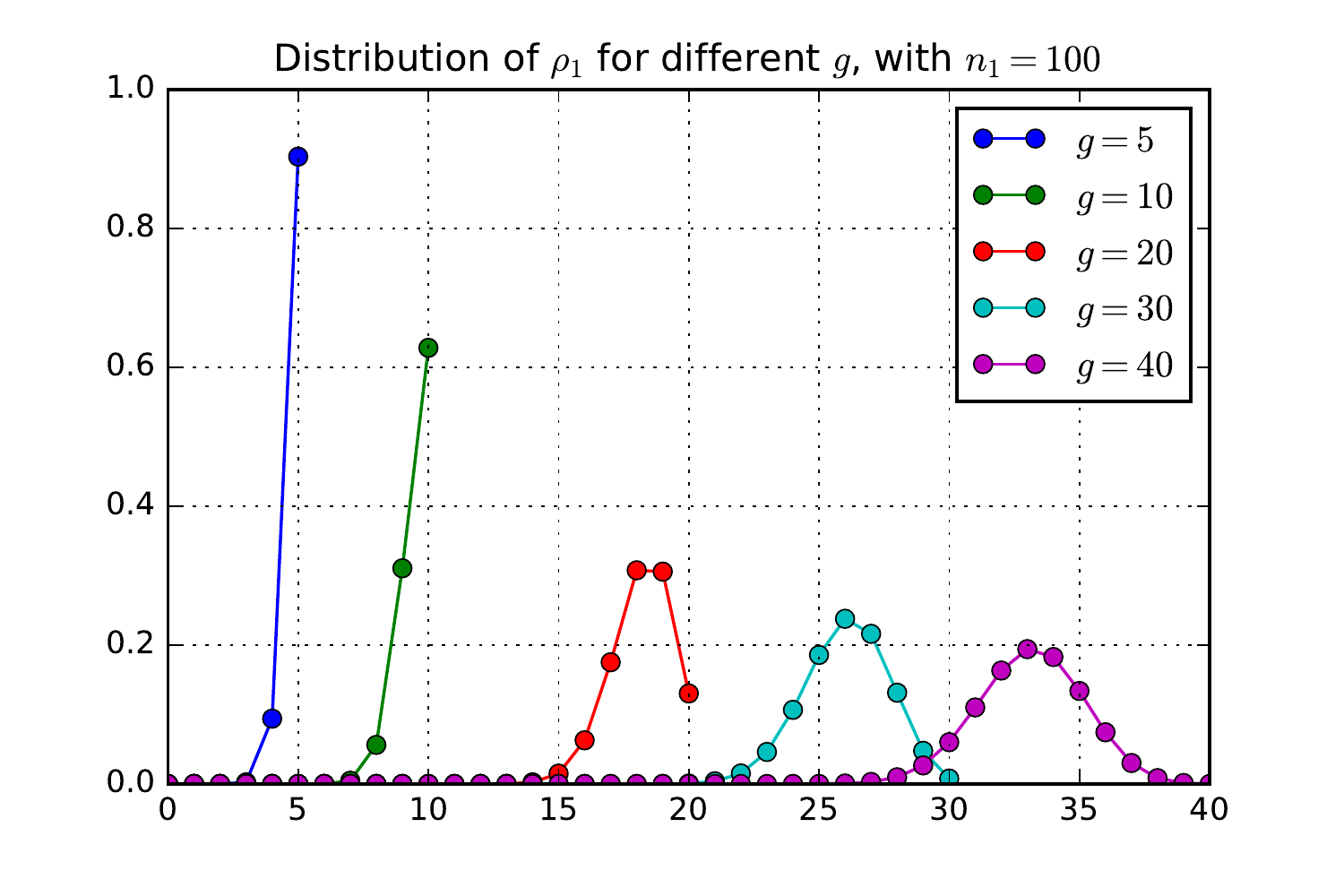}\includegraphics[width=0.5\columnwidth]{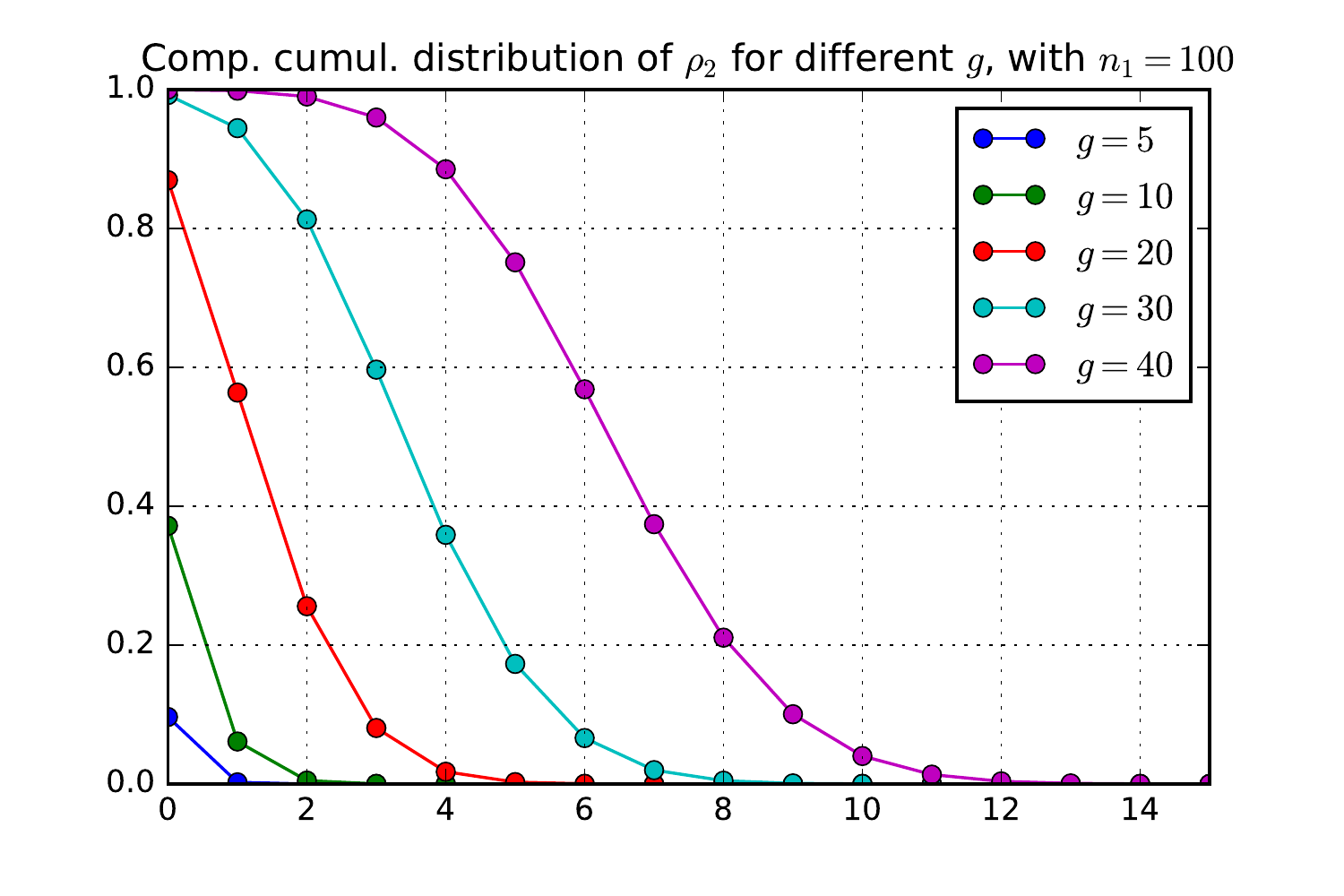} 
\par\end{centering}
\protect\caption{\label{fig:L1RhoComplexity}Case $n_{\text{v}}=1$, theoretical and
experimental distributions of $\rho_{1}$ (left) and theoretical and
practical complementary cumulative distribution function of $\rho_{2}$
(right) for different values of $g$ for $L_{1}=100$.}
\end{figure}
Figure~\ref{fig:L1RhoComplexity} shows the distribution of $\rho_{1}$
(left) and the complementary cumulative distribution function $\rho_{2}$
(right) for different values of $g$ when $L_{1}=100$. One sees that
$\Pr\left(\rho_{2}>1\right)<0.0025$ when $g=5$ and $\Pr\left(\rho_{2}>1\right)<0.062$
when $g=10$. In the decoding complexity, the exponential term of
$K\left(n_{\text{v}}+1\right)$ in both \eqref{eq:KSLE} and \eqref{eq:KLUT}
will thus be of limited impact. When $g=30$, $\Pr\left(\rho_{2}>1\right)$
is larger than 94\%. The exponential term in the complexity becomes
overwhelming in that case.

\subsubsection{NC vector with two components, $n_{\text{v}}=2$}

\label{sub:NVv2}

In this case, the random NC vectors form the matrix $\left(\mathbf{V}_{1},\mathbf{V}_{2}\right)$.
The pmf of $\rho_{1}$ is still given by \eqref{eq:DeMoivre}. Determining
the pmf of the rank $\rho_{2}$ of $\mathbf{B}_{22}$ is much more
complicated. Hence, we will first evaluate an upper bound on the probability
that $\rho_{3}=0$ and an approximation of the pmf of $\rho_{3}$.
Using these results, one will derive an estimate of the pmf of $\rho_{2}$.

In general, to have $\left(\mathbf{V}_{1},\dots,\mathbf{V}_{n_{\text{v}}}\right)$
of full rank, \emph{i.e.}, $\rho_{n_{\text{v}}+1}=0$, it is necessary
that no pair of nodes has generated packets with the same random coding
subvectors. In the corresponding urn problem, one has now to consider
$g$ balls thrown into $L_{1}L_{2}\dots L_{n_{\text{v}}}$ boxes.
In the case $n_{\text{v}}=2$, a node selects two random NC subvectors
$\mathbf{e}_{i}\in\mathbb{F}_{q}^{L_{1}}$ and $\mathbf{e}_{j}\in\mathbb{F}_{q}^{L_{2}}$.
The pair of indices $\left(i,j\right)$ may be interpreted as the
row and column index of the box in which a ball has been dropped,
when $L_{1}L_{2}$ boxes are arranged in a rectangle with $L_{1}$
rows and $L_{2}$ columns. The probability of having $g$ balls thrown
in $L_{1}L_{2}\dots L_{n_{\text{v}}}$ boxes reaching $g$ different
boxes, \emph{i.e.}, of having coding subvectors different for all
$g$ nodes is $P(X_{g}^{L_{1}\dots L_{n_{\text{v}}}}=g)$ and can
again be evaluated with \eqref{eq:DeMoivre}.

A rank deficiency happens when the previous necessary condition is
not satisfied, but it may also happen in other cases, see Example~\ref{exa:Cycle}.
As a consequence, one only gets the following upper bound 
\begin{equation}
P\left(\rho_{1}+\dots+\rho_{n_{\text{v}}}=g\right)\leqslant f\left(g,L_{1}L_{2}\dots L_{n_{\text{v}}},g\right).\label{eq:UpperBoundFullRank}
\end{equation}

If one assumes that the rank deficiency is only due to nodes having
selected the same random coding subvectors, similarly, one may apply
\eqref{eq:DeMoivre} as in the case $n_{\text{v}}=1$ and get the
following approximation

\begin{equation}
P(\rho_{1}+\dots+\rho_{n_{\text{v}}}=k)\approx f\left(g,L_{1}L_{2}\dots L_{n_{\text{v}}},k\right)\label{eq:nv2}
\end{equation}
which leads to
\begin{equation}
P(\rho_{n_{\text{v}}+1}=g-k)\approx f\left(g,L_{1}L_{2}\dots L_{n_{\text{v}}},k\right).\label{eq:approxrhonv1}
\end{equation}
This is an approximation, since even if all nodes have selected different
coding subvectors, we might have a rank deficiency, as illustrated
by Example~\ref{exa:Cycle}.

In the case $n_{\text{v}}=2$, one will now build an approximation
of 
\begin{align}
P\left(\rho_{1}=k_{1},\rho_{2}=k_{2},\rho_{3}=k_{3}\right) & =P\left(\rho_{2}=k_{2}|\rho_{1}=k_{1},\rho_{3}=k_{3}\right)\nonumber \\
 & P\left(\rho_{1}=k_{1}|\rho_{3}=k_{3}\right)P\left(\rho_{3}=k_{3}\right).\label{eq:r1r2r3}
\end{align}
Using the fact $k_{1}+k_{2}+k_{3}=g$, one has 
\[
P\left(\rho_{2}=k_{2}|\rho_{1}=k_{1},\rho_{3}=k_{3}\right)=\begin{cases}
1 & \text{if }k_{2}=g-k_{1}-k_{3}\\
0 & \text{else.}
\end{cases}
\]
One will assume that the only dependency between $\rho_{1}$, $\rho_{2}$,
and $\rho_{3}$ that has to be taken into account is that $\rho_{1}+\rho_{2}+\rho_{3}=g$
and that $P\left(\rho_{3}=k_{3}\right)$ is given by \eqref{eq:approxrhonv1}.
Then \eqref{eq:r1r2r3} becomes 
\begin{align}
P\left(\rho_{1}=k_{1},\rho_{2}=g-k_{1}-k_{3},\rho_{3}=k_{3}\right)= & P\left(\rho_{1}=k_{1}\right)f\left(g,L_{1}L_{2},g-k_{3}\right).\label{eq:r1r2r3-1}
\end{align}
Combining \eqref{eq:DeMoivre} and \eqref{eq:r1r2r3-1}, one gets
\begin{align}
P\left(\rho_{1}=k_{1},\rho_{2}=g-k_{1}-k_{3},\rho_{3}=k_{3}\right)= & f\left(g,L_{1},k_{1}\right)f\left(g,L_{1}L_{2},g-k_{3}\right),\label{eq:approxr1r2r3}
\end{align}
which may also be written as

\begin{align}
P\left(\rho_{1}=k_{1},\rho_{2}=k_{2},\rho_{3}=g-k_{1}-k_{2}\right)= & f\left(g,L_{1},k_{1}\right)f\left(g,L_{1}L_{2},k_{1}+k_{2}\right).\label{eq:approxr1r2r3-1}
\end{align}

Figure~\ref{fig:Casenv2} (left) shows the joint pmf $P\left(\rho_{1}=k_{1},\rho_{2}=k_{2}\right)$
deduced from \eqref{eq:r1r2r3} as a function of $k_{1}$ and $k_{2}$
for different values of $g$ with $L_{1}=50$ and $L_{2}=50.$ Figure~\ref{fig:Casenv2}
(right) shows the complementary CDF of $\rho_{3}$ again deduced from
\eqref{eq:r1r2r3} for different values of $g$ with $L_{1}=50$ and
$L_{2}=50.$ Now, when $g=30$, $\Pr\left(\rho_{3}>1\right)$ is about
1.3\%. When $g=40$, $\Pr\left(\rho_{3}>1\right)$ is about 3.8\%.
In both cases, the exponential term of $K\left(n_{\text{v}}+1\right)$
in \eqref{eq:KSLE} and \eqref{eq:KLUT} will thus be of limited impact.
Considering $n_{\text{v}}=2$ allows one to consider much larger generations
than with $n_{\text{v}}=1$.

\begin{figure}
\begin{centering}
\includegraphics[width=0.5\columnwidth]{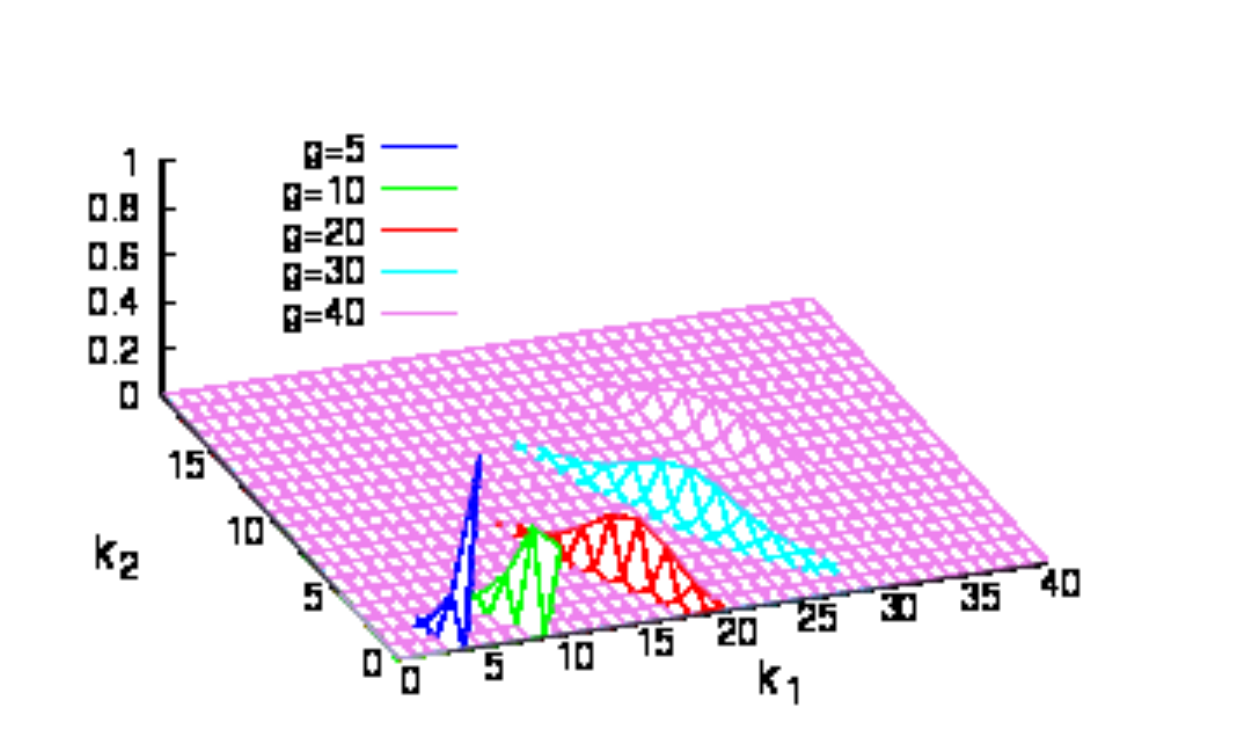}\includegraphics[width=0.5\columnwidth]{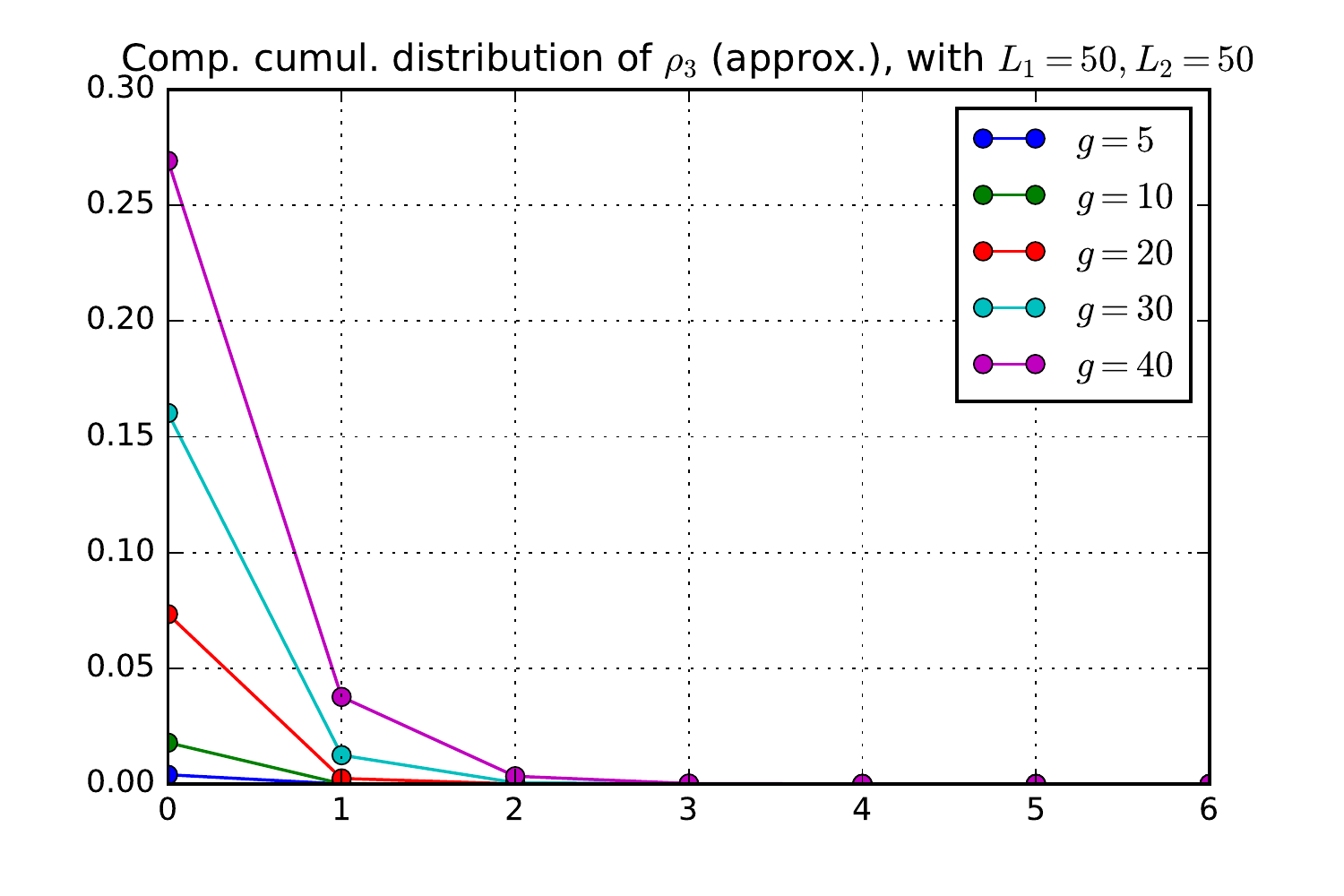} 
\par\end{centering}
\protect\caption{Case $n_{\text{v}}=2$, joint distribution of $\rho_{1}$ and $\rho_{2}$
(left) and approximated complementary cumulative distribution function
of $\rho_{3}$ deduced from \eqref{eq:approxrhonv1} (right) for different
values of $g$ for $L_{1}=50$ and $L_{2}=50$.\label{fig:Casenv2}}
\end{figure}
Finally, Figure~\ref{fig:rho3L1L2} shows the complementary CDF of
$\rho_{3}$ for $g=40$ and different values of the pair $\left(L_{1},L_{2}\right)$
such that $L_{1}+L_{2}=100.$ Choosing $L_{1}=L_{2}$ provides the
smallest probability of rank deficiency, which is consistent with
the hypothesis that the rank deficiency is mainly due to nodes having
selected the same random coding subvectors. The maximum of the product
$L_{1}\dots L_{n_{\text{v}}}$ with a constraint on $L_{1}+\dots+L_{n_{\text{v}}}$
is obtained taking $L_{1}=L_{2}=\dots=L_{n_{\text{v}}}$.

\begin{figure}
\begin{centering}
\includegraphics[width=0.5\textwidth]{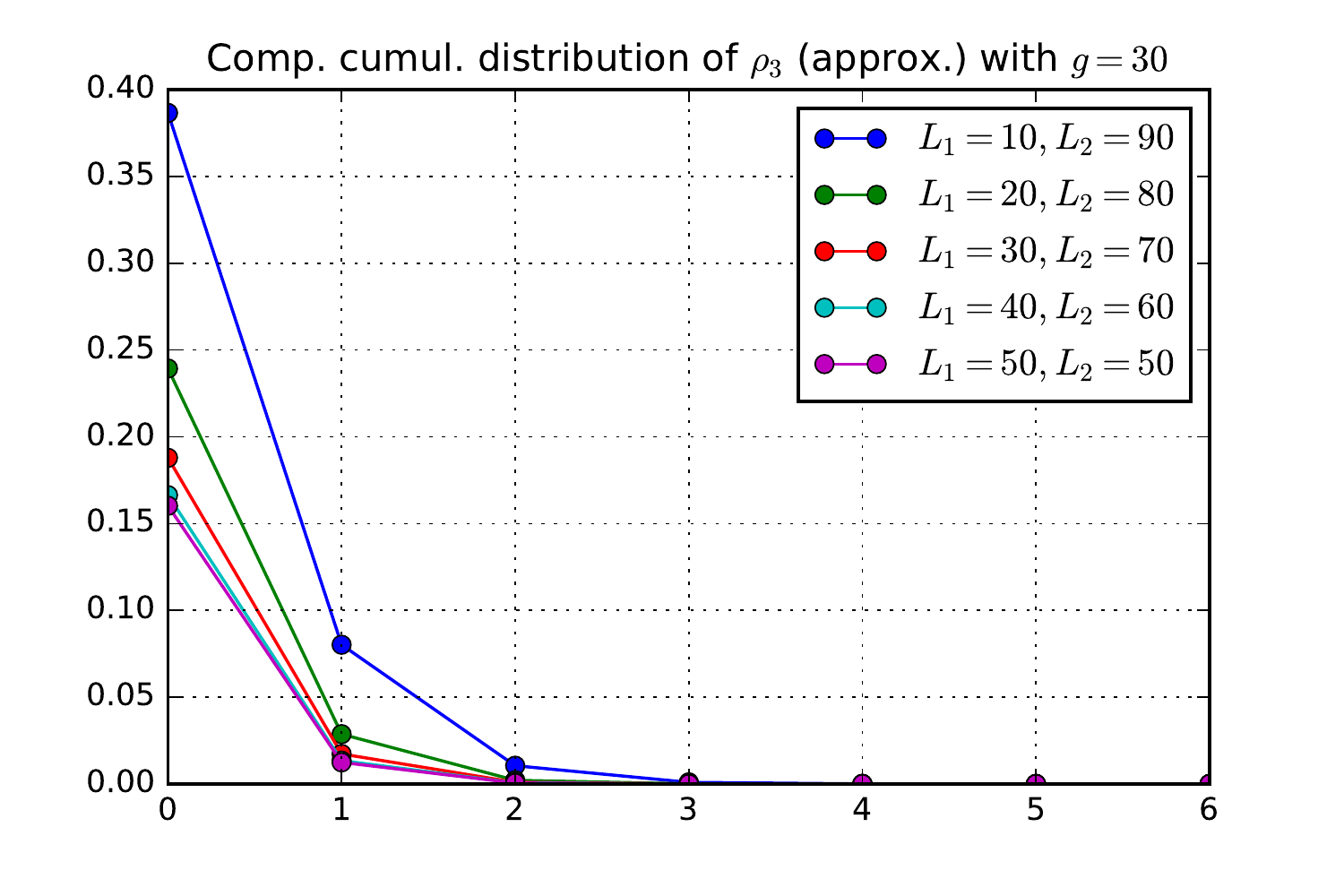} 
\par\end{centering}
\protect\caption{Case $n_{\text{v}}=2$, complementary cumulative distribution function
of $\rho_{3}$ deduced from \eqref{eq:approxrhonv1} for $g=30$ and
different values of $L_{1}$ and $L_{2}$ such that $L_{1}+L_{2}=100$.\label{fig:rho3L1L2}}
\end{figure}

\section{Performance Evaluation}

\label{sec:PerfEval}

Several simulation scenarios have been considered to evaluate the
performance of NeCorPIA in terms of complexity and decoding error
probability. In each simulation run, a set of $g$ packets $\mathbf{x}_{1},\ldots,\mathbf{x}_{g}$
containing $n_{\text{v}}$ random coding subvectors with elements
in $\mathbb{F}_{2}$ and with the same STS information are generated.
The payload $\mathbf{\boldsymbol{\pi}}=\left(\boldsymbol{\xi},\mathbf{d}\right)$
is replaced by a unique sufficiently long packet identifier to ensure
that the $g$ randomly generated packets are linearly independent.
Hash functions producing hash of different lengths are considered.
In this section, the NC operations are simulated by the generation
of a full-rank $g\times g$ random coding matrix $\mathbf{A}$ with
elements in $\mathbb{F}_{2}$. The received packets are stored in
a matrix \textbf{$\mathbf{Y=AX}$} of full rank. Writing $\mathbf{Y}$
in RREF, one obtains \eqref{eq:Echelon}.

In all simulations, the total length $\sum_{\ell=1}^{n_{\text{v}}}L_{\ell}$
of the NC subvectors is fixed at $100$.

Upon completion of Algorithm~1, with the final list of candidate
unmixing vectors satisfying \eqref{eq:Constr1}-\eqref{eq:ConstrLast},
one is always able to get $\mathbf{x}_{1},\ldots,\mathbf{x}_{g}$,
since $\mathbf{A}$ is of full rank. Nevertheless, other unmixed packets
may be obtained, even if they satisfy all the previous constraints
when the rank of the NC header is not sufficient and the hash was
inefficient. To evaluate the probability of such event, one considers
first the number $n_{\text{w}}$ of different unmixing vectors provided
by Algorithm~$1$. Among these $n_{\text{w}}$ vectors, $g$ of them
lead to the generated packets, the $n_{\text{w}}-g$ others to erroneous
packets. Considering a hash of $L_{\text{h}}$ elements of $\mathbb{F}_{q}$,
the probability of getting a given hash for a randomly generated payload
is uniform and equal to $1/q^{L_{\text{h}}}$. The probability that
one of the $n_{\text{w}}-g$ erroneous packets has a satisfying hash
is thus $1/q^{L_{\text{h}}}$ and the probability that none of them
has a satisfying hash is $\left(1-1/q^{L_{\text{h}}}\right)^{n_{\text{w}}-g}$.
The probability of decoding error is then 
\begin{equation}
P_{\text{e}}=1-\left(1-1/q^{L_{\text{h}}}\right)^{n_{\text{w}}-g}.\label{eq:DecError}
\end{equation}
An upper bound for $n_{\text{w}}$ is provided by $N_{\text{b}}\left(n_{\text{v}}+1\right)$.
In what follows, this upper bound is used to get an upper bound for
$P_{\text{e}}$ evaluated as 
\begin{equation}
\overline{P}_{\text{e}}=E\left(1-\left(1-1/q^{L_{\text{h}}}\right)^{N_{\text{b}}\left(n_{\text{v}}+1\right)-g}\right),\label{eq:approxPe}
\end{equation}
where the expectation is taken either considering the pmf of $\left(\rho_{1},\dots,\rho_{n_{\text{V}}+1}\right)$
or an estimated pmf obtained from experiments.

The other metrics used to evaluate the performance of NeCoRPIA are: 
\begin{itemize}
\item the number of branches explored in the decoding tree before being
able to decode all the packets $\mathbf{x}_{1},\ldots,\mathbf{x}_{g}$, 
\item the arithmetic complexity of the decoding process. 
\end{itemize}
In what follows, to evaluate the complexity, one assumes that $L_{\text{h}}=2$~bytes
or $L_{\text{h}}=4$~bytes and that $L_{\text{x}}=256$~bytes. For
the various constants in the arithmetic complexity, one chooses $K_{\text{m}}=2$,
$K_{\text{R}}=3$, see, \emph{e.g.}, \cite{Cormen2009}, and $K_{\text{c}}=3$,
see \cite{Touch1995}.

Averages are taken over 1000 realizations.

\subsection{Case $n_{\text{v}}=1$}

The theoretical and experimental distributions of $\rho_{1}$, evaluated
using \eqref{eq:DeMoivre}, for $L_{1}=100$ and $g=20$, are represented
in Figure~\ref{fig:L1RhoComplexity} (left). A very good match between
theoretical and experimental distributions is observed.

\begin{figure}
\begin{centering}
\includegraphics[width=0.5\columnwidth]{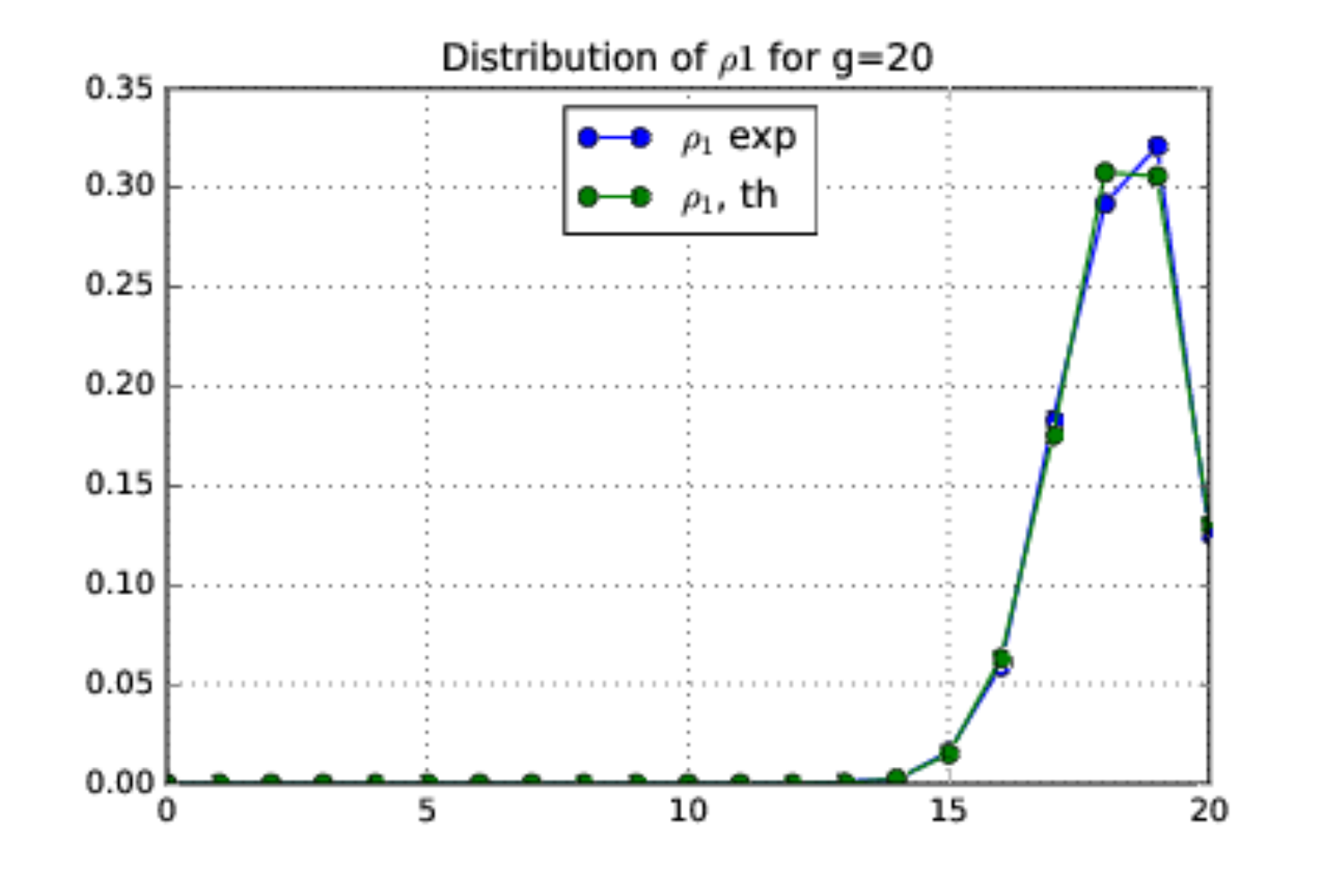}\includegraphics[width=0.5\columnwidth]{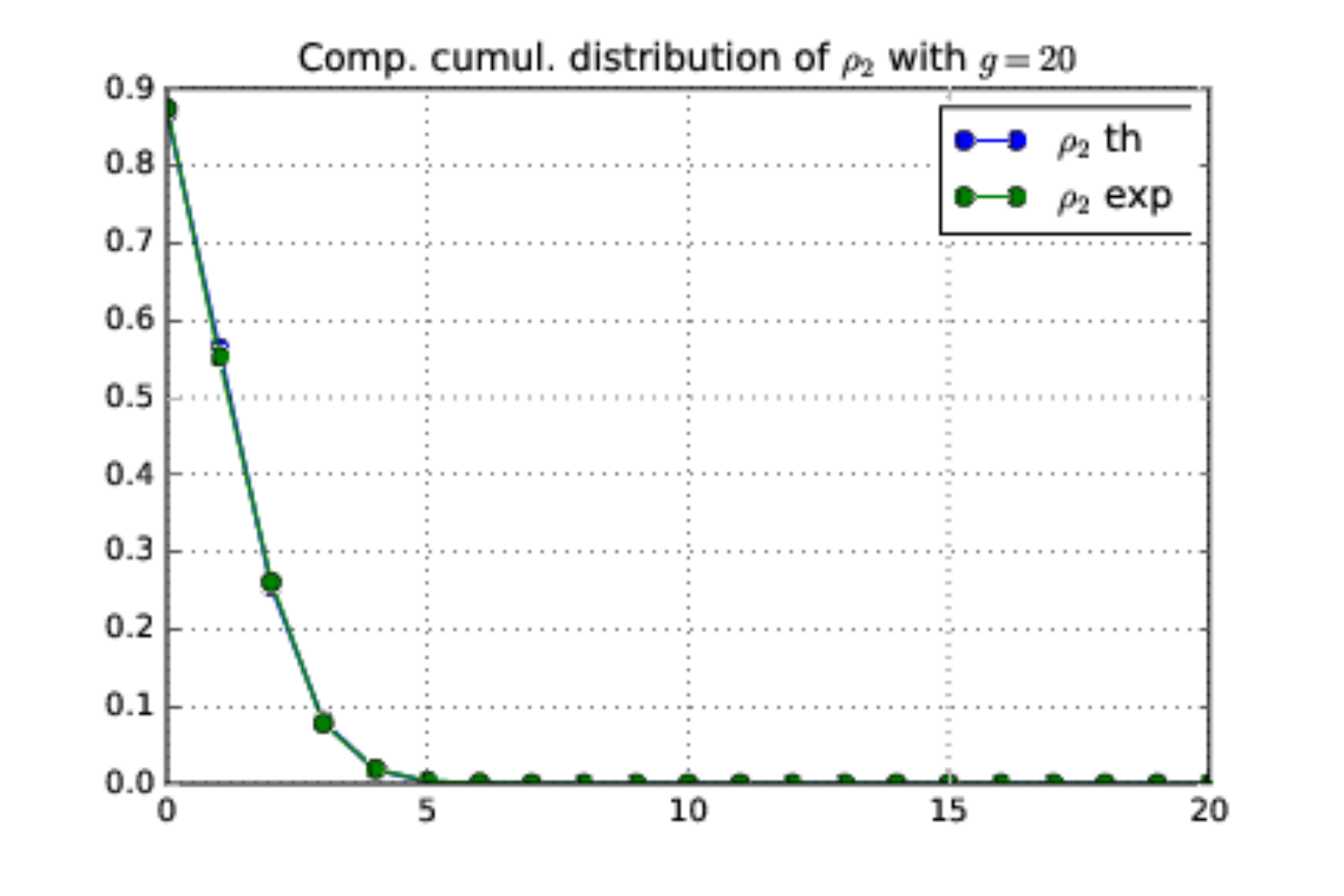} 
\par\end{centering}
\protect\caption{Case $n_{\text{v}}=1$, theoretical and experimental distributions
of $\rho_{1}$ (left) when $g=20$ and theoretical and practical complementary
cumulative distribution function of $\rho_{2}$ (right) when $g=20$
for $L_{1}=100$.}
\end{figure}
For a fixed value of $\rho_{1}$, the upper-bound \eqref{eq:Nb} for
the number of branches in the decoding tree boils down to 
\[
N_{\text{b}}=\rho_{1}+\rho_{1}q^{g-\rho_{1}}.
\]
One is then able to evaluate the average value of $N_{\text{b}}$
\begin{equation}
E\left(N_{\text{b}}\right)=\sum_{\rho_{1}=0}^{g}f\left(g,L_{1},\rho_{1}\right)\left(\rho_{1}+\rho_{1}q^{g-\rho_{1}}\right).\label{eq:ENb}
\end{equation}

\begin{figure}
\begin{centering}
\includegraphics[width=0.5\columnwidth]{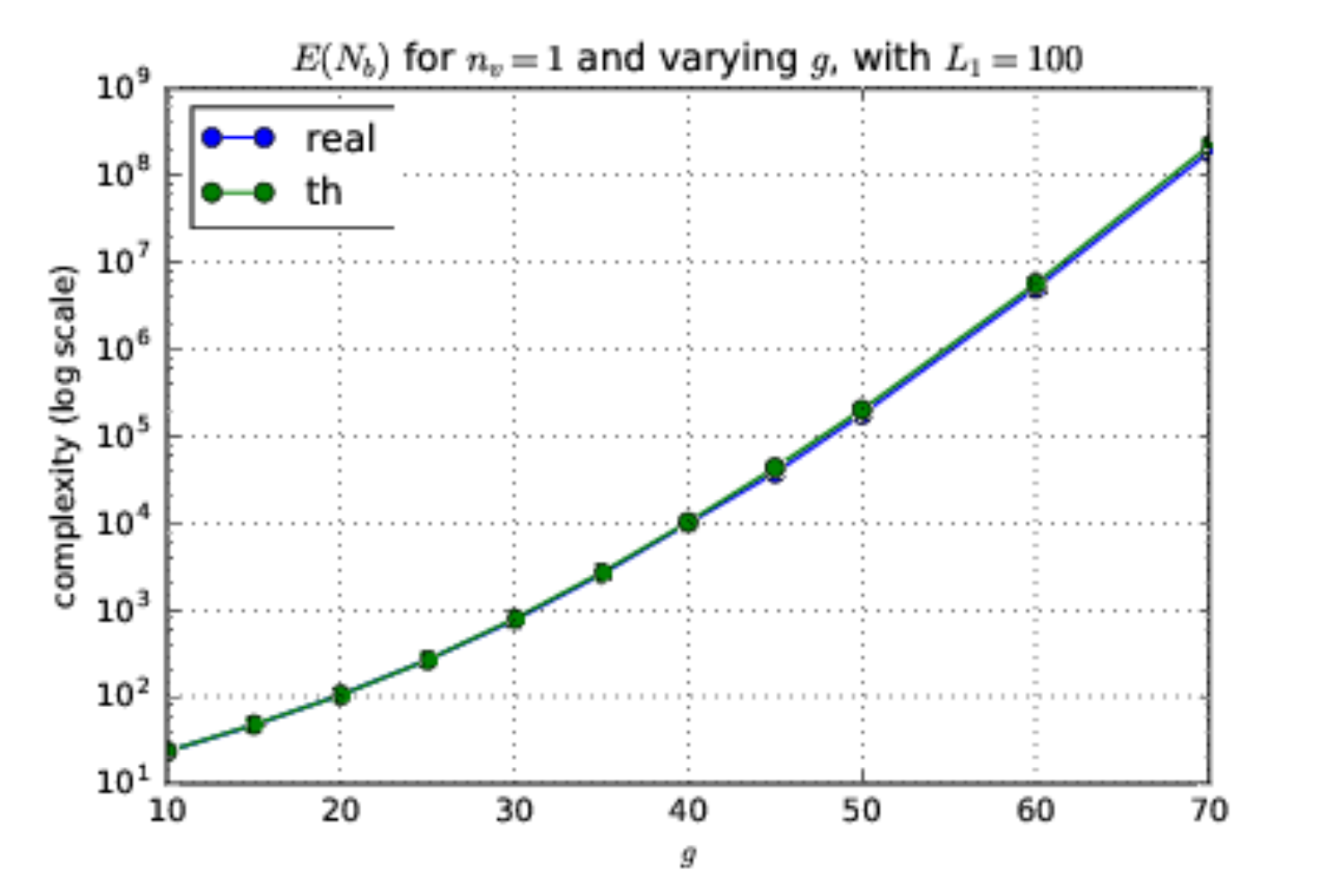}\includegraphics[width=0.5\columnwidth]{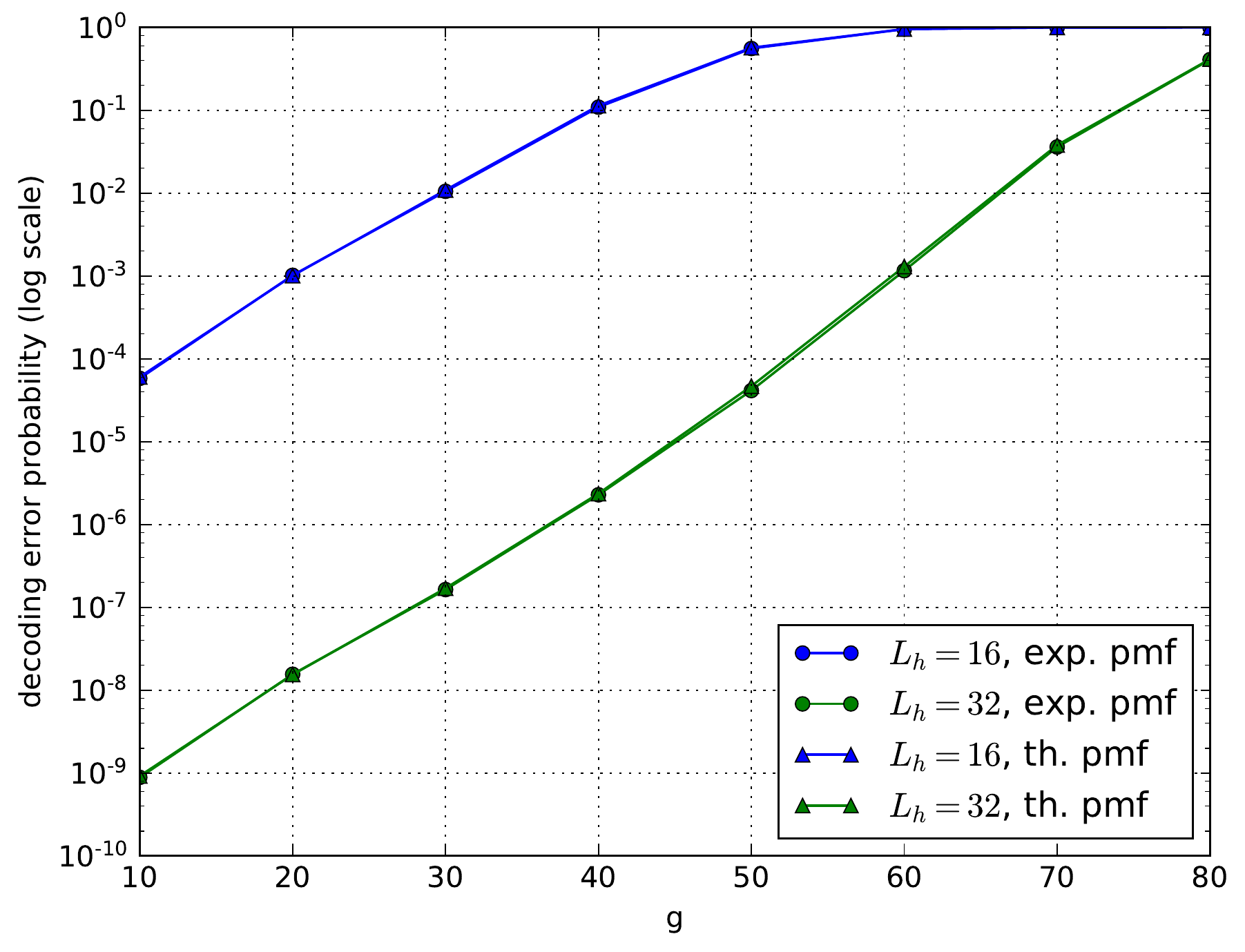} 
\par\end{centering}
\protect\caption{\label{fig:ENBErrornv1}Case $n_{\text{v}}=1$, theoretical and practical
values of $E\left(N_{\text{b}}\right)$ (left) of the decoding error
probability $\overline{P}_{\text{e}}$ with $L_{\text{h}}=16$ and
$L_{\text{h}}=32$ (right) for different values of $g$ when $L_{1}=100$.}
\end{figure}
Figure~\ref{fig:ENBErrornv1} represents the theoretical and experimental
values of $E\left(N_{\text{b}}\right)$ (left) and of $\overline{P}_{\text{e}}$
(right) as a function of $g$ when $L=100.$ The expression of $E\left(N_{\text{b}}\right)$
provided by \eqref{eq:ENb} as well as that of $\overline{P}_{\text{e}}$
given by \eqref{eq:approxPe} match well the experimental results.
As expected, for a given value of $g$, the decoding error probability
is much less with $L_{\text{\text{h}}}=32$ than with $L_{\text{\text{h}}}=16$.
When $g=20$ and $L_{\text{h}}=16$, one gets $\overline{P}_{\text{e}}=10^{-3}$,
which may be sufficiently small for some applications.

The arithmetic complexity of both decoding algorithms is then compared
to that of a plain NC decoding. The latter requires only a single
RREF (once the nodes have agreed on the NC vector they should select).
Since both DeRPIA decoding algorithms also require an initial RREF,
the complexity ratio is always larger than one.

In the case $n_{\text{v}}=1$, the arithmetic complexity of decoding
of plain NC-encoded packets is thus

\[
A_{\text{NC}}=3g^{2}L_{\text{x}},
\]
where the length $L_{\text{x}}$ is expressed as the number of elements
of $\mathbb{F}_{q}$ in which the NC operations are performed. The
arithmetic complexities of DeRPIA-SLE and DeRPIA-LUT depend on $\rho_{1}$
and $\rho_{2}$. Their expected values are 
\[
A_{\text{SLE}}\left(n_{\text{v}}=1\right)\simeq A_{\text{NC}}+E\left[\rho_{1}L_{1}^{2}+\rho_{1}L_{\text{p}}\left(K_{\text{m}}g+q^{\rho_{2}}\left(K_{\text{m}}\rho_{2}+1+K_{\text{c}}\right)\right)\right],
\]
and 
\begin{align*}
A_{\text{LUT}}\left(n_{\text{v}}=1\right) & \simeq A_{\text{NC}}+\sum_{\ell=1}^{2}E\left[\rho_{1}+L_{1}\frac{\rho_{1}\left(\rho_{1}+1\right)}{2}+\rho_{1}\left(L_{1}+1+\rho_{1}L_{1}\right)\right.\\
 & \left.+L_{1}\left(1+\rho_{1}\right)+\rho_{1}L_{\text{p}}\left(K_{\text{m}}g+q^{\rho_{2}}\left(K_{\text{m}}\rho_{2}+1+K_{\text{c}}\right)\right)\right]
\end{align*}
where the expectation is evaluated using \eqref{eq:DeMoivre} and
\eqref{eq:Prho2Nv1}.

\begin{figure}
\begin{centering}
\includegraphics[width=0.5\textwidth]{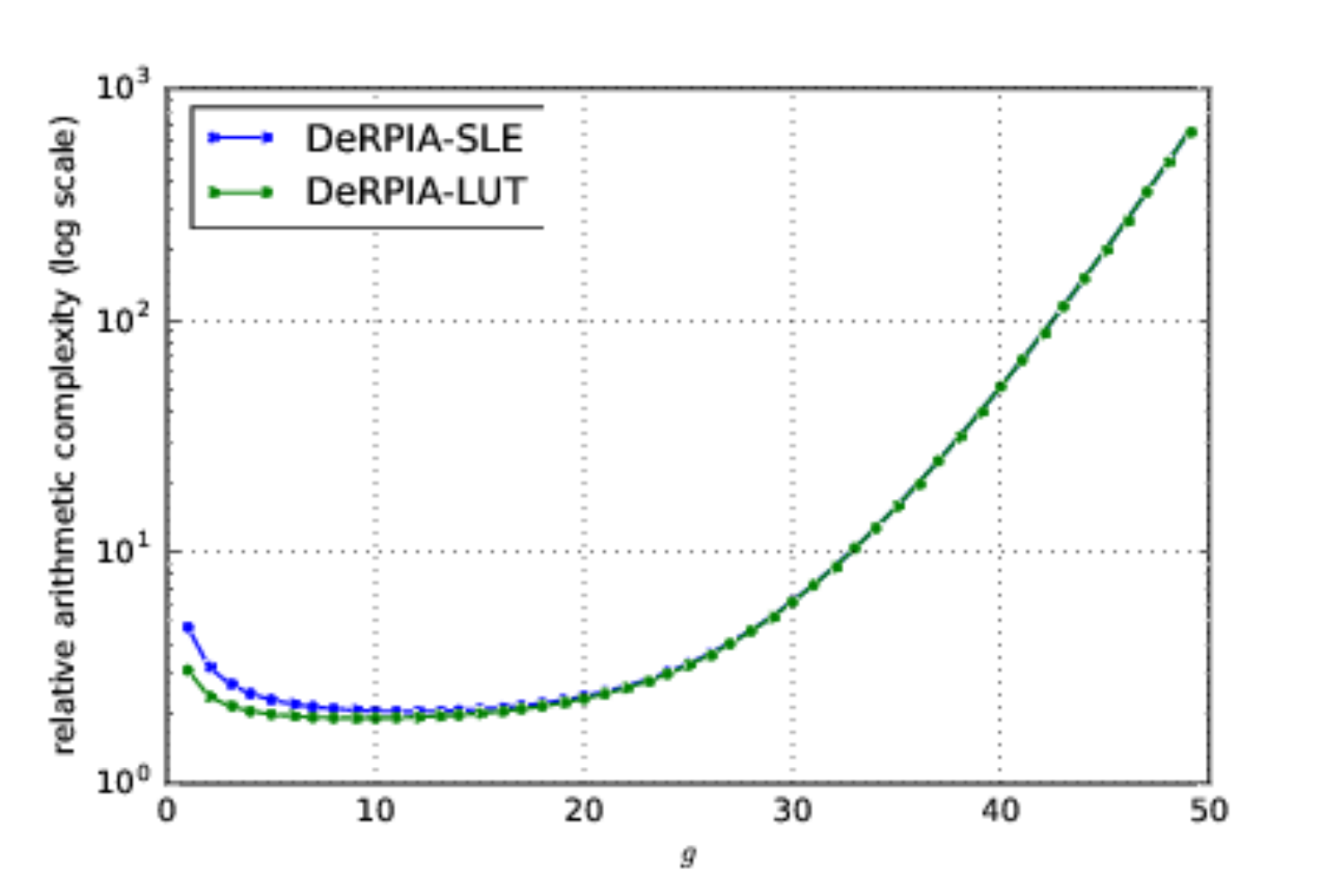} 
\par\end{centering}
\protect\caption{\label{fig:ComplexityCompnv1}Case $n_{\text{v}}=1$, evolution of
the ratio of the expected decoding complexity of DeRPIA-SLE and DeRPIA-LUT
with respect to the complexity of a simple RREF transformation for
different value of $g$ when $L_{1}=100$.}
\end{figure}
Figure~\ref{fig:ComplexityCompnv1} compares the relative arithmetic
complexity of the two variants of DeRPIA with respect to $A_{\text{NC}}$.
One sees that $A_{\text{SLE}}$ and $A_{\text{LUT}}$ are about twice
that of $A_{\text{NC}}$ when $g\leqslant25$. When $g\geqslant30$,
the complexity increases exponentially. $A_{\text{LUT}}$ is only
slightly less than $A_{\text{SLE}}$ for small values of $g.$ Again,
for values of $g$ larger than $25$, the exponential term dominates.

\subsection{Case $n_{\text{v}}=2$}

In this case, an expression is only available for the pmf of $\rho_{1}$
as a function of $g$ and $L_{1}$, see Section~\ref{sub:NVv2}.
Figure~\ref{fig:L2Rho} represents the histograms of $\rho_{1}$,
$\rho_{2}$, and $\rho_{3}$ for different values of $g$, as well
as the theoretical pmf of $\rho_{1}$.

\begin{figure}
\begin{centering}
\includegraphics[clip,width=1\columnwidth]{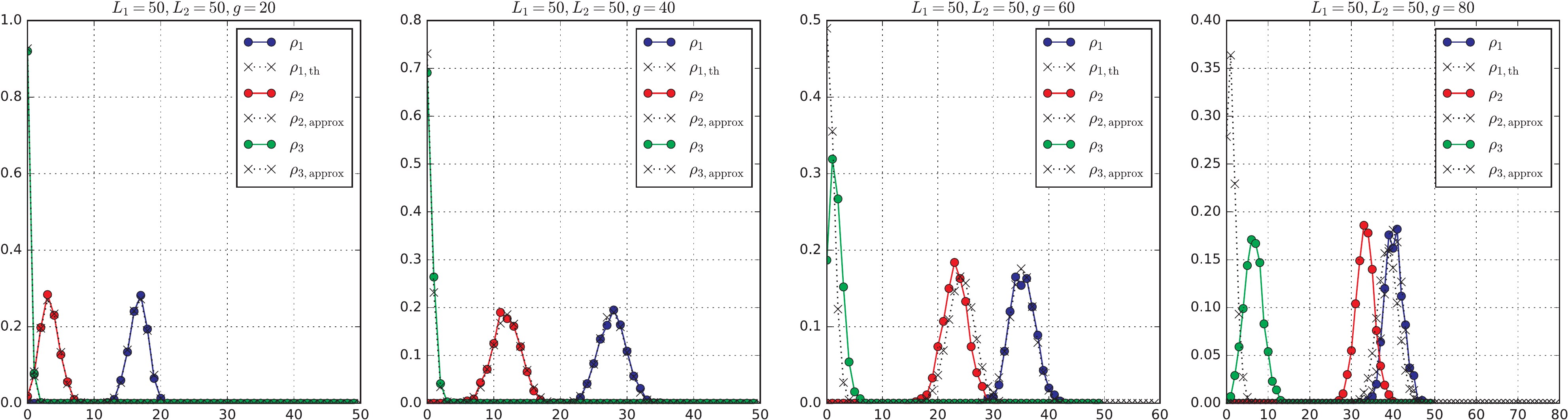} 
\par\end{centering}
\protect\caption{Case $n_{\text{v}}=2$, theoretical pmf of $\rho_{1}$ and histograms
of $\rho_{1}$, $\rho_{2}$, and $\rho_{3}$ for different values
of $g$ when $L_{1}=L_{2}=50$. \label{fig:L2Rho}}
\end{figure}
Figure~\ref{fig:L2Rho-zoom} shows that the approximation of $P\left(\rho_{3}=k\right)$
provided by \eqref{eq:approxrhonv1} matches well the histogram of
$\rho_{3}$ when $g\leqslant40$. When $g=80$, the approximation
is no more valid: the effect of cycles, illustrated in Example~\ref{exa:Cycle},
becomes significant.

\begin{figure}
\begin{centering}
\includegraphics[clip,width=1\columnwidth]{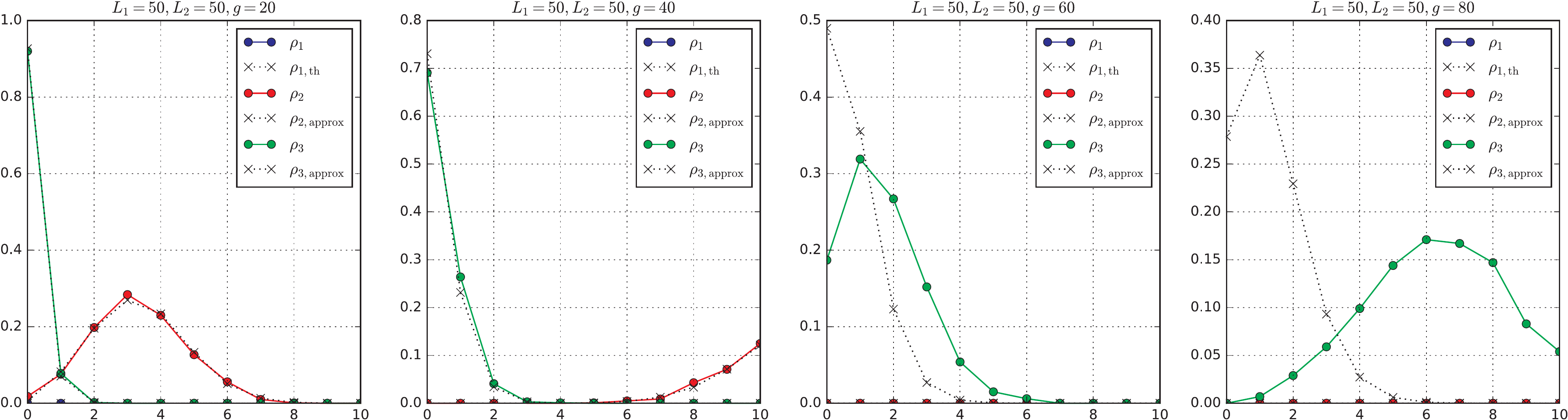} 
\par\end{centering}
\protect\caption{Case $n_{\text{v}}=2$, distribution of $\rho_{1}$ and histograms
of $\rho_{1}$, $\rho_{2}$, and $\rho_{3}$ for different values
of $g$ for $L_{1}=L_{2}=50$ (zoom of Fig~\ref{fig:L2Rho}). \label{fig:L2Rho-zoom}}
\end{figure}
For the complexity evaluation, in the case $N_{\text{v}}=2$, for
a given value of $\rho_{1}$ and $\rho_{2}$, the upper bound \eqref{eq:Nb}
becomes\emph{ 
\begin{equation}
N_{\text{b}}=\rho_{1}+\rho_{1}\left(\rho_{2}+1\right)+\rho_{1}\left(\rho_{2}+1\right)q^{g-\rho_{1}-\rho_{2}}.\label{eq:Nn-1}
\end{equation}
}The average value of $N_{\text{b}}$ is then evaluated using the
approximation \eqref{eq:approxr1r2r3-1} of the joint pmf of $\left(\rho_{1},\rho_{2},\rho_{3}\right)$
as 
\begin{equation}
E\left(N_{\text{b}}\right)=\sum_{\rho_{1}+\rho_{2}\leqslant g}f\left(g,L_{1},\rho_{1}\right)f\left(g,L_{1}L_{2},\rho_{1}+\rho_{2}\right)\left(\rho_{1}+\rho_{1}\left(\rho_{2}+1\right)+\rho_{1}\left(\rho_{2}+1\right)q^{g-\rho_{1}-\rho_{2}}\right).\label{eq:ENb-1}
\end{equation}

\begin{figure}
\begin{centering}
\includegraphics[width=0.5\columnwidth]{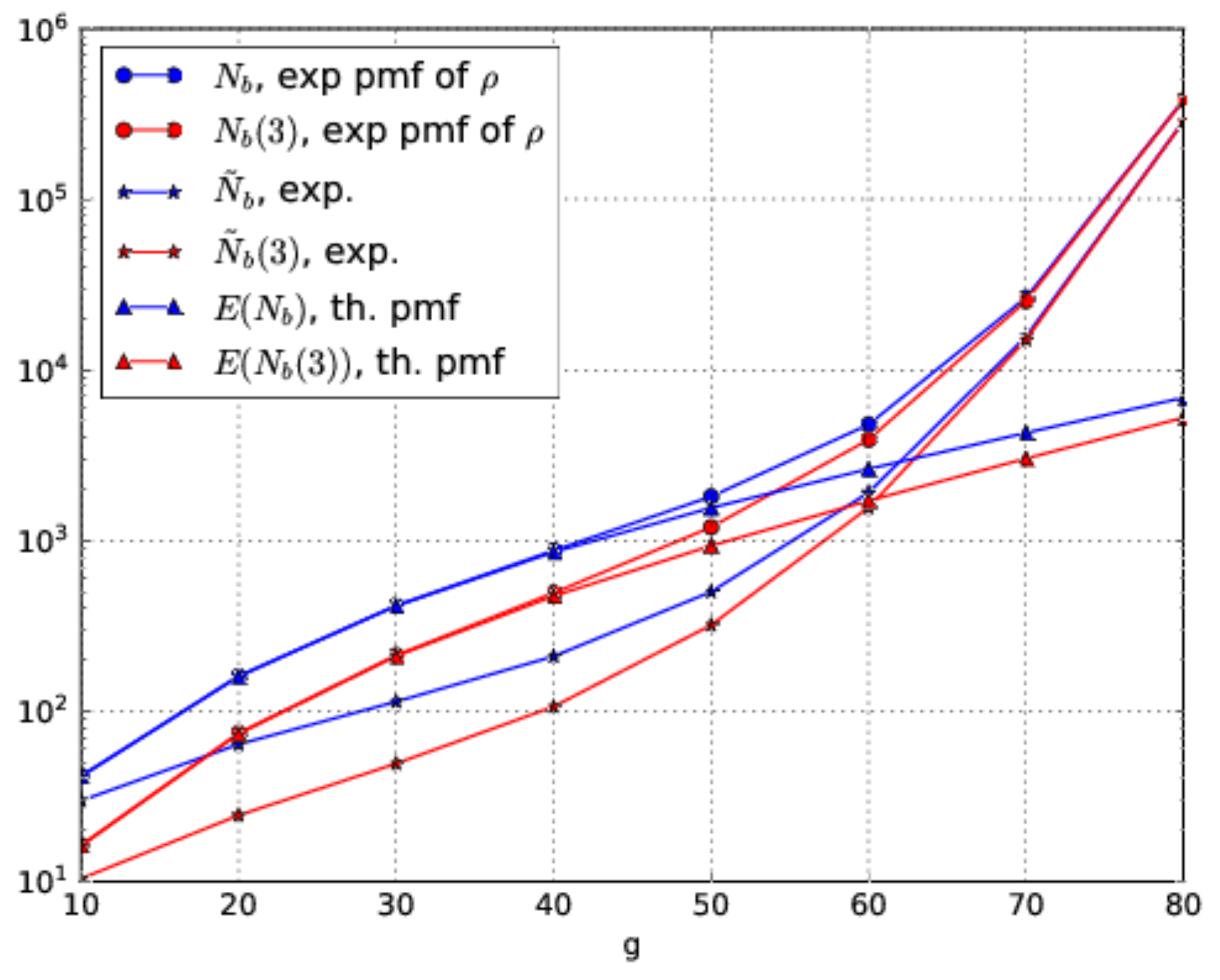}\includegraphics[width=0.5\columnwidth]{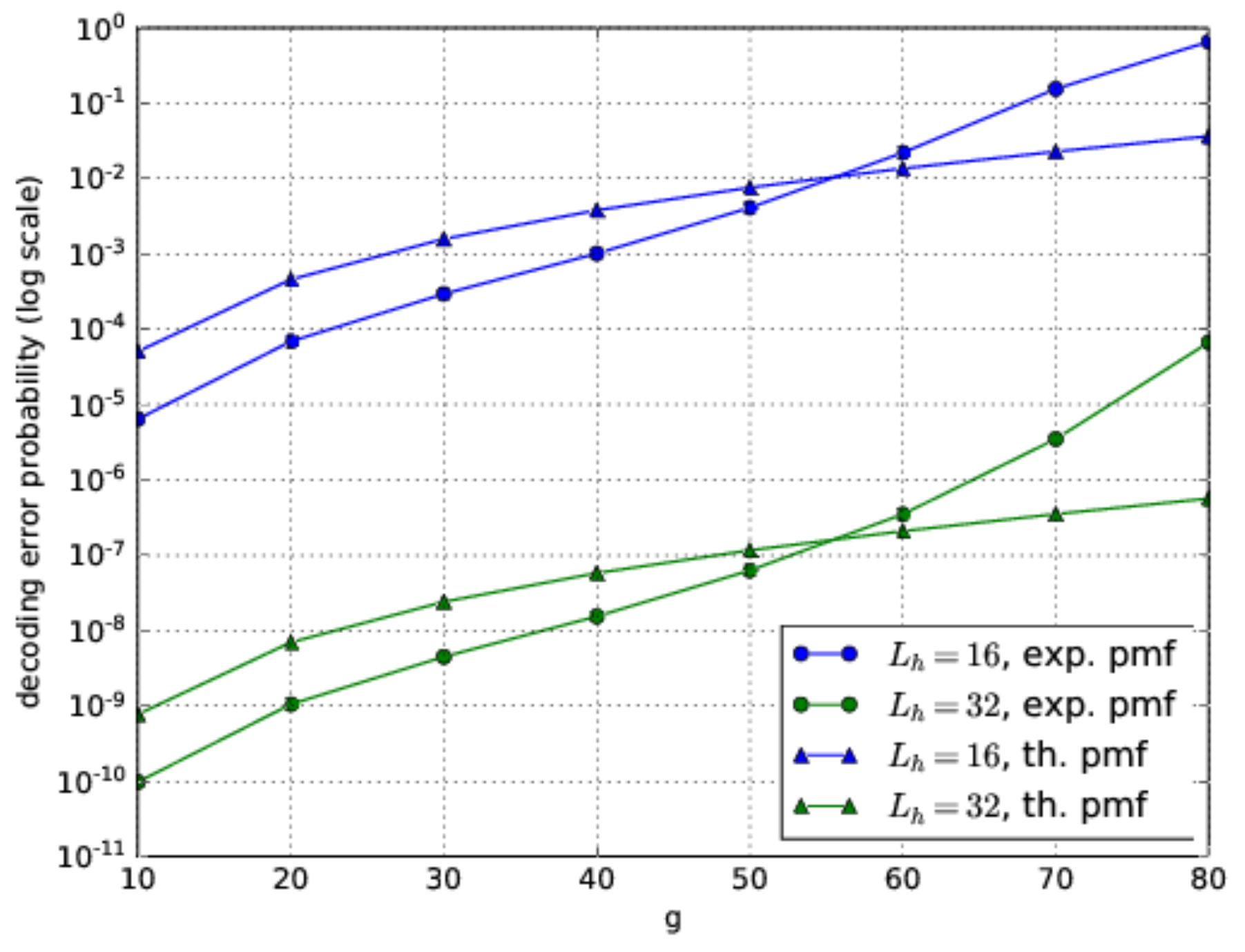} 
\par\end{centering}
\protect\caption{\label{fig:ENBErrornv2}Case $n_{\text{v}}=2$, theoretical and experimental
values of $E\left(N_{\text{b}}\left(3\right)\right)$ and $E\left(N_{\text{b}}\right)$
(left) and of the decoding error probability $\overline{P}_{\text{e}}$
with $L_{\text{h}}=16$ and $L_{\text{h}}=32$ (right) for different
values of $g$ when $L_{1}=50$ and $L_{2}=50$.}
\end{figure}
Figure~\ref{fig:ENBErrornv2} (left) represents the theoretical and
experimental values of $E\left(N_{\text{b}}\left(3\right)\right)$
and $E\left(N_{\text{b}}\right)$ as a function of $g$ when $L_{1}=50$
and $L_{2}=50$. The theoretical values of $E\left(N_{\text{b}}\left(3\right)\right)$
and $E\left(N_{\text{b}}\right)$ are evaluated in two ways: First,
with the estimated pmf of $\left(\rho_{1},\rho_{2},\rho_{3}\right)$
given by \eqref{eq:approxr1r2r3-1} and second with the estimate of
this pmf obtained from experiments. The average number $\widetilde{N}_{\text{b}}$
of branches in the tree and in the last level of the tree $\widetilde{N}_{\text{b}}\left(3\right)$
obtained from experiments are also provided. The value of $E\left(N_{\text{b}}\left(3\right)\right)$
and $E\left(N_{\text{b}}\right)$ evaluated from the experimental
pmf are good upper-bounds for $\widetilde{N}_{\text{b}}\left(3\right)$
and $\widetilde{N}_{\text{b}}$. The values of $E\left(N_{\text{b}}\left(3\right)\right)$
and $E\left(N_{\text{b}}\right)$ obtained from \eqref{eq:approxr1r2r3-1}
match well those obtained from the estimated pmf of $\left(\rho_{1},\rho_{2},\rho_{3}\right)$
only for $g\leqslant50$. When $g\geqslant60$, the lack of accuracy
of the theoretical pmf of $\rho_{3}$ becomes significant: $\widetilde{N}_{\text{b}}\left(3\right)$
and $\widetilde{N}_{\text{b}}$ are underestimated. One observes that
considering two encoding subvectors significantly reduces the number
of branches that have to be explored in the decoding tree. For example,
when $g=70$, $E\left(N_{\text{b}}\right)\simeq1.6\times10^{4}$ with
$n_{\text{v}}=2$, whereas $E\left(N_{\text{b}}\right)\simeq1.8\times10^{8}$
with $n_{\text{v}}=1$.

Figure~\ref{fig:ENBErrornv2} (right) represents $\overline{P}_{\text{e}}$
obtained from the approximate pmf of $\left(\rho_{1},\rho_{2},\rho_{3}\right)$
given by \eqref{eq:approxr1r2r3-1} and from the estimate of this
pmf obtained from experiments. Again, both evaluations match well
when $g\leqslant50$. Compared to the case $n_{\text{v}}=1,$ the
probability of decoding error reduces significantly thanks to the
reduction of the number of branches in the decoding tree at level
$n_{\text{v}}+1$.

In the case $n_{\text{v}}=2$, the arithmetic complexity of decoding
of plain NC-encoded packets is still $A_{\text{NC}}$. The arithmetic
complexities of DeRPIA-SLE and DeRPIA-LUT depend now on $\rho_{1}$,
$\rho_{2}$, and $\rho_{3}$. Their expected values are 
\begin{align*}
A_{\text{SLE}}\left(n_{\text{v}}=2\right) & \simeq A_{\text{NC}}+E\left[\rho_{1}L_{1}^{2}+\rho_{1}L_{2}\left(K_{\text{m}}\rho_{1}+2+\left(\rho_{2}+1\right)L_{2}\right)\right]\\
 & +E\left[\rho_{1}\left(\rho_{2}+1\right)L_{\text{p}}\left(K_{\text{m}}g+1+q^{\rho_{3}}\left(K_{\text{m}}\rho_{3}+1+K_{\text{c}}\right)\right)\right]
\end{align*}
and 
\begin{align*}
A_{\text{LUT}}\left(n_{\text{v}}=2\right) & \simeq A_{\text{NC}}+\sum_{\ell=1}^{2}E\left[\rho_{\ell}+L_{\ell}\frac{\rho_{\ell}\left(\rho_{\ell}+1\right)}{2}+\rho_{\ell}\left(L_{\ell}+1+\rho_{\ell}L_{\ell}\right)\right]\\
 & +E\left[L_{1}\left(1+\rho_{1}\right)+\rho_{1}L_{2}\left(K_{m}\rho_{1}+2+\rho_{2}\right)\right]\\
 & +E\left[\rho_{1}\left(\rho_{2}+1\right)L_{\text{p}}\left(K_{\text{m}}g+1+q^{\rho_{3}}\left(K_{\text{m}}\rho_{3}+1+K_{\text{c}}\right)\right)\right]
\end{align*}
where the expectations are evaluated using \eqref{eq:approxr1r2r3-1}.

\begin{figure}
\centering \includegraphics[width=0.45\columnwidth]{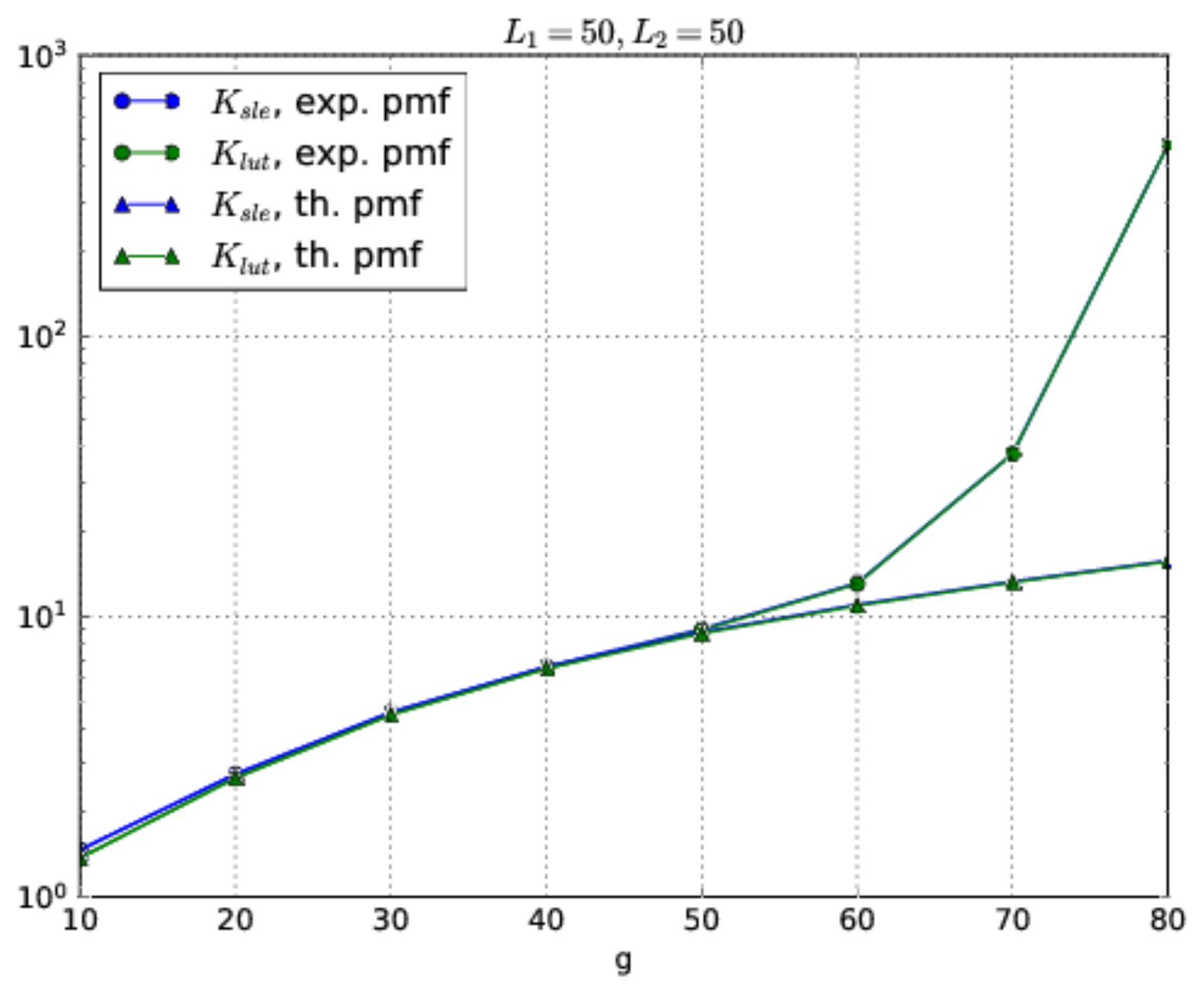}
\caption{\label{fig:ComplexityCompnv1-1}Case $n_{\text{v}}=2$, evolution
of the theoretical and experimental values of the ratio of the expected
decoding complexity of DeRPIA-SLE and DeRPIA-LUT with respect to the
complexity of a simple RREF transformation as a function of $g$ when
$L_{1}=L_{2}=50$.}
\end{figure}
Figure~\ref{fig:ComplexityCompnv1-1} compares the relative arithmetic
complexity of the two variants of DeRPIA with respect to $A_{\text{NC}}$.
One sees that $A_{\text{SLE}}$ and $A_{\text{LUT}}$ are almost equal
and less than ten times $A_{\text{NC}}$ when $g\leqslant50$. When
$g\geqslant60$, the complexity increases exponentially. This is again
not well predicted by the theoretical approximation, due to the degraded
accuracy of the theoretical expression of $\rho_{3}$ when $g\geqslant60$.
$A_{\text{LUT}}$ is again only slightly less than $A_{\text{SLE}}$
for small values of $g.$ Now, the exponential term dominates in the
complexity for values of $g$ larger than $70$.

\subsection{Case $n_{\text{v}}>2$}

In this case, since even an approximate expression of the joint pmf
of $\left(\rho_{1},\dots,\rho_{n_{\text{v}}+1}\right)$ is difficult
to obtain, only the pmf of $\rho_{1}$ and the histograms for $\rho_{2},\dots,\rho_{n_{\text{v}}+1}$
are provided, see Figure~\eqref{fig:L3Rho} for $n_{\text{v}}=3$
and Figure~\eqref{fig:L4Rho} for $n_{\text{v}}=4$.

\begin{figure}
\begin{centering}
\includegraphics[width=1\columnwidth]{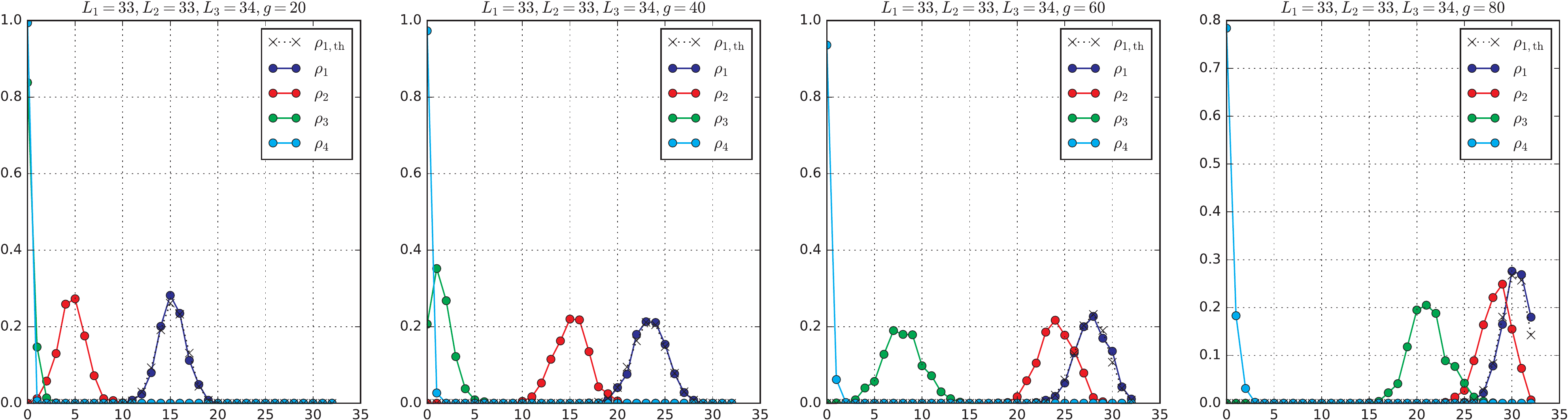} 
\par\end{centering}
\protect\caption{Case $n_{\text{v}}=3$, distribution of $\rho_{1}$ and histograms
of $\rho_{1}$, $\rho_{2}$, $\rho_{3}$, and $\rho_{4}$ for different
values of $g$ for $L_{1}=33$, $L_{2}=33$, $L_{3}=34$. \label{fig:L3Rho}}
\end{figure}
\begin{figure}
\begin{centering}
\includegraphics[width=1\columnwidth]{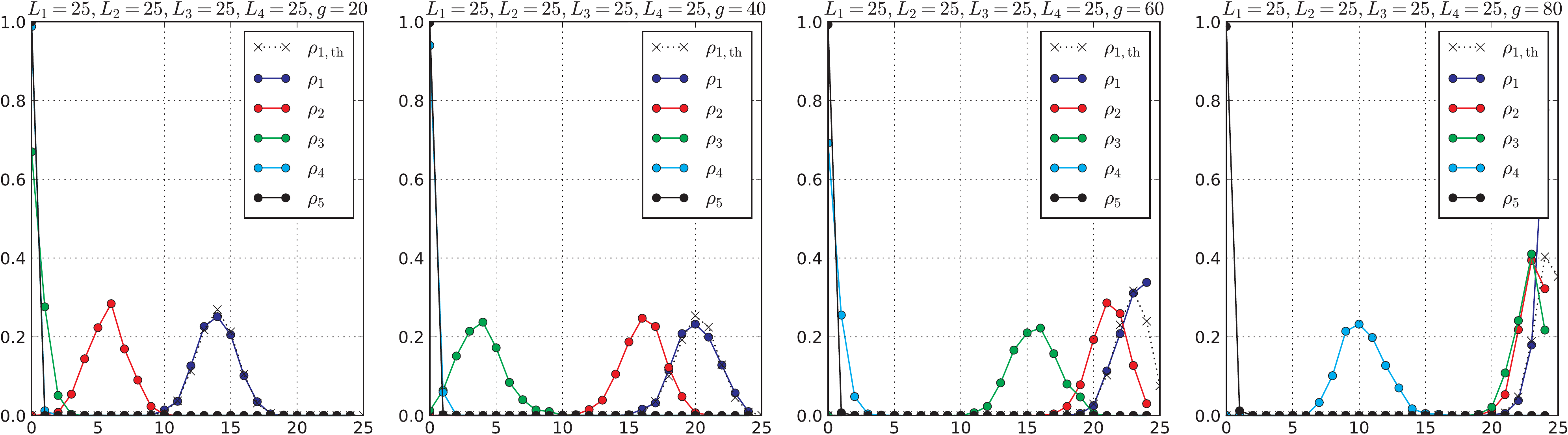} 
\par\end{centering}
\protect\caption{Case $n_{\text{v}}=4$, distribution of $\rho_{1}$ and histograms
of $\rho_{1}\dots\rho_{5}$ for different values of $g$ for $L_{1}=\dots=L_{4}=25$.
\label{fig:L4Rho}}
\end{figure}
One observes that when $n_{\text{v}}=4$, even for $g=80$, $P\left(\rho_{5}\geqslant1\right)\leqslant0.012$.
The contribution of the exponential term in the decoding complexity
will thus remain negligible. This can be observed in Figure~\ref{fig:ComplexComp},
which shows the evolution of the ratio of the decoding complexity
of DeRPIA-SLE and DeRPIA-LUT with respect to the complexity of a simple
RREF transformation as a function of $g$ for different values of
$n_{\text{v}}$. For values of $g$ for which $\rho_{n_{\text{v}}+1}$
remains small, the complexity increases with $n_{\text{v}}$, due
to the increasing number of branches that have to be considered at
intermediate levels of the decoding tree. When $\rho_{n_{\text{v}}+1}$
increases, the complexity is dominated by the exponential term in
the complexity due to the number of branches to consider at level
$n_{\text{v}}+1$ in the decoding tree. Considering a larger value
of $n_{\text{v}}$ becomes then interesting from a complexity point
of view. This phenomenon appears when $g\geqslant30$ for $n_{\text{v}}=1$,
when $g\geqslant80$ for $n_{\text{v}}=2$ and does not appear for
larger values of $n_{\text{v}}$.

The proposed NeCoRPIA scheme is thus able to perform NC without coordination
between agents. With $n_{\text{v}}=2$, compared to classical NC,
generations of 60~packets may be considered with a header overhead
of 66~\%, a vanishing decoding error probability, and a decoding
complexity about 10~times that of Gaussian elimination. With $n_{\text{v}}=3$,
generations of 80~packets may be considered, leading to a header
overhead of 25~\%, but a decoding complexity about 100~times that
of Gaussian elimination.

\begin{figure}
\begin{centering}
\includegraphics[width=0.45\columnwidth]{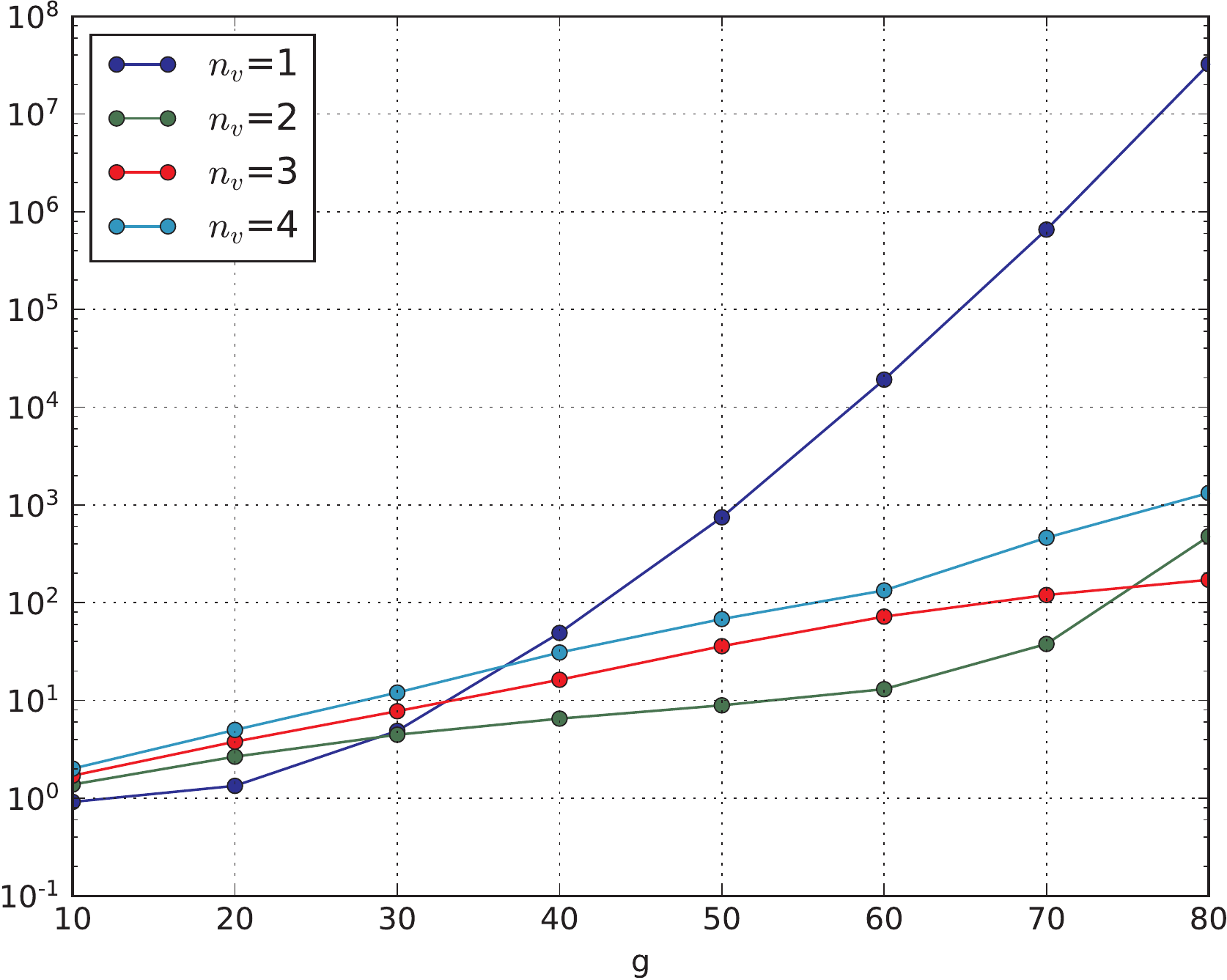} 
\par\end{centering}
\protect\caption{Evolution of the experimental values of the ratio of the decoding
complexity of DeRPIA-SLE and DeRPIA-LUT with respect to the complexity
of a simple RREF transformation as a function of $g$ for different
values of $n_{\text{v}}.$ \label{fig:ComplexComp}}
\end{figure}

\section{Comparison of packet header overhead}

\label{sec:Comparison-of-packet}

This section aims at comparing the NC header overhead when considering
the NC headers of NeCorPIA and a variant of COPE, the variable-length
headers proposed in \cite{Katti+2008}.

For that purpose, one considers a simulation framework with $N+1$
nodes randomly spread over a square of unit area. Nodes are able to
communicate if they are at a distance of less than $r$. One has adjusted
$r$ in such a way that the diameter of the graph associated to the
network is $10$ and the minimum connectivity degree is larger than
$2$. Without loss of generality, one takes Node~$N+1$ is the sink
and one selects $g$ randomly chosen nodes among the $N$ remaining
nodes are source nodes. During one simulation corresponding to a single
STS, each source node generates a single packet. The time in a STS
is further slotted and packets are transmitted at the beginning of
each time slot. The packet forwarding strategy described in \cite{Fragouli+2008}
is implemented with constant forwarding factor $d=1.5$, according
to Algorithm~4 in \cite{Fragouli+2008}. The forwarding factor determines
the average number of linear combinations of already received packets
a node has to broadcast in the next slot, each time it receives an
innovative packet in the current slot. A packet reaching the sink
is no more forwarded. The simulation is ended once the sink is able
to decode the $g$ source packets of the considered STS. In practice,
the source node only controls the duration of a STS and does not know
precisely $g$ during the considered STS. This duration may be adapted
by the sink to have a prescribed number of active source nodes during
an STS. Nevertheless, such algorithm goes beyond the scope of this
paper and to simplify, $g$ is assumed to be known.

\begin{figure}
\begin{centering}
\includegraphics[width=0.45\columnwidth]{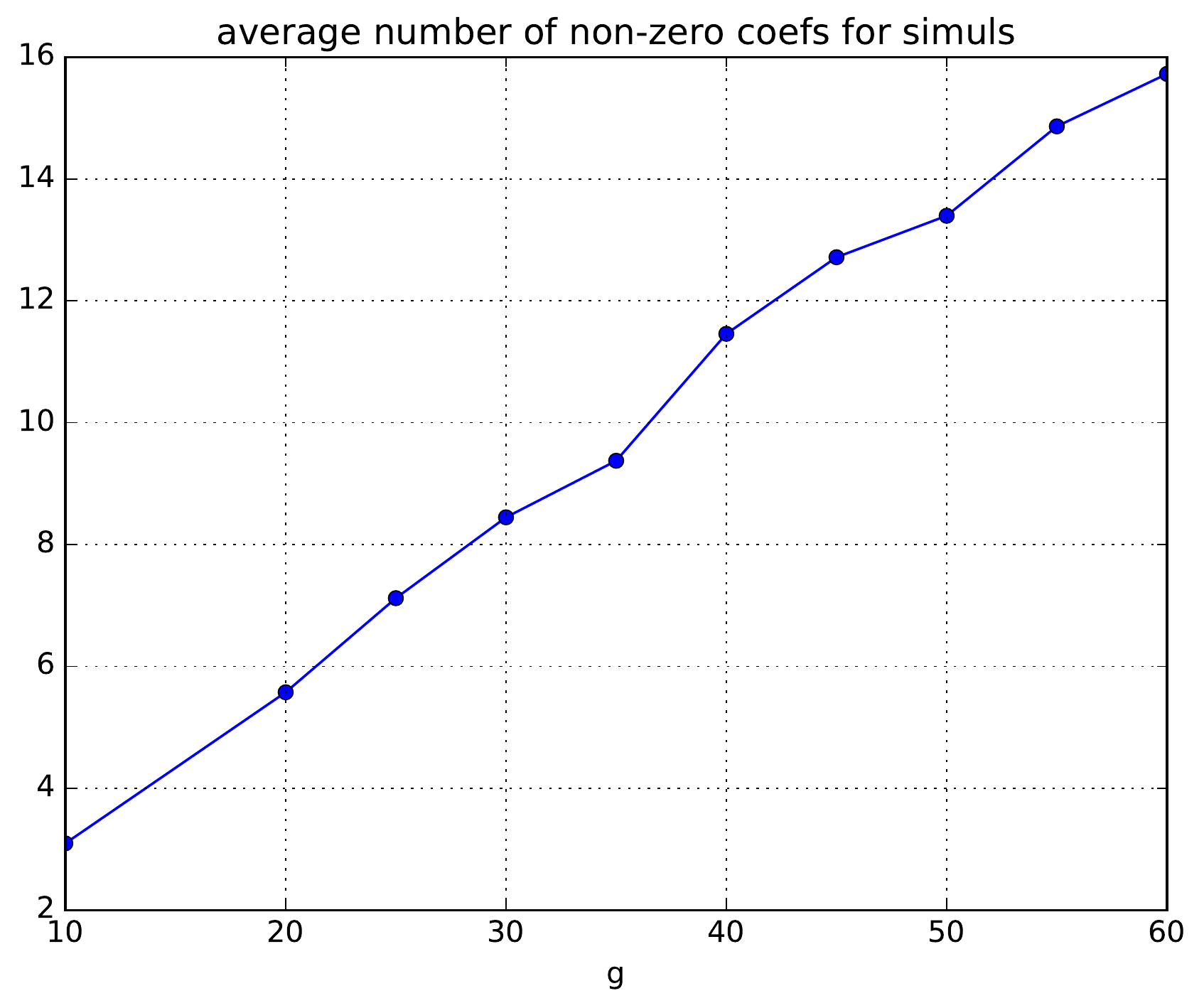} 
\par\end{centering}
\protect\caption{Average number of non-zero coefficients in plain NC headers as a function
of $g$\label{fig:AvgNonZero}}
\end{figure}
For the COPE-inspired variable-length NC protocol (called COPE in
what follows), instead of considering a fixed-length 32-bit packet
identifier as in \cite{Katti+2008}, one assumes that the maximum
number of active nodes in a STS is known. Then the length of the identifier
is adjusted so as to have a probability of collision of two different
packets below some specified threshold $p_{\text{c}}$. The packet
identifier may be a hash of the packet header and content, as in \cite{Katti+2008}.
Here, to simplify evaluations, it is considered as random, with a
uniform distribution over the range of possible identifier values.

For NeCoRPIA, NC headers of $n_{\text{v}}$ blocks of length $L_{1}=\dots=L_{n_{\text{v}}}$
are considered. To limit the decoding complexity, the length of each
block is adjusted in such a way that, in average, the estimate of
the upper bound of the number of branches in the last level of the
decoding tree \eqref{eq:Nb} is less than $10^{3}$. This average
is evaluated combining \eqref{eq:Nb} with \eqref{eq:Prho2Nv1} when
$n_{\text{v}}=1$ and \eqref{eq:Nb} with \eqref{eq:approxr1r2r3}
when $n_{\text{v}}=2$. Then, the size of the hash is adjusted in
such a way that, in case of collision, the probability of being unable
to recover the original packets \eqref{eq:approxPe} is less than
$p_{\text{c}}$.

For plain NC protocol, assuming that Node~$i$ uses $\mathbf{e}_{i}\in\mathbb{F}_{2}^{N}$,
$i=1,\dots,N$, as NC header as in \cite{Chou+2003+A3C}, there is
no collision and the header is of length $N$ elements of $\mathbb{F}_{2}$.
A distribution of the number of non-zero entries in the NC headers
of transmitted packets is estimated, averaging $10$ network realizations.
Figure~\ref{fig:AvgNonZero} describes the average amount of non-zero
coefficients in packets brodcast by the nodes of the network. One
observes that this number increases almost linearly with $g$. The
number of transmitted packets at simulation termination does not depend
on the way headers are represented. The estimated distribution is
used to evaluate the average COPE-inspired NC header length.

\begin{figure}
\begin{centering}
\includegraphics[width=0.45\columnwidth]{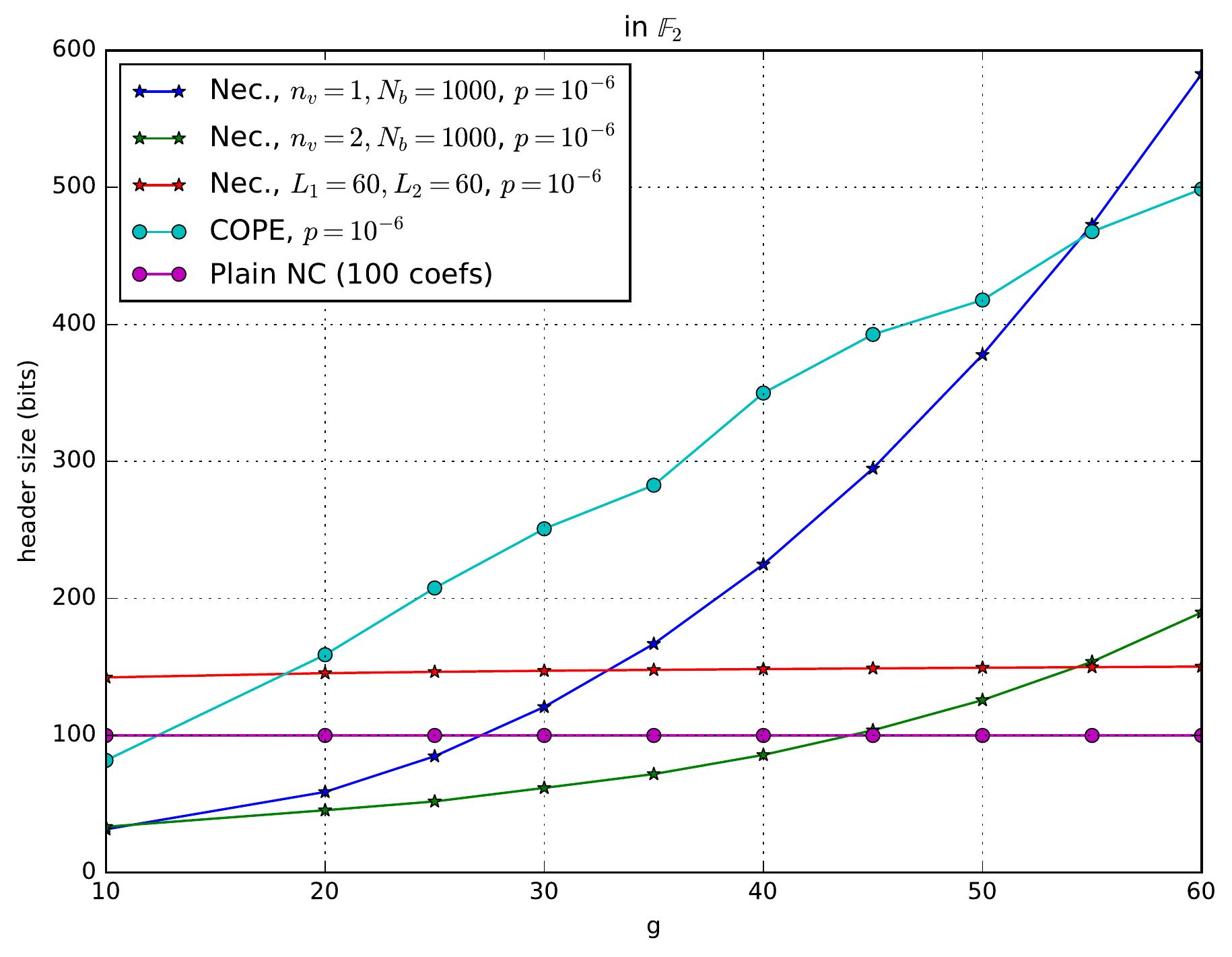}
\includegraphics[width=0.45\columnwidth]{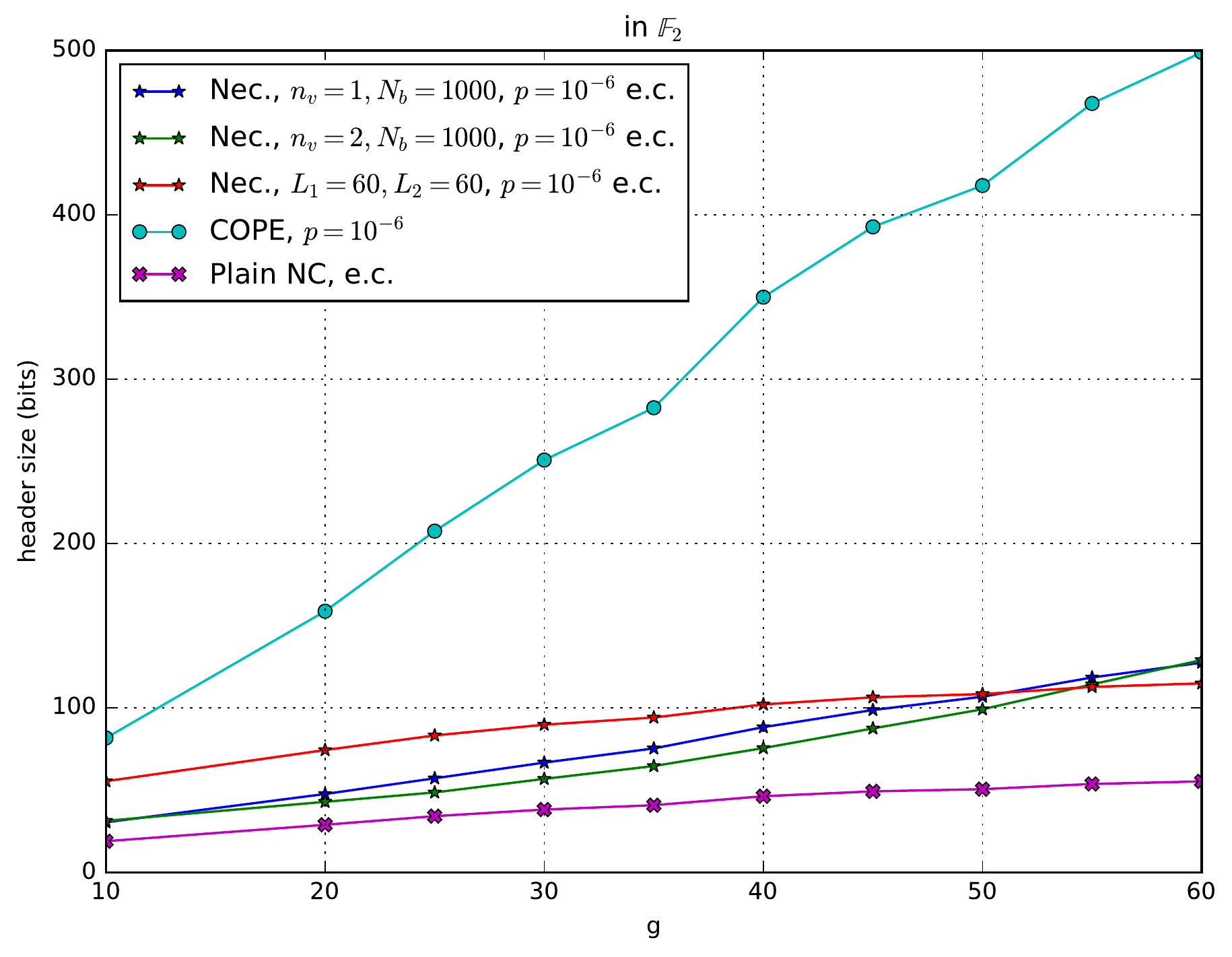}
\par\end{centering}
\protect\caption{Evolution of the header length as a function of the number of active
nodes $g$ in a STS for the plain NC protocol, for COPE, and for NeCoRPIA
with $n_{\text{v}}=1$ and $n_{\text{v}}=2$ and subvectors of variable
lengths ensuring that the number of decoding hypotheses is less than
$N_{b}=1000$, and with $n_{\text{v}}=2$ and fixed subvector lengths
$L_{1}=L_{2}=60$; the decoding error probability is imposed to be
less than $p_{\text{c}}=10^{-6}$, without entropy coding (left) and
with entropy coding of the NC headers (right)\label{fig:HeaderLength}}
\end{figure}
Figure~\ref{fig:HeaderLength} illustrates the evolution of the size
of the NC headers as a function of $g$ for plain NC, COPE, and NeCoRPIA
headers with $n_{\text{v}}=1$ and $n_{\text{v}}=2$. For the two
last approaches, $p_{\text{c}}$ is taken as $10^{-6}$. For NeCoRPIA
headers, either a fixed-size header with $L_{1}=L_{2}=60$ is considered
or a variable size header that ensures that the number of decoding
hypotheses is less than $N_{b}=1000$, limiting thus the decoding
complexity. The header overhead necessary to identify the STS is omitted,
since it is the same in all cases. In Figure~\ref{fig:HeaderLength}
(left), the NC headers are not compressed, wheras in Figure~\ref{fig:HeaderLength}
(right), the plain NC and NeCoRPIA headers are assumed to be entropy
coded and their coded length is provided.

In absence of entropy coding of the NC header, its size is constant
with plain NC. It increases almost linearly with $g$ for COPE and
is larger than the header of plain NC as soon as $g\geqslant12$ and
than the NeCoRPIA header in most of the cases. Considering $n_{\text{v}}=2$
and an adaptive header length provides the best results. When $g=40$,
the header length of NeCoRPIA is only one fifth of that of COPE. The
header length is less than that with plain NC as long as $g<42.$
Thus, without prior agreement of the packets generated in a STS, with
NeCoRPIA, one is able to get header length smaller than those obtained
with plain NC which requires this agreement phase.

When the NC headers are entropy-coded, the average length of entropy-coded
NC headers with plain NC increases almost linearly with $g$. When
$n_{\text{v}}=1$, a significant reduction of the header length is
obtained by entropy coding, since the header has to be very large
to avoid collisions and contains only few non-zero coefficients, see
Figure~\ref{fig:AvgNonZero}. When $n_{\text{v}}=2$, entropy coding
reduces slightly the average header length, which remains less than
that obtained with $n_{\text{v}}=1$. In average, the compressed header
length with NeCoRPIA is only twice that obtained with plain NC.

\section{Conclusions\label{sec:conclusion}}

This paper presents NeCoRPIA, a NC algorithm with random packet index
assignment. This technique is well-suited to data collection using
MCS, as it does not require any prior agreement on the NC vectors,
which are chosen randomly. As a consequence, different packets may
share the same coding vector, leading to a collision, and to the impossibility
to perform decoding with standard Gaussian elimination.  Collisions
are more frequent when the size of the generation increases. A branch-and-prune
approach is adapted to decode in presence of collisions. This approach
is efficient in presence of a low number of collisions. To reduce
the number of collisions, we propose to split the  NC vector into
subvectors. Each packet header consists then of two or more NC subvectors.

A detailed analysis of the decoding complexity and of the probability
of decoding error shows the potential of this approach: when a NC
of $L=100$ elements in $\mathbb{F}_{2}$ is split into two NC subvectors,
generations of about 60 packets may be considered with a decoding
complexity that is about $10$ times that of plain network decoding.
When the NC vector is split into 4 subvectors, generations of about
80 packets may be considered.

A comparison of the average header length required to get a given
probability of network decoding error is provided for a COPE-inspired
variable-length NC protocol, for several variants of NeCoRPIA, and
for plain NC on random network topologies illustrating the data collection
using a network of sensors to a sink. The effect of entropy coding
on the header length has also been analyzed. Without and with entropy
coding, NeCoRPIA provides header sizes which are significantly smaller
than those obtained with COPE. Compared to plain NC, in absence of
entropy coding, headers are smaller with NeCoRPIA when the number
of active nodes remains moderate.

Future research will be devoted to the development of a data collection
protocol based on NeCoRPIA, and more specifically on the adaptation
of the STS as a function of the network activity.


\appendix

\subsection{Arithmetic complexity of DeRPIA-SLE}

\label{sub:ComplexityDERPIASLE}

\subsubsection{Arithmetic complexity for intermediate branches}

Consider Level $\ell=1$. For each of the $L_{1}$ canonical vectors
$\mathbf{e}_{j_{1}}\in\mathbb{F}_{q}^{L_{1}}$, finding $\mathbf{w}_{1}\in\mathbb{F}_{q}^{\rho_{1}}$
satisfying \eqref{eq:ConstrlPivot-1} takes at most $\rho_{1}L_{1}$
operations, since, according to Theorem~\ref{thm:OneElementOnly},
one searches for a row of $\mathbf{B}_{11}\in\mathbb{F}_{q}^{\rho_{1}\times L_{1}}$
equal to $\mathbf{e}_{j_{1}}$. The arithmetic complexity to get all
branches at Level $1$ is thus upper-bounded by 
\begin{equation}
K\left(1\right)=L_{1}\rho_{1}L_{1}.\label{eq:Kl-1}
\end{equation}

Consider Level $\ell$ with $1<\ell\leqslant n_{\text{v}}$. For each
branch associated to a candidate decoding vector $\left(\mathbf{w}_{1},\dots,\mathbf{w}_{\ell-1}\right),$
one first evaluates $-\left(\mathbf{w}_{1}\mathbf{B}_{1,\ell}+\dots+\mathbf{w}_{\ell-1}\mathbf{B}_{\ell-1,\ell}\right)$
which takes $K_{\text{m}}(\rho_{1}L_{\ell}+...+\rho_{\ell-1}L_{\ell})+\left(\ell-1\right)L_{\ell}$
operations, the last term $\left(\ell-1\right)L_{\ell}$ accouting
for the additions of the vector-matrix products and the final sign
change.

Then, for each of the $L_{\ell}$ canonical vectors $\mathbf{e}_{j_{\ell}}\in\mathbb{F}_{q}^{L_{\ell}}$,
evaluating $\mathbf{e}_{j_{\ell}}-\left(\mathbf{w}_{1}\mathbf{B}_{1,\ell}+\dots+\mathbf{w}_{\ell-1}\mathbf{B}_{\ell-1,\ell}\right)$
needs a single addition, and finding $\mathbf{w_{\ell}}$ satisfying
\eqref{eq:ConstrlPivot-1} takes at most $\rho_{\ell}L_{\ell}$ operations,
since one searches for a row of $\mathbf{B}_{\ell\ell}$ equal to
$\mathbf{e}_{j_{\ell}}-\left(\mathbf{w}_{1}\mathbf{B}_{1,\ell}+\dots+\mathbf{w}_{\ell-1}\mathbf{B}_{\ell-1,\ell}\right)$.
Additionally, one has to verify whether $\mathbf{w_{\ell}}=\mathbf{0}$
is satisfying, which requires $L_{\ell}$ operations.

At Level $\ell$, the arithmetic complexity, for a given candidate
$\left(\mathbf{w}_{1},\dots,\mathbf{w}_{\ell-1}\right)$, to find
all candidates $\left(\mathbf{w}_{1},\dots,\mathbf{w}_{\ell}\right)$
satisfying \eqref{eq:Constrl1} is thus upper-bounded by 
\begin{align}
K\left(\ell\right) & =K_{\text{m}}\left(\rho_{1}L_{\ell}+...+\rho_{\ell-1}L_{\ell}\right)+\left(\ell-1\right)L_{\ell}+L_{\ell}\left(1+\rho_{\ell}L_{\ell}+L_{\ell}\right)\label{eq:Kl}\\
 & =K_{\text{m}}L_{\ell}\left(\rho_{1}+...+\rho_{\ell-1}\right)+L_{\ell}\left(\ell+\left(\rho_{\ell}+1\right)L_{\ell}\right)
\end{align}

\subsubsection{Arithmetic complexity for terminal branches}

Consider now Level $n_{\text{v}}+1$. For each candidate $\left(\mathbf{w}_{1},\dots,\mathbf{w}_{n_{\text{v}}}\right)$,
one has first to compute $\mathbf{w}_{1}\mathbf{C}_{1}+...+\mathbf{w}_{n_{\text{v}}}\mathbf{C}_{n_{\text{v}}}$,
which requires $K_{\text{m}}\left(\rho_{1}L_{\text{p}}+\rho_{2}L_{\text{p}}+\ldots+\rho_{n_{\text{v}}}L_{\text{p}}\right)+\left(n_{\text{v}}-1\right)L_{\text{p}}\leqslant K_{\text{m}}gL_{\text{p}}+\left(n_{\text{v}}-1\right)L_{\text{p}}$
operations. Then for each candidate $\mathbf{w}_{n_{\text{v}}+1}\in\mathbb{F}_{q}^{\rho_{n_{\text{v}}+1}}$,
the evaluation of $\mathbf{w}_{1}\mathbf{C}_{1}+...+\mathbf{w}_{n_{\text{v}}}\mathbf{C}_{n_{\text{v}}}+\mathbf{w}_{n_{\text{v}}+1}\mathbf{C}_{n_{\text{v}}+1}$
and the checksum verification cost $K_{\text{m}}\rho_{n_{\text{v}}+1}L_{\text{p}}+L_{\text{p}}+K_{\text{c}}L_{\text{p}}$
operations. The arithmetic complexity, for a given candidate $\left(\mathbf{w}_{1},\dots,\mathbf{w}_{n_{\text{v}}}\right)$,
to find all solutions $\left(\mathbf{w}_{1},\dots,\mathbf{w}_{n_{\text{v}}+1}\right)$
is thus upper-bounded by 
\begin{equation}
K\left(n_{\text{v}}+1\right)=K_{\text{m}}gL_{\text{p}}+\left(n_{\text{v}}-1\right)L_{\text{p}}+q^{\rho_{n_{\text{v}}+1}}\left(K_{\text{m}}\rho_{n_{\text{v}}+1}L_{\text{p}}+L_{\text{p}}+K_{\text{c}}L_{\text{p}}\right).\label{eq:KL1}
\end{equation}

\subsubsection{Total arithmetic complexity of DeRPIA-SLE}

To upper bound the arithmetic complexity of the tree traversal algorithm
when the number of encoding vectors is $n_{\text{v}}$, one combines
\eqref{eq:Nnl}, \eqref{eq:Kl}, and \eqref{eq:KL1} to get \eqref{eq:KSLE}
with $N_{\text{b}}\left(0\right)=1$. In the case $n_{\text{v}}=1$,
\eqref{eq:Kt1} follows directly from \eqref{eq:KSLE}.

\subsection{Arithmetic complexity of DeRPIA-LUT}

\label{sec:ArithCompDERPIALUT}

One first evaluates the complexity of the look-up table construction
with Algorithm~2a. The look-up table is built once, after the RREF
evaluation. The worst-case complexity is evaluated, assuming that
for each row vector $\mathbf{u}$ of $\mathbf{B}_{\ell\ell}$, the
resulting vector $\overline{\mathbf{u}}$ is added to $\Pi_{\ell}$.

For each of the $\rho_{\ell}$ lines $\mathbf{u}$ of $\mathbf{B}_{\ell\ell}$,
the identification of the index of its pivot column takes at most
$L_{\ell}$ operations. The evaluation of $\overline{\mathbf{u}}$
takes one operation. Then determining whether $\overline{\mathbf{u}}\in\Pi_{\ell}$
takes no operation for the first vector ($\Pi_{\ell}$ is empty),
at most $L_{\ell}$ operations for the second vector, at most $2L_{\ell}$
operations for the third vector, and at most $\left(\rho_{\ell}-1\right)L_{\ell}$
operations for the last vector. The number of operations required
in this step is upper bounded by 
\begin{align}
K_{\text{LU},\text{1}}\left(\ell\right) & =\rho_{\ell}\left(L_{\ell}+1\right)+0+L_{\ell}+2L_{\ell}+\dots+\left(\rho_{\ell}-1\right)L_{\ell}\nonumber \\
 & =\rho_{\ell}+L_{\ell}\frac{\rho_{\ell}\left(\rho_{\ell}+1\right)}{2}.\label{eq:KLU1}
\end{align}

Then the canonical vectors $\mathbf{e}_{i}\in\mathbb{F}_{q}^{\rho_{\ell}},i=1,\dots,\rho_{\ell}$
have to be partitionned into the various $\mathcal{S_{\ell}}\left(\mathbf{v}\right)$,
$\mathbf{v}\in\Pi_{\ell}$. This is done by considering again each
of the $\rho_{\ell}$ lines $\mathbf{u}$ of $\mathbf{B}_{\ell\ell}$,
evaluating $\mathbf{u}-\mathbf{e}_{\gamma\left(\mathbf{u}\right)}$,
which needs up to $L_{\ell}+1$ operations. Then determining the vectors
$\mathbf{v}\in\Pi_{\ell}$ such that $\mathbf{u}-\mathbf{e}_{\gamma\left(\mathbf{u}\right)}=\mathbf{v}$
requires at most $\rho_{\ell}L_{\ell}$ operations, since $\Pi_{\ell}$
contains at most $\rho_{\ell}$ vectors of $L_{\ell}$ elements. The
number of operations required in this partitionning is upper bounded
by 
\begin{equation}
K_{\text{LU},2}\left(\ell\right)=\rho_{\ell}\left(L_{\ell}+1+\rho_{\ell}L_{\ell}\right).\label{eq:KLU2}
\end{equation}

Considering a satisfying $\left(\mathbf{w}_{1},\dots,\mathbf{w}_{\ell-1}\right)$
at Level~$\ell-1$, the search complexity to find some satisfying
$\left(\mathbf{w}_{1},\dots,\mathbf{w}_{\ell-1},\mathbf{w}_{\ell}\right)$
at Level~$\ell$ in Algorithm~2b is now evaluated. The evaluation
of $\mathbf{v}=-\left(\mathbf{w}_{1}\mathbf{B}_{1,\ell}+\dots+\mathbf{w}_{\ell-1}\mathbf{B}_{\ell-1,\ell}\right)$
takes $K_{\text{m}}(\rho_{1}L_{\ell}+...+\rho_{\ell-1}L_{\ell})+\left(\ell-1\right)L_{\ell}$
operations. Then, determining whether $\mathbf{w}_{\ell}=\mathbf{0}$
satisfies \eqref{eq:Constrl1}, \emph{i.e.}, whether $\mathbf{v}+\mathbf{e}_{j_{\ell}}=\mathbf{0}$
for some canonical vector $\mathbf{e}_{j_{\ell}}\in\mathbb{F}_{q}^{L_{\ell}}$,
can be made checking whether $\mathbf{v}$ contains a single non-zero
entry in $L_{\ell}$ operations. Finally, the look-up of\textbf{ }$\mathbf{v}\in\Pi_{\ell}$
takes at most $\rho_{\ell}L_{\ell}$ operations. In summary, the number
of operations required for this part of the algorithm is 
\begin{align}
K_{\text{LU},3}\left(\ell\right) & =K_{m}(\rho_{1}L_{\ell}+...+\rho_{\ell-1}L_{\ell})+\left(\ell-1\right)L_{\ell}+L_{\ell}+\rho_{\ell}L_{\ell}\nonumber \\
 & =K_{m}L_{\ell}(\rho_{1}+...+\rho_{\ell-1})+L_{\ell}\left(\ell+\rho_{\ell}\right).\label{eq:KLU3}
\end{align}
Compared to the expression of $K\left(\ell\right)$ given by \eqref{eq:Kl},
which is quadratic in $L_{\ell}$, $K_{\text{LU},3}\left(\ell\right)$
is linear in $L_{\ell}$.

The complexity of DeRPIA-LUT, given by \eqref{eq:KLUT}, when the
number of encoding vectors is $n_{\text{v}}$, is then obtained combining
the results of Corollary~\ref{cor:Complexity} with \eqref{eq:KLU1},
\eqref{eq:KLU3}, \eqref{eq:KLU3}, and \eqref{eq:KL1}, since Algorithm~2b
is not used at Level~$n_{\text{v}}+1$. 
\end{document}